\newcounter{mynotes}
\newcommand{\mnote}[1]{\addtocounter{mynotes}{1}{{}}%
\todo[color=blue!20!white]{[\arabic{mynotes}] \scriptsize  {\small {\sf {#1}}}}}
\renewcommand{\cref}{\Cref}
\declaretheorem[within=section]{theorem}
\declaretheorem[sibling=theorem]{corollary}
\declaretheorem[sibling=theorem]{lemma}
\declaretheorem[sibling=theorem]{definition}
\declaretheorem[sibling=theorem]{remark}
\declaretheorem[sibling=theorem]{proposition}
\newcounter{termcounter}
\renewcommand{\thetermcounter}{\Alph{termcounter}}
\crefname{term}{term}{terms}
\def\term{\@ifnextchar[\term@optarg\term@noarg}
\def\term@optarg[#1]#2{%
  \textup{(#1)}%
  \def\@currentlabel{#1}%
  \def\cref@currentlabel{[][2147483647][]#1}%
  \cref@label[term]{#2}}
\def\term@noarg#1{%
  \refstepcounter{termcounter}%
  \textup{(\thetermcounter)}%
  \cref@label[term]{#1}}
\newcommand{\ignore}[1]{}
\newcommand{\poly}{\mathrm{poly}}
\renewcommand{\vec}[1]{{\bf{#1}}}
\definecolor{DSred}{rgb}{1,0,0}
\renewcommand{\leq}{\leqslant}
\renewcommand{\geq}{\geqslant}
\renewcommand{\ge}{\geqslant}
\renewcommand{\le}{\leqslant}
\renewcommand{\epsilon}{\varepsilon}
\newcommand{\eps}{\epsilon}
\newcommand{\R}{\mathbb{R}}
\newcommand{\Z}{\mathbb{Z}}
\newcommand{\F}{\mathbb{F}}
\newcommand{\Esymb}{{\bf E}}
\newcommand{\Psymb}{{\bf Pr}}
\DeclareMathOperator*{\E}{\Esymb}
\DeclareMathOperator*{\ProbOp}{\Psymb}
\renewcommand{\Pr}{\ProbOp}
\newcommand{\ComplexityFont}[1]{\ensuremath{\mathsf{#1}}}
\newcommand{\NP}{\ComplexityFont{NP}}
\theoremstyle{definition}
\DeclareMathOperator*{\argmin}{\arg\,min}
\newcommand{\Sp}{\mathsf{Sp}}
\title{Testing Sparsity over Known and Unknown Bases}
\author{Siddharth Barman\thanks{Department of Computer Science and Automation, Indian Institute of Science, Bangalore, India. Emails: {\tt \{barman, arnabb, suprovat\}@iisc.ac.in}.}
	\and Arnab Bhattacharyya$^*$
	\and Suprovat Ghoshal$^*$ 
}
\date{}	
\begin{document}

\maketitle
	
	\begin{abstract}
		
 Sparsity is a basic property of real vectors that is exploited in a wide variety of applications. In this work, we describe property testing algorithms for sparsity that observe a low-dimensional projection of the input. 

We consider two settings. In the first setting, for a given design matrix $\vec{A} \in \R^{d \times m}$, we test whether an input vector ${\bf y} \in \R^d$ equals ${\bf Ax}$ for some $k$-sparse unit vector ${\bf x}$. Our algorithm projects the input onto $O(k\eps^{-2}\log m)$ dimensions, accepts if the property holds, rejects if $\|{\bf y} - {\bf Ax}\|> \eps$ for any $O(k/\eps^2)$-sparse vector ${\bf x}$, and runs in time nearly polynomial in $m$. Our algorithm is based on the approximate Carath\'eodory's Theorem. Previously known algorithms that solve the problem for arbitrary $\vec{A}$ with qualitatively similar guarantees run in exponential time.

In the second setting, the design matrix ${\bf A}$ is unknown. Given input vectors ${\bf y}_1, \dots, {\bf y}_p \in \R^d$ whose concatenation as columns forms ${\bf Y} \in \R^{d \times p}$, the goal is to decide whether ${\bf Y} = {\bf A}{\bf X}$ for matrices ${\bf A} \in \R^{d \times m}$ and ${\bf X} \in \R^{m \times p}$ such that each column of ${\bf X}$ is $k$-sparse, or whether ${\bf Y}$ is ``far'' from having such a decomposition. We give such a testing algorithm which projects the input vectors to $O((\log p)/\eps^2)$ dimensions and assumes that the unknown ${\bf A}$ satisfies {\em $k$-restricted isometry}. Our analysis gives a new robust characterization of {\em gaussian width} in terms of sparsity.

\ignore{
The {\em dictionary learning} (or {\em sparse coding}) problem plays an important role in signal processing, neuroscience, statistics and machine learning. Given a collection of vectors, the goal is to learn a basis with respect to which all the given vectors have a sparse representation. In this work, we study the testing analogue of this problem. 
		
		Roughly speaking, the {\em dictionary testing} problem is to decide whether a given collection of vectors can be approximately transformed into sparse vectors. More precisely, given as input vectors ${\bf y}_1, \dots,{\bf y}_p \in \R^d$, we want to distinguish between the following two cases (for some choice of parameters):
		\begin{itemize}
			\item[(i)] \textbf{YES case}:
			There exist a matrix $A \in \R^{d \times m}$ satisfying RIP and $k$-sparse unit vectors ${\bf x}_1, \dots,{\bf x}_p \newline \in \R^m$ such that ${\bf y}_i = A {\bf x}_i$ for all $i$.
			\item[(ii)]
			\textbf{NO case}: There does not exist $k'$-sparse unit vectors ${\bf x}_1, \dots, {\bf x}_p \in \R^{m}$ and $A \in \R^{d \times m}$  satisfying \newline RIP such that $(A{\bf x}_1, \dots, A{\bf x}_p)$ is ``close" to $({\bf y}_1, \dots, {\bf y}_p)$.
		\end{itemize} 
		Note that unlike known results in provable dictionary learning, there is no assumption on the distribution of ${\bf x}_1, \dots, {\bf x}_p$ in the YES case. The goal is to design an efficient algorithm that can distinguish the above two cases using {\em linear queries}, i.e., inner products of the input vectors with vectors of our choice.
		
		Our tester is very simple and efficient: it outputs \textbf{YES} iff the {\em gaussian width} of the input  is sufficiently small. The analysis of the test can be viewed as giving a robust characterization of gaussian width in terms of sparsity. The number of linear queries made by the tester is \emph{independent} of the dimension.
}
	\end{abstract}

\section{Introduction}
{\em Property testing} is the study of algorithms that query their input a small number of times and distinguish between whether their input satisfies a given property or is ``far" from satisfying that property. The quest for efficient testing algorithms was initiated by \cite{BLR} and \cite{BFL} and later explicitly formulated by \cite{RS} and \cite{GGR}. Property testing can be viewed as a relaxation of the traditional notion of a decision problem, where the relaxation is quantified in terms of a distance parameter.
There has been extensive work in this area  over the last couple of decades; see, for instance, the surveys \cite{RonSurvey08} and \cite{RubinfeldICM} for some different perspectives. 

As evident from these surveys, research in property testing has largely focused on properties of combinatorial and algebraic structures, such as bipartiteness of graphs, linearity of Boolean functions on the hypercube, membership in error-correcting codes or representability of functions as concise Boolean formulae. In this work, we study the question of testing properties of {\em continuous} structures, specifically properties of vectors and matrices over the reals.

Our computational model extends the standard property testing framework by allowing queries to be linear measurements of the input. Let $\mathcal{P} \subset \R^d$ be a property of real vectors. Let $\mathsf{dist}:\R^d \to \R^{\geq 0}$ be a ``distance" function such that $\mathsf{dist}({\bf x})=0$ for all ${\bf x} \in \mathcal{P}$. We say that an algorithm $\mathcal{A}$ is a {\em tester for $\mathcal{P}$} with respect to $\mathsf{dist}$ and with parameters $\eps, \delta>0$ if for any input ${\bf y} \in \R^n$, the algorithm $\mathcal{A}$ observes ${\bf My}$ where $\vec{M} \in \R^{q \times d}$ is a randomized matrix and has the following guarantee:
\begin{enumerate}
\item[(i)]
If ${\bf y} \in \mathcal{P}$, $\Pr_{\vec M}[\mathcal{A}({\bf My}) \text{ accepts}]\geq 1-\delta$.
\item[(ii)]
If $\mathsf{dist}({\bf y}) > \eps$, $\Pr_{\vec M}[\mathcal{A}({\bf My}) \text { accepts}]\leq \delta$.
\end{enumerate}
We call each inner product between the rows of $\vec{M}$ and $\vec{y}$ a {\em (linear) query}, and the number of rows $q=q(\eps,\delta)$ is the {\em query complexity} of the tester. The {\em running time} of the tester $\mathcal{A}$ is its running time  on the outcome of its queries. As typical in property testing, we do not count the time needed to evaluate the queries.
If $\mathcal{P} \subset \R^{d \times p}$ is a property of real matrices with an associated distance function $\mathsf{dist}: \R^{d \times p} \to \R^{\geq 0}$, testing is defined similarly: given an input matrix $\vec{Y} \in \R^{d \times p}$, the algorithm observes $\vec{MY}$ for a random matrix $\vec{M} \in \R^{q \times d}$ with analogous completeness and soundness properties. 
A linear projection of an input vector or matrix to a low-dimensional space is also called a {\em linear sketch} or a {\em linear measurement}. The technique of obtaining small linear sketches of high-dimensional vectors has been used to great effect in algorithms for streaming (e.g., \cite{AMS98, McGregor}) and numerical linear algebra (see \cite{Woo} for an excellent survey). 

We focus on testing whether a vector is {\bf sparse} with respect to some basis.\footnote{With slight abuse of notation, we use the term basis to denote the set of columns of a design matrix. The columns might not be linearly independent.} A vector ${\bf x}$ is said to be {\em $k$-sparse} if it has at most $k$ nonzero coordinates. Sparsity is a structural characteristic of signals of interest in a diverse range of applications. It is a pervasive concept throughout modern statistics and machine learning, and algorithms to solve inverse problems under sparsity constraints are among the most successful stories of the optimization community (see the book \cite{HTW}). The natural property testing question we consider is whether there exists a solution to a linear inverse problem under a sparsity constraint.  

There are two settings in which we investigate the sparsity testing problem. 
\begin{enumerate}
\item[(a)]
In the first setting, a design matrix $\vec{A} \in \R^{d \times m}$ is known explicitly, and the property to test is whether a given input vector ${\bf y} \in \R^d$ equals ${\bf Ax}$ for a $k$-sparse unit vector ${\bf x} \in \R^m$. For instance, $\vec{A}$ can be the Fourier basis or an overcomplete dictionary in an image processing application. We approach this problem in full generality, without putting any restriction on the structure of $\vec{A}$.

Informally, our main result in this setting is that for any design matrix $\vec{A}$, there exists a tester projecting the input ${\bf y}$ to $O(k \log m)$ dimensions that rejects if ${\bf y}-{\bf Ax}$ has large norm for any $O(k)$-sparse ${\bf x}$. The running time of the tester is polynomial in $m$. As we describe in Section \ref{sec:relwork}, previous work in numerical linear algebra yields a tester with the same query complexity and with qualitatively similar soundness guarantees but which requires running time {\em exponential} in $m$.

\item[(b)]
In the second setting, the design matrix $\vec{A}$ is not known in advance. For input vectors ${\bf y}_1, {\bf y}_2, \dots, {\bf y}_p \in \R^d$, the property to test is whether there exists a matrix $\vec{A} \in \R^{d \times m}$ and $k$-sparse unit vectors ${\bf x}_1, {\bf x}_2, \dots {\bf x}_p \in \R^m$ such that ${\bf y}_i = {\bf Ax}_i$ for all $i \in [p]$. Note that $m$ is specified as a parameter and could be much larger than $d$ (the {\em overcomplete} case). In this setting, we restrict the unknown $\vec{A}$ to be a {\em $(\eps, k)$-RIP} matrix which means that $(1-\eps)\|{\bf x}\| \leq \|{\bf Ax}\| \leq (1+\eps) \|{\bf x}\|$ for any $k$-sparse ${\bf x}$. This is a standard assumption made in many related works (see Section \ref{sec:relwork} for details). 

In this setting, we design an efficient tester for this property that projects the inputs to $O(\eps^{-2} \log p)$ dimensions and, informally speaking, rejects if for all $(\eps, k)$-RIP matrices $\vec{A}$, there is some ${\bf y}_i$ such that ${\bf y}_i-{\bf Ax}_i$ has large norm for all ``approximately sparse'' ${\bf x}_i$.
\end{enumerate}

In both of the above tests, the measurement matrix is a random matrix with iid gaussian entries, chosen so as to preserve norms and certain other geometric properties upon dimensionality reduction.\footnote{If evaluating the queries efficiently was an objective, one could also use sparse dimension reduction matrices \cite{DasguptaKS, KaneN,BourgainN}, but we do not pursue this direction here.} In particular, our testers are {\em oblivious} to the input. It is a very interesting open question as to whether non-oblivious testers can strengthen the above results. 

\ignore{

 In order to test, we allow the algorithm to make {\em linear measurements}, meaning that it can observe $\langle {\bf \theta}, \bf{y}\rangle$ for a vector ${\bf \theta}$ of its choice. Said differently, the algorithm observes  a linear projection $M{\bf y}$ where $M$ is a random matrix whose rows correspond to the linear measurements.

As described by \cite{RonSurvey08}, property testing can also alternatively be viewed as a relaxation of machine learning. In property testing, one wishes to quickly check whether a hypothesis about an input object is approximately true, whereas in learning, one wants to obtain an approximate representation of the input object assuming a hypothesis holds. It is easy to see that (proper) learning assuming a hypothesis \emph{implies} testing of that hypothesis.  One can first learn a function from the hypothesis class and then verify,  using a small number of additional examples, whether it is consistent with the data.

\ignore{Also, the connection to learning gives a natural motivation for property testing; the tester can be a preliminary step taken before an expensive learning algorithm is run, in order to check whether to use a particular class of functions as the hypothesis class for the learner.}

The vast majority of work in property testing has dealt with properties of {\em discrete} objects, such as bipartiteness of graphs or linearity of functions $f: \F_2^n \to \F_2$ or membership in error-correcting codes or representability of functions as concise boolean formulae. However, these properties are very unlike the hypothesis classes used in modern machine learning applications, such as in computer vision, speech recognition and feature extraction. In these settings, the input data often consists of real vectors and the accuracy parameter is in terms of certain metrics on real vector spaces. Hence, it is natural to ask if we can develop efficient testers for interesting properties of {\em continuous} objects (such as real vectors and matrices) that are either more efficient or have more provable guarantees than the counterpart learning algorithms? 

Motivated by these considerations, we examine the testing analog of the {\em dictionary learning} or {\em sparse coding} problem. The goal of the dictionary learning problem is to express a set of input vectors as linear combinations of a small number of vectors chosen from a large {\em dictionary}. Dictionary learning is a fundamental task in several domains. The problem was first formulated by \cite{OF96, OF97} who showed that the dictionary elements learnt from sparse coding of natural images are similar to the receptive fields of neurons in the visual cortex. Neuroscientists  have also extracted dictionaries for speech (\cite{LS00}) and video (\cite{Ols02}). Inspired by these results, automatically learned dictionaries have been used in machine learning for feature selection by \cite{Evgen07} and for denoising by \cite{EA06}, edge-detection by \cite{MLBHP08}, super-resolution by \cite{YWHM08}, restoration by \cite{MSE08}, and texture synthesis by \cite{Pey09} in image processing applications. Dictionary learning is also a component in some deep learning systems (\cite{RanzatoBL07}). 

Given a matrix $Y \in \R^{d \times p}$, the decomposition to be learnt by a dictionary learning algorithm is:
\begin{equation}\label{eqn:dl}
Y = AX, \qquad A \in \R^{d \times m}, X \in \R^{m \times p}
\end{equation}
where each column of $X$ has at most $k$ non-zero entries. The columns of $Y$ correspond to the input vectors, the columns of $\vec{A}$ correspond to the dictionary atoms, and the columns of $X$ are the mixing coefficients for each vector. The dictionary size $m$ may be much larger than $d$, the so-called ``overcomplete" setting.  In the usual setup, the problem is studied using a {\em generative model} in which the columns of $X$ are sampled iid from some fixed distribution on sparse vectors. 

The subject of this work is the {\em dictionary testing} problem, in which the goal is to decide whether a decomposition of the form in (\ref{eqn:dl}) exists or whether $Y$ is ``far" from any such decomposition.  Clearly, the testing problem is a relaxation of dictionary learning; the tester need not recover $\vec{A}$ and $X$. Indeed, a motivation behind our work was devising a fast tester that can quickly rule out inputs not admitting a sparse coding. In this paper, we establish provable guarantees for a dictionary testing algorithm, even when the columns of $X$ are arbitrary sparse vectors (normalized to have unit norm). Guarantees in this {\em agnostic} setting are much stronger than in the generative models which are typically used to analyze dictionary learning algorithms. 

Mathematically, a tester implies a {\em robust characterization} of the property in question. Our dictionary testing algorithm shows that an easily estimable quantity $\omega$ (namely, the \emph{gaussian width}) indicates how close the input data is to being sparsely coded. The quantity $\omega$ is small if the data is ``close" to being sparsely coded and large when it is not, a fact that may be of independent interest.

\subsection{Problem Setup}
Let $\mathcal{S}^{d-1}$ be the unit sphere in $d$ dimensions. We denote by  $\Sp^d_k \subset \mathcal{S}^{d-1}$ the set of all $k$ sparse unit vectors, i.e.,
$\Sp^d_k = \{{\bf x} : {\bf x} \in \mathcal{S}^{d-1}, \|{\bf x} \|_0 \leq k \}$. We drop the superscript $d$ on $\Sp^{d}_k$ whenever it is clear from context. 
We will devise a tester that approximately decides if the input can be sparsely coded by a dictionary satisfying the {\em restricted isometry  property (RIP)}.
\begin{definition}[$(\epsilon,k)$-Restricted Isometry Property] \label{def:RIP}
	For $\eps > 0$ and $S \subseteq \R^m$, a matrix $\vec{A} \in \R^{d \times m}$ is said to be {\em $\eps$-isometric on $S$} if:
	$(1-\eps)\|\vec{x} - \vec{y}\|^2_2 \leq \|A\vec{x}-A\vec{y}\|^2_2 \leq (1+\eps)\|\vec{x}-\vec{y}\|^2_2$
	for all $\vec{x}, \vec{y} \in S$. A matrix $\vec{A} \in \R^{d \times m}$ is said to be {\em $(\epsilon, k)$-RIP} if it is $\eps$-isometric on $\Sp^m_k$.
\end{definition}
RIP is a natural assumption to put on dictionary matrices. The seminal work of \cite{CRT06} showed that given a matrix $\vec{A}$ satisfying $(0.4, k)$-RIP, any ${\bf x}\in \Sp_k^m$ can be recovered efficiently from $\vec{A} {\bf x}$, even when the sparsity $k = \Omega(d)$.  Moreover, many natural families of redundant dictionaries, such as those constructed from wavelets and sinusoids, or from wavelets and ridgelets, satisfy the RIP criterion (\cite{DH01}). Indeed, these dictionaries are {\em incoherent}; a dictionary $\vec{A} \in \R^{d \times m}$ is $\mu$-incoherent if  the inner product between any two columns is at most $\mu/\sqrt{d}$. Incoherence implies\footnote{See Proposition \ref{prop:ripinco} for a formal statement and proof.} the RIP condition and was identified by \cite{GMS03} as underlying many overcomplete dictionaries that are used widely by domain experts. In the dictionary learning context, \cite{AgarwalAltMin14} proved that dictionaries satisfying RIP can be learnt using an alternating minimization algorithm.

A starting point to measuring closeness to sparsely representable point-sets is the notion of \emph{$\epsilon$-isometry}. Given sets $S_1,S_2 \subset \mathbbm{R}^d$, we say $S_1$ is $\epsilon$-isometric to $S_2$ if there exists a map $\Psi:S_1 \mapsto S_2$ such that for every ${\bf x},{\bf y} \in S_1$ we have $(1 - \epsilon)\|{\bf x} - {\bf y}\|^2 \le \|\Psi({\bf x}) - \Psi({\bf y})\|^2 \le (1 + \epsilon)\|{\bf x} - {\bf y}\|^2$. We also use another related distance measure that is invariant to scaling:
\ignore{
In other words, the map $\Psi$ embeds $S_1$ to $S_2$ near-isometrically. However, this by itself is not sufficient for our setting. For instance, it is easy to see that for any $Y$ which admits a $k$-sparse representation in some dictionary ${\bf A}$, any rotation $Y^\prime = R(Y)$ of $Y$ would still admit a $k$-sparse representation in the dictionary ${\bf A}{R}^{-1}$. Similarly, for any $\tilde{Y}$ constructed by rescaling the columns of $Y$ would still admit a $k$-sparse representation. In conclusion, the objective here is to define a measure of distance $d(\cdot,\cdot): \mathbbm{R}^d \times \mathbbm{R}^d \mapsto \mathbbm{R}_+$ which would be invariant to rotation and scaling i.e., given $Y_1,Y_2 \subset \mathbbm{R}^d$ and $\tilde{Y}_1,\tilde{Y}_2$ which are constructed by re-scaling and rotating the vectors of $Y_1,Y_2$, the quantity $d$ should satisfy $d(Y_1,Y_2) = d(\tilde{Y}_1,\tilde{Y}_2)$. With these considerations, we define the following measure of closeness: }
\begin{definition}\label{def:isoang}
	Given sets $S, S' \subset \R^d$ and $\eps \geq 0$, we say that $S$ is {\em $\eps$-isoangular} to $S'$ if there exists a bijection $\psi: S \to S'$ such that for all $u, v \in S$, 
	$$\left|\frac{\langle \vec{u}, \vec{v}\rangle}{\|\vec{u}\|\cdot \|\vec{v}\|}
	- \frac{\langle \psi({\bf u}), \psi({\bf v})\rangle}{\|\psi(\vec{u})\| \cdot \|\psi(\vec{v})\|}\right| \leq \eps.$$ 
\end{definition}


In words, $S$ is $\eps$-isoangular to $S'$ if the pairwise angles between vectors in $S$ are approximately equal to the pairwise angles between corresponding vectors in $S'$, upto an additive error $\eps$. Note that, for a set of vectors $S$ and matrix $\vec{A}$, where $S$ and $\vec{A}(S)$ are all (approximately) unit norm, then if $\vec{A}$ is $\eps$-isometric, then $S$ is $O(\eps)$-isoangular with $\vec{A}(S)$.
Note that Definition \ref{def:isoang} gives a bound on the additive error between angles, while the error bound in Definition \ref{def:RIP} is multiplicative.

We assume that the input vectors are accessed using {\em linear queries}. This means that for an input vector ${\bf y}\in \R^d$, the algorithm can obtain $\langle{\bf v}, {\bf y} \rangle$ where $\vec{v}$ is any vector in $\R^d$ using one linear query. These are exactly the queries allowed in the setting of compressed sensing, dimensionality reduction using random linear projections, and in many differentially private mechanisms (\cite{HT10, BDKT12, HKR12}). We  define the {\em query complexity} of a dictionary tester to be the number of linear queries made for the worst possible input.

}
\subsection{Our Results}

%
%
%

We now present our results more formally. For integer $m>0$, let $\mathcal{S}^{m-1} = \{\vec{x} \in \R^m: \|\vec{x}\|=1\}$, and let $\Sp_k^m = \{{\bf x} \in \mathcal{S}^{m-1} : \|{\bf x}\|_0 \leq k\}$.\footnote{Here, $\| {\bf x}\|_0 $ denotes the the sparsity of the vector, $\| {\bf x}\|_0 := |\{i \in [m] \mid x_i \neq 0 \}|$.}
\ignore{
Our main results are construction and analysis of efficient algorithms for the testing sparsity for the known and unknown basis settings. We formally state the result for the known basis setting, followed by the unknown basis setting.}

\begin{restatable}[Known Design Matrix]{thm}{known}  \label{thm:test_known}
	Fix $\eps,\delta \in (0,1)$ and positive integers $d, k, m$ and a matrix ${\bf A} \in \mathbbm{R}^{d \times m}$ such that $\|{\bf a}_i\| = 1$ for every $i \in [m]$. There exists a tester with query complexity $O(k\eps^{-2}\log(m/\delta))$ that behaves as follows for an input vector ${\bf y} \in \R^d$:
	\begin{itemize}
		\item \emph{\textbf{Completeness}}: If ${\bf y} = {\bf A}{\bf x}$ for some ${\bf x} \in \Sp_k^m$, then the tester accepts with probability $1$.
		
		\item \emph{\textbf{Soundness}}: If  $\|{\bf A}{\bf x} - {\bf y}\|_2 > \eps$ for every $\vec{x} : \|\vec{x}\|_0 \le K$, then the tester rejects with probability $\ge 1 - \delta$. Here, $K = O(k/\eps^2)$.
	\end{itemize}
	The running time of the tester is $\poly(m, k, 1/\eps)$.
\end{restatable}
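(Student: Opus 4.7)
The plan is to sample a Gaussian sketch $\vec{M} \in \R^{q \times d}$ with i.i.d.\ $\mathcal{N}(0,1/q)$ entries where $q = O(k\eps^{-2}\log(m/\delta))$, observe $\vec{M}\vec{y}$, and solve the convex program
\[
\mathrm{OPT} \;\seteq\; \min_{\vec{x}\in\R^m,\,\|\vec{x}\|_1 \leq \sqrt{k}}\; \|\vec{M}\vec{A}\vec{x} - \vec{M}\vec{y}\|_2,
\]
accepting iff $\mathrm{OPT} \leq \eps/2$. Since the objective is convex and the constraint is a scaled $\ell_1$-ball, this is a second-order cone program solvable in $\poly(m, 1/\eps)$ time by standard convex optimization (interior-point or Frank--Wolfe). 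Completeness is direct: if $\vec{y} = \vec{A}\vec{x}^*$ with $\vec{x}^* \in \Sp_k^m$, then Cauchy--Schwarz gives $\|\vec{x}^*\|_1 \leq \sqrt{k}\,\|\vec{x}^*\|_2 = \sqrt{k}$, so $\vec{x}^*$ is feasible with objective $0$ and the tester accepts.

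Soundness rests on two pillars. The first is approximate Carath\'eodory (Maurey's empirical bound): since each column $\vec{a}_i$ is a unit vector, every $\vec{A}\vec{x}$ with $\|\vec{x}\|_1 \leq \sqrt{k}$ lies in $\sqrt{k}\cdot\mathrm{conv}(\{\pm \vec{a}_i\}_{i=1}^m \cup \{0\})$, a convex set of $\ell_2$-radius $\sqrt{k}$; thus it can be approximated within $\ell_2$-distance $\eps/8$ by some $\vec{A}\vec{x}'$ where $\vec{x}'$ is $K$-sparse with $K = O(k/\eps^2)$ and $\|\vec{x}'\|_1 \leq \sqrt{k}$. The second pillar is a uniform Gaussian concentration bound on the translated image set $T^* \seteq \{\vec{A}\vec{x} - \vec{y} : \|\vec{x}\|_1 \leq \sqrt{k}\}$. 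Since Gaussian width is translation invariant, $w(T^*) = \sqrt{k}\cdot\E_{\vec{g}\sim\mathcal{N}(0,I_d)}\|\vec{A}^T \vec{g}\|_\infty = O(\sqrt{k\log m})$, using that each $\langle\vec{g},\vec{a}_i\rangle$ is a standard Gaussian. A Gordon/Klartag--Mendelson-style inequality then yields, for $q = O(k\eps^{-2}\log(m/\delta))$,
\[
\sup_{\vec{v} \in T^*} \bigl| \|\vec{M}\vec{v}\|_2 - \|\vec{v}\|_2 \bigr| \;\leq\; \eps/8
\qquad \text{with probability at least } 1-\delta.
\]
Combining the pillars: if the tester accepts in the NO case, some feasible $\vec{x}_0$ satisfies $\|\vec{M}(\vec{A}\vec{x}_0 - \vec{y})\|_2 \leq \eps/2$, hence by uniform concentration $\|\vec{A}\vec{x}_0 - \vec{y}\|_2 \leq 5\eps/8$, and Carath\'eodory rounds $\vec{x}_0$ to a $K$-sparse $\vec{x}_0'$ with $\|\vec{A}\vec{x}_0' - \vec{y}\|_2 \leq 5\eps/8 + \eps/8 < \eps$, contradicting the hypothesis.

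The main obstacle will be Pillar~2: attaining uniform Gaussian concentration on $T^*$ within the stated query budget of $O(k\eps^{-2}\log(m/\delta))$. Gordon-type bounds couple the sample complexity to both $w(T^*) = O(\sqrt{k\log m})$ and the Euclidean radius $\mathrm{rad}(T^*) \leq \sqrt{k} + \|\vec{y}\|_2$, and the latter is a priori uncontrolled for an adversarial input. I would handle this by prepending a cheap length check on $\|\vec{y}\|_2$ (using $O(\log(1/\delta))$ of the measurements) that rejects when $\|\vec{y}\|_2$ exceeds $O(\sqrt{k})$; after this, $\mathrm{rad}(T^*) = O(\sqrt{k})$ and the Gordon inequality closes the argument without inflating the query count.
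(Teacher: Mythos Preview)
Your tester is essentially the paper's: both sketch with a Gaussian matrix and then test (approximate) membership of $\vec{My}$ in $\sqrt{k}\cdot\mathrm{conv}(\pm \vec{Ma}_i)$, which is exactly your $\ell_1$-constrained program. The paper uses threshold $0$ (exact membership, solved as an LP) rather than $\eps/2$, but this is inessential. Completeness and the use of approximate Carath\'eodory to pass from the $\ell_1$-ball to a $K$-sparse vector are identical.

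Where you and the paper diverge is the soundness machinery for ``the sketch preserves the relevant geometry.'' The paper works over a \emph{finite} set: it takes the Carath\'eodory net $A_{\eps/\sqrt{k}}$ of $(2k/\eps^2)$-term averages of $\sqrt{k}\cdot A_\pm$ (size $\le (2m)^{2k/\eps^2}$), applies Johnson--Lindenstrauss to $\{\vec{y}\}\cup A_{\eps/\sqrt{k}}$, and transfers the cover property between the original and sketched polytopes. You instead invoke a Gordon/Klartag--Mendelson deviation bound uniformly over the \emph{continuous} set $T^*=\{\vec{Ax}-\vec{y}:\|\vec{x}\|_1\le\sqrt{k}\}$. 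Both routes land at the same query count, but the paper's finite-net JL is scale-free in $\|\vec{y}\|$: JL preserves pairwise distances multiplicatively regardless of how large the vectors are, so no radius control is needed. Your Gordon route, by contrast, carries a $\mathrm{rad}(T^*)$ term and forces you to deal with large $\|\vec{y}\|$.

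This is where your proposal has a gap. Prepending a randomized length check that \emph{rejects} when the estimate of $\|\vec{y}\|$ exceeds $O(\sqrt{k})$ breaks the theorem's ``accepts with probability $1$'' completeness: even in the YES case ($\|\vec{y}\|\le\sqrt{k}$), any Gaussian-based norm estimate has a nonzero chance of overshooting the threshold. The fix is not to change the algorithm but to move the case split into the soundness \emph{analysis}: if $\|\vec{y}\|>C\sqrt{k}$, argue directly that with probability $\ge 1-\delta$ one has $\|\vec{My}\|\ge \tfrac12\|\vec{y}\|$ (a single $\chi^2$ bound) and $\sup_{\|\vec{x}\|_1\le\sqrt{k}}\|\vec{MAx}\|\le 2\sqrt{k}$ (union bound over $\|\vec{Ma}_i\|$), so $\mathrm{OPT}>\eps/2$ and the tester rejects; if $\|\vec{y}\|\le C\sqrt{k}$, then $\mathrm{rad}(T^*)=O(\sqrt{k})$ and your Gordon argument goes through unchanged. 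With this repair your approach is correct and arrives at the same guarantees, via a slightly different (but closely related) concentration tool.
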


The tester for the known design case approximates $\vec{y}$ as a sparse convex combination of the vertices of a low-dimensional polytope. This connection between the approximate Carath\'eodory problem and sparsity-constrained linear regression may be useful in other contexts too.

\ignore{On the other hand, it is known that $\Omega(k\log m)$-linear measurements are necessary for the problem when the underlying sparse vector ${\bf x} \in \mathbbm{R}^m$ has to be recovered \cite{BIPW10}. Furthermore, the only assumption we make on the matrix ${\bf A}$ is that the columns should be of unit norm. This is in sharp contrast to known efficient algorithms for sparse recovery, where the measurement matrix ${\bf A}$ is assumed to satisfy the conditions such as RIP or incoherence.}

We now describe our result for the unknown design matrix.

\begin{restatable}[Unknown Design Matrix]{thm}{unknown}  \label{thm:test_rip}
	Fix $\eps, \delta \in (0,1)$ and positive integers $d, k, m$ and $p$, such that $(k/m)^{1/8} < \eps < \frac{1}{100}$ and $ k \geq 10\log\frac{1}{\epsilon}$.
	There exists a tester with query complexity $O(\epsilon^{-2}\log{(p/\delta)})$ which, given as input
	vectors $\vec{y}_1, \vec{y}_2, \dots, \vec{y}_p \in \R^d$,  has the following behavior (where ${\bf Y}$ is the matrix having $\vec{y}_1, \vec{y}_2, \dots, \vec{y}_p$ as columns):
	\begin{itemize}
		\item \emph{\textbf{Completeness}}: If ${\bf Y}$ admits a decomposition $ {\bf Y} = {\bf A}{\bf X}$, where ${\bf A} \in \R^{d \times m}$  satisfies $(\eps, k)$-RIP  and ${\bf X} \in \R^{m \times p}$ with each column of ${\bf X}$ in $\Sp_k^m$, then the tester accepts with probability $\ge 1 - \delta$.
		
		\item \emph{\textbf{Soundness}}: Suppose ${\bf Y}$ does not admit a decomposition ${\bf Y} = {\bf A}({\bf X} + {\bf Z}) + {\bf W}$ with 
		\begin{itemize}
			\item[1.] The design matrix ${\bf A} \in \mathbbm{R}^{d \times m}$ being $(\epsilon,k)$-RIP, with $\|{\bf a}_i\| = 1$ for every $i \in [m]$.
			\item[2.] The coefficient matrix ${\bf X} \in \mathbbm{R}^{m \times p}$ being column wise $\ell$-sparse, where $\ell = O(k/\eps^4)$.
			\item[3.] The error matrices ${\bf Z} \in \mathbbm{R}^{m \times p}$ and ${\bf W} \in \mathbbm{R}^{d \times p}$ satisfying
			\begin{equation*}
				\|{\bf z}_i\|_\infty \le \epsilon^2, \qquad \|{\bf w}_i\|_2 \le O(\epsilon^{1/4}) \qquad \text{for all } i \in [p]. 
			\end{equation*}
		\end{itemize}	
		Then the tester rejects with probability $\ge 1 - \delta$. 
	\end{itemize}
\end{restatable}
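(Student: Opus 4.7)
The tester I propose estimates the \emph{Gaussian width} $\omega(\{\vec{y}_i\}_i) := \E_{\vec{g}}\max_{i\in[p]}\langle \vec{g},\vec{y}_i\rangle$ of the input; the ``robust characterization of Gaussian width in terms of sparsity'' flagged in the abstract is precisely the tool that will drive the soundness direction. Concretely, draw a Gaussian measurement matrix $\vec{M}\in \R^{q\times d}$ with $q = O(\eps^{-2}\log(p/\delta))$ i.i.d.\ $N(0,1)$ entries (call its rows $\vec{g}_1,\ldots,\vec{g}_q$), observe the sketch $\vec{M}\vec{Y}$, and compute
\[
\hat\omega \;:=\; \frac{1}{q}\sum_{j=1}^q \max_{i\in[p]} \langle \vec{g}_j,\vec{y}_i\rangle.
\]
Accept iff $\hat\omega \le \tau$, where $\tau$ is set a constant factor above the completeness bound derived next.

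For \textbf{completeness}, if $\vec{Y}=\vec{A}\vec{X}$ with $\vec{A}$ being $(\eps,k)$-RIP and each column of $\vec{X}$ in $\Sp_k^m$, then $\{\vec{y}_i\}_i\subseteq \vec{A}(\Sp_k^m)$, so
\[
\omega(\{\vec{y}_i\}_i)\;\le\;\omega(\vec{A}(\Sp_k^m))\;=\;\E_{\vec{g}}\sup_{\vec{x}\in\Sp_k^m}\langle \vec{A}^\top \vec{g},\vec{x}\rangle.
\]
RIP makes $\vec{A}^\top\vec{g}$ subgaussian on each $k$-sparse support with parameter $(1+\eps)$, and the supremum equals the $\ell_2$-norm of the top-$k$ coordinates of $\vec{A}^\top\vec{g}$, whose expectation is $\Theta(\sqrt{k\log(m/k)})$ by standard order-statistic bounds; this bound for $\omega$ falls below $\tau$.

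For \textbf{soundness}, the crux is the contrapositive: if $\omega(\{\vec{y}_i\}_i)\le \tau$, construct the decomposition $\vec{Y}=\vec{A}(\vec{X}+\vec{Z})+\vec{W}$. I plan to do this in three steps. First, use the low-width assumption together with a chaining/$\eps$-net argument on the Gaussian process $\vec{g}\mapsto \max_i\langle \vec{g},\vec{y}_i\rangle$ to show that each $\vec{y}_i$ lies close to the union of a small family of low-dimensional subspaces; the hypothesis $(k/m)^{1/8}<\eps$ is exactly what keeps the resulting nets of controlled size. Second, construct an $(\eps,k)$-RIP dictionary $\vec{A}\in\R^{d\times m}$ whose columns ``cover'' these subspaces---either by a probabilistic construction (a random Gaussian/Bernoulli matrix satisfies RIP with overwhelming probability) or by greedily augmenting a basis drawn from the data. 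Third, for each $\vec{y}_i$, extract an $\ell$-sparse part $\vec{x}_i$ (with $\ell=O(k/\eps^4)$) by thresholding the correlations $\vec{A}^\top \vec{y}_i$, put the many sub-threshold correlations into $\vec{z}_i$ (ensuring $\|\vec{z}_i\|_\infty \le \eps^2$), and absorb the geometric residual into $\vec{w}_i$ (ensuring $\|\vec{w}_i\|_2 \le O(\eps^{1/4})$). The awkward exponents $(\eps^4,\eps^{1/4})$ emerge from balancing RIP tolerance against the chaining and thresholding losses.

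Finally, for \textbf{concentration} of $\hat\omega$ around $\omega$, each summand $\max_i\langle \vec{g}_j,\vec{y}_i\rangle$ is a $\max_i\|\vec{y}_i\|_2 = O(1)$-Lipschitz function of $\vec{g}_j$, hence $O(1)$-subgaussian around $\omega$ by Gaussian isoperimetry; a Chernoff bound then yields $|\hat\omega-\omega|\le \eps$ with probability $\ge 1-\delta$ from $q=O(\eps^{-2}\log(p/\delta))$ independent rows, the $\log p$ paying for the union bound over the $p$ inputs (including their normalizations). The \emph{main obstacle} is the three-step soundness reconstruction: inverting a Gaussian-width bound into an honest sparse-plus-RIP decomposition with explicit error budgets on $\vec{Z}$ and $\vec{W}$ is exactly the robust converse the abstract advertises, and the quantitative looseness relative to completeness ($\ell=O(k/\eps^4)$ versus $k$, and $O(\eps^{1/4})$-size $\vec{w}_i$) is an intrinsic cost of this inversion rather than a slackness of the analysis.
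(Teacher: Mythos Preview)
Your high-level tester (estimate the Gaussian width and threshold) and your completeness argument are essentially what the paper does; the paper phrases completeness via Slepian's lemma rather than your ``$\vec{A}^\top\vec{g}$ is subgaussian on $k$-sparse supports'' remark, but these are equivalent. One small omission: the paper first checks that every $\|\vec{y}_i\|\in[1-2\eps,1+2\eps]$, and \emph{that} norm check (a union bound over $p$ vectors) is where the $\log p$ in the query complexity comes from---the width estimation itself needs only $O(\log(1/\delta))$ samples for an additive $\Theta(\sqrt{k\log(m/k)})$ error, not the $\eps$-scale error you aim for.

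The real gap is your soundness reconstruction. The paper does \emph{not} proceed by chaining to a union of low-dimensional subspaces and then thresholding $\vec{A}^\top\vec{y}_i$; in fact no single thresholding scheme produces the $(\vec{X},\vec{Z},\vec{W})$ triple simultaneously. Instead there is a case split on whether $\omega^\ast:=\Theta(\sqrt{k\log(m/k)})$ is large or small relative to $\eps^2\sqrt{d}$. When $\omega^\ast\gtrsim \eps^2\sqrt{d}$, the key technical result (the paper's Theorem~5.1) shows that for a random Gaussian $\Phi:\R^m\to\R^d$, the \emph{normalized} images $\Phi^{\mathrm{norm}}(\widehat{\Sp}^m_\ell)$ with $\ell=O(k/\eps^4)$ form an $O(\eps^{1/4})$-cover of the \emph{entire} sphere $\mathcal{S}^{d-1}$; one then takes $\vec{A}=\Phi$ (which is $(\eps,k)$-RIP), gets $\vec{X}$ sparse and $\|\vec{w}_i\|_2\le O(\eps^{1/4})$, and sets $\vec{Z}=0$. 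When $\omega^\ast\lesssim \eps^2\sqrt{d}$, a separate lemma shows that a random rotation $\vec{R}$ makes $\|\vec{R}\vec{y}_i\|_\infty\le C\omega(S)/\sqrt{d}\le\eps^2$ for all $i$; one then takes $\vec{A}=[\vec{R}^{-1}\ \vec{G}]$ padded to width $m$, sets $\vec{X}=0$, and puts everything into $\vec{Z}$ with $\|\vec{z}_i\|_\infty\le\eps^2$. So the theorem's decomposition is really the union of two disjoint constructions, and the exponents $\eps^{1/4}$ and $\eps^{-4}$ come specifically from the cover theorem's analysis (a Gaussian-width lower bound for $\Phi^{\mathrm{norm}}(\widehat{\Sp}_\ell)$ combined with a min-distance concentration lemma), not from balancing chaining against thresholding.

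Your three-step plan does not supply either of these ingredients. ``Low Gaussian width $\Rightarrow$ close to a union of few low-dimensional subspaces'' is not a known consequence of chaining (Sudakov minoration gives covering numbers, not subspace structure), and thresholding $\vec{A}^\top\vec{y}_i$ for a generic RIP $\vec{A}$ gives no control on the residual $\vec{w}_i$ unless you already know $\vec{y}_i$ is near $\vec{A}(\Sp_\ell)$---which is exactly what Theorem~5.1 establishes and what your outline assumes rather than proves.
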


The contrapositive of the soundness guarantee from the above theorem states that if the tester accepts, then matrix ${\bf Y}$ admits a factorization of the form ${\bf Y} = {\bf A}({\bf X}+{\bf Z}) + {\bf W}$, with error matrices ${\bf Z}$ and ${\bf W}$ having $\ell_\infty$ and $\ell_2$ error bounds. The matrix ${\bf X} + {\bf Z}$ is a sparse matrix with $\ell_\infty$-based soft thresholding, and ${\bf W}$ is an additive $\ell_2$-error term.\footnote{Theorem~\ref{thm:test_rip} can be restated in terms of \emph{incoherent} (instead of RIP) design matrices as well. This follows from the fact that the incoherence and RIP constants of a matrix are order-wise equivalent. This observation is formalized in Appendix~\ref{appendix:incoherence-rip}.}

\begin{remark}[Problem Formulation]
	Note that the settings considered in the known and unknown design matrix settings are quite different from each other. In particular, for the known design setting, the input is a single vector. However, given a single input vector ${\bf y} \in \mathbbm{R}^d$, the analogous unknown design testing question for this setting would be moot, since one can always consider the vector ${\bf y}$ to be the design matrix ${\bf A}$, in which it trivially admits a $1$-sparse representation. More generally, this question is interesting only when the number of points $p$ exceeds $m$, by the same argument.
\end{remark}

\begin{remark}[Range of sparsity parameter $k$]
	It is important to note that the above problem is of interest only when $k < d$. This is true because any $S \subset \mathcal{S}^{d-1}$ trivially admits a $d$-sparse representation in any basis for $\mathbbm{R}^d$. \ignore{Furthermore using Lemma \ref{lem:appx-rank-bound}, we can actually guarantee the decomposition $Y = AX + W$ where each column of $W$ is entry-wise bounded by $\epsilon$ and $X$ is $O_\epsilon\Big(\log d ,\log p\Big)$-sparse, which again can be much smaller that $d$.\mnote{Changed $\omega^2(Y)$ to $\log p$ because gaussian width not defined yet. Also, this $\ell_\infty$ error inconsistent with above theorem, so not sure if this makes sense. Will come back to it.}} Therefore, the challenge here is to design a tester which works in the regime where $k$ is small.
\end{remark}  
	
The above tests have perfect completeness. In the property testing literature, testers with imperfect completeness are called {\em tolerant} \cite{ParnasRR}.
	We also give {tolerant} variants of these testers (Theorems \ref{thm:noise_known} and \ref{thm:test_rip_noisy8ye}) which can handle bounded noise for the completeness case. Finally, we also give an algorithm for testing dimensionality, which is based on similar techniques.

\begin{theorem}[Testing dimensionality]\label{thm:test_dim}
Fix $\eps, \delta \in (0,1)$, positive integers $d, k $ and $p$, where $k \ge 10\epsilon^2\log d$.
	There exists a tester with query complexity $O(\log \delta^{-1})$, which gives as input vectors $\vec{y}_1, \dots, \vec{y}_p \subset \mathcal{S}^{d-1}$, has the following behavior:
	\begin{itemize}
		\item
		\emph{\textbf{Completeness}}: If ${\rm rank}(Y) \leq k$, then the tester accepts with probability $\geq 1-\delta$.
		\item
		\emph{\textbf{Soundness}}: If ${\rm rank}_\eps(Y) \ge k'$, then the tester rejects with probability $\ge 1-\delta$. Here, $k' = 20k/\eps^2$
	\end{itemize}
\end{theorem}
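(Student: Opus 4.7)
The plan is to draw $q = \Theta(\log \delta^{-1})$ i.i.d.\ Gaussian queries $g_1, \ldots, g_q \sim N(0, I_d)$ and observe the rows $z_i := g_i^\top \vec{Y} \in \R^p$ of the sketch $\vec{M}\vec{Y}$. For each $i$ the tester forms $T_i := \|z_i\|_2^2 = g_i^\top \vec{Y}\vec{Y}^\top g_i$ and accepts iff $\widehat{M} := \mathrm{median}_i |T_i - p|$ exceeds a threshold $\tau$ placed in the interval $\bigl(C\,p/\sqrt{k'},\ c\,p/\sqrt{k}\bigr)$. The strategy exploits that the unit-norm hypothesis $\|\vec{y}_j\|_2 = 1$ fixes $\|\vec{Y}\|_F^2 = p$ \emph{a priori}, so this quadratic statistic directly probes whether the spectrum of $\vec{Y}\vec{Y}^\top$ is concentrated on a few directions (low rank) or spread across many (large approximate rank): indeed $\E[T_i] = \mathrm{tr}(\vec{Y}\vec{Y}^\top) = p$ and $\mathrm{Var}(T_i) = 2\|\vec{Y}\vec{Y}^\top\|_F^2 = 2\sum_j \sigma_j(\vec{Y})^4$, so the typical magnitude of $|T_i - p|$ tracks the Schatten-$4$ content of $\vec{Y}$.

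For the completeness direction, if $\mathrm{rank}(\vec{Y}) \le k$ then Cauchy--Schwarz on the at most $k$ nonzero singular values yields $\sum_j \sigma_j^4 \ge p^2/k$, hence $\mathrm{Var}(T_i) \ge 2 p^2/k$. A direct Wick-type computation shows that the fourth central moment of a Gaussian quadratic form is $O(\mathrm{Var}^2)$, so a Paley--Zygmund bound on the nonnegative random variable $(T_i - p)^2$ gives $\Pr[|T_i - p| \ge c\,p/\sqrt{k}] \ge \Omega(1)$. A Chernoff bound on the count of $T_i$'s exceeding this level boosts this constant-probability event to $\Pr[\widehat{M} \ge c\,p/\sqrt{k}] \ge 1 - \delta$ using $q = \Theta(\log \delta^{-1})$ samples, so the tester accepts.

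For the soundness direction I aim to show that the hypothesis $\mathrm{rank}_\epsilon(\vec{Y}) \ge k'$ forces spectral flatness, namely $\sum_j \sigma_j^4 \le O(p^2/k')$. Hanson--Wright on $T_i$ then yields $|T_i - p| = O(p/\sqrt{k'})$ with constant probability per sample, and a Chernoff bound gives $\Pr[\widehat{M} \le C\,p/\sqrt{k'}] \ge 1 - \delta$. Because $k'/k = 20/\epsilon^2$, the multiplicative gap $\sqrt{k'/k} = \sqrt{20}/\epsilon > 1$ places $\tau$ strictly between the two regimes of $\widehat{M}$, completing the separation.

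The main obstacle is the spectral flatness claim in the soundness step: the hypothesis $\mathrm{rank}_\epsilon(\vec{Y}) \ge k'$ controls only the tail $\sum_{j > k'-1} \sigma_j^2$ and not $\sigma_1$ or $\sum_j \sigma_j^4$ directly, so an extra argument is needed to rule out the case of a dominant top singular value with a small but wide tail. I plan to argue by contradiction: if $\sigma_1(\vec{Y})^2 \gg p/k'$, one can peel off the dominant rank-one component and then apply a covering argument over rank-$k$ subspaces of $\R^d$ to construct a rank-$k$ matrix that is $\epsilon$-close to $\vec{Y}$, contradicting $\mathrm{rank}_\epsilon(\vec{Y}) > k$. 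The hypothesis $k \ge 10\epsilon^2 \log d$ enters precisely at this step, ensuring that an $\epsilon$-net of rank-$k$ subspaces of size $\exp(O(k \log(d/\epsilon)))$ survives a union bound against per-direction Gaussian tails while leaving enough slack to realize the $1/\epsilon^2$ spectral gap between $k$ and $k'$.
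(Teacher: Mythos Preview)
Your statistic $T_i = g_i^\top \vec{Y}\vec{Y}^\top g_i$ cannot separate the two cases, and the gap is exactly the one you flag but do not close. The issue is structural: $T_i$ (and hence $\sum_j \sigma_j(\vec{Y})^4$) is highly sensitive to \emph{column multiplicities}, while both $\mathrm{rank}(\vec{Y})$ and $\mathrm{rank}_\eps(\vec{Y})$ are invariant under duplicating columns. Concretely, take any $\vec{Y}_0$ that is in the soundness regime ($\mathrm{rank}_\eps(\vec{Y}_0)\ge k'$) and append $N$ copies of a single unit vector, say $\vec{e}_1$, to form $\vec{Y}$ with $p=p_0+N$ columns. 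Since $\mathrm{rank}_\eps$ does not change under column duplication, $\vec{Y}$ is still in the soundness regime. But now $\sigma_1(\vec{Y})^2\ge N$ and $\sum_j\sigma_j(\vec{Y})^4\ge N^2\approx p^2$, so $\mathrm{Var}(T_i)\approx 2p^2$ and your tester sees $|T_i-p|$ of order $p$, i.e.\ far above the threshold $c\,p/\sqrt{k}$, and \emph{accepts}. The matching rank-$1$ completeness instance ($p$ copies of $\vec{e}_1$) has exactly the same $\sum_j\sigma_j^4=p^2$. The two inputs are essentially indistinguishable by your statistic even though one must be accepted and the other rejected.

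Your proposed repair (``if $\sigma_1^2\gg p/k'$, peel off the top component and build a rank-$k$ entrywise approximation'') fails on the same example. After peeling the rank-one piece $\sigma_1 u_1 v_1^\top$ you are left with (essentially) the original $\vec{Y}_0$, whose $\eps$-approximate rank is still $\ge k'>k$ by hypothesis; no rank-$k$ matrix is $\eps$-close to it, so the promised contradiction never materializes. A covering argument over rank-$k$ subspaces cannot manufacture such an approximation either, because the obstruction is information-theoretic (e.g.\ when $\vec{Y}_0$ contains an identity block, Alon's lower bound on $\mathrm{rank}_\eps(I_m)$ applies). In short, the implication ``$\mathrm{rank}_\eps(\vec{Y})\ge k'\Rightarrow \sum_j\sigma_j^4\le O(p^2/k')$'' is false.

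The paper sidesteps this entirely by using a statistic that \emph{is} invariant under column duplication: the gaussian width $\omega(S)=\E_{\vec g}\max_{j}\langle \vec g,\vec y_j\rangle$ of the column \emph{set} $S=\{\vec y_1,\dots,\vec y_p\}$. The tester estimates $\omega(S)$ to additive error $\sqrt{k}$ with $O(\log\delta^{-1})$ gaussian rows and accepts iff the estimate is at most $2\sqrt{k}$. Completeness is immediate from $\omega(S)\le\sqrt{\dim S}\le\sqrt{k}$. Soundness comes from a Johnson--Lindenstrauss/Gordon argument (their Lemma bounding $\mathrm{rank}_\eps(\vec{Y})\le O(\max(\omega(S)^2,\log d)/\eps^2)$): one takes $\vec G\in\R^{d'\times d}$ with $d'=O(\omega(S)^2/\eps^2)$ preserving inner products among $\{\vec e_i\}\cup S$, and sets $\widehat{\vec Y}=\vec G^\top\vec G\vec Y$, which is rank $\le d'$ and entrywise $O(\eps)$-close to $\vec Y$. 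The hypothesis $k\ge 10\eps^2\log d$ is used only to absorb the $\log d$ term. If you want to salvage a moment-based approach, you would need a statistic that depends on the column set rather than the column multiset; the $\ell_2$-energy $\|g_i^\top\vec Y\|_2^2$ does not have this property, while $\max_j\langle g_i,\vec y_j\rangle$ does.
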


The soundness criteria in the above Theorem is stated in terms of the $\epsilon$-approximate rank of a matrix (see Definition \ref{defn:appx_rank}). This is a well-studied relaxation of the algebraic definition of rank, and has applications in approximation algorithms, communication complexity and learning theory (see \cite{ALSV13} and references therein).


%

\subsection{Related Work}
\label{sec:relwork}

Although, to the best of our knowledge, the testing problems we consider have not been explicitly investigated before, there are several related areas of study that frame our results in their proper context.

\paragraph{Sketching in the Streaming Model.}
In the streaming model, one has a series of updates $(i,v)$ where each $i \in [n]$ and $v \in \{-T, \dots, T\}$. Each update modifies a vector $\vec{x}$, initialized at $\vec{0}$, to $\vec{x}+v\vec{e}_i$. The $L_0$-estimation problem in streaming is to estimate the sparsity of $\vec{x}$ upto a multiplicative $(1\pm \eps)$ factor. A linear sketch algorithm maintains $\vec{Mx}$ during the stream, where $\vec{M} \in \R^{s \times n}$ is a randomized matrix. 

A linear sketch algorithm for the $L_0$-estimation problem directly yields a tester in the setting where the design matrix is known to be the identity matrix. By invoking the space-optimal $L_0$-estimation result from \cite{KNW}, we obtain:
\begin{theorem}[Implicit in \cite{KNW}]
Fix $\eps \in (0,1)$, positive integers $m, k$ and an {\em invertible} matrix $\vec{A} \in \R^{m \times m}$. Then, there is a tester with query complexity $O(\eps^{-2} \log(m))$ that, for an input $\vec{y} \in \R^m$, accepts with probability at least $2/3$ if $\vec{y} = \vec{Ax}$ for some $k$-sparse $\vec{x} \in \Z^m$, and rejects with probability $2/3$ if 
$\vec{y}=\vec{Ax}$ for some $(1+\eps)k$-sparse $\vec{x} \in \Z^m$. The running time of the algorithm is $\poly(m, 1/\eps)$. 
\end{theorem}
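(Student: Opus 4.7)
The plan is to reduce the testing problem to $L_0$-estimation in the streaming model and then invoke the linear sketch of Kane, Nelson and Woodruff off the shelf. Since $\vec{A}$ is invertible, every input $\vec{y}$ determines a unique preimage $\vec{x} = \vec{A}^{-1}\vec{y}$, and the two hypotheses in the theorem are exactly $\|\vec{x}\|_0 \le k$ (YES) versus $\|\vec{x}\|_0 \ge (1+\eps)k$ (NO). Linear queries on $\vec{y}$ are in bijection with linear queries on $\vec{x}$: for any target $\vec{u} \in \R^m$, the choice $\vec{v} = (\vec{A}^\top)^{-1}\vec{u}$ gives $\langle \vec{v}, \vec{y}\rangle = \langle \vec{u}, \vec{A}^{-1}\vec{y}\rangle = \langle \vec{u}, \vec{x}\rangle$. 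Consequently, any linear sketch $\vec{Mx}$ with $s$ rows can be produced by $s$ linear queries on $\vec{y}$, namely the rows of $\vec{M}\vec{A}^{-1}$.

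The second step is to plug in the KNW $L_0$-sketch: there is a (randomized) matrix $\vec{M}\in\R^{s\times m}$ with $s = O(\eps^{-2}\log m)$ together with a $\poly(m,1/\eps)$-time estimator $\Psi:\R^s \to \R_{\ge 0}$ such that, for any integer vector $\vec{x}\in \Z^m$ with polynomially bounded entries, $\Psi(\vec{Mx})$ lies in $(1 \pm \eps/4)\|\vec{x}\|_0$ with probability at least $2/3$. The tester then does the obvious thing: it makes $s$ linear queries on $\vec{y}$ to obtain $\vec{z} = \vec{M}(\vec{A}^{-1}\vec{y}) = \vec{Mx}$, computes $\widehat{K} = \Psi(\vec{z})$, and accepts iff $\widehat{K} \le (1+\eps/2)k$. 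In the YES case $\widehat{K} \le (1+\eps/4)k < (1+\eps/2)k$, while in the NO case $\widehat{K} \ge (1-\eps/4)(1+\eps)k > (1+\eps/2)k$ for $\eps < 1$, so the threshold correctly separates the two cases with the required $2/3$ success probability. The query count matches the claimed $O(\eps^{-2}\log m)$, and the running time is dominated by the one-time formation of $\vec{M}\vec{A}^{-1}$ (polynomial in $m$) together with the polynomial-time evaluation of $\Psi$.

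The main subtlety is reconciling the ambient assumptions of the two results: KNW's sketch is tailored to integer streams with polynomially bounded magnitudes, whereas the theorem only specifies $\vec{x}\in \Z^m$. This is why the invertibility of $\vec{A}$ and the integrality hypothesis are both stated explicitly: invertibility makes the preimage unique so that $\|\vec{x}\|_0$ is well defined, and integrality is what lets us appeal to $L_0$-estimation rather than the much harder continuous sparsity problem. In a careful write-up one either cites a magnitude-free variant of the KNW sketch or makes the standard polynomial boundedness assumption on $\vec{x}$; beyond this bookkeeping point, every step of the reduction is syntactic, which is why the paper marks the result as \emph{implicit} in \cite{KNW}.
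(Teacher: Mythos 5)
Your reduction is exactly the one the paper has in mind: use invertibility of $\vec{A}$ to translate linear queries on $\vec{y}$ into linear queries on $\vec{x} = \vec{A}^{-1}\vec{y}$ (via the rows of $\vec{M}\vec{A}^{-1}$), then run the KNW $L_0$-sketch and threshold the estimate, and the paper indeed only gestures at this without spelling it out. Your calibration of the threshold and your caveat about KNW's polynomially bounded integer entries are both correct, so the write-up is a faithful and somewhat more careful rendering of the argument the theorem statement leaves implicit.
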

We believe that the theorem should also extend (albeit with a mild change in parameters) to the setting where $\vec{x}$ is an arbitrary real vector (not necessarily discrete), but the assumption that $\vec{A}$ is invertible seems hard to circumvent.

\paragraph{Sketching in Numerical Linear Algebra.}
Low-dimensional sketches used in numerical linear algebra can also yield testers in the known design matrix case of our model.  For  a matrix $\vec{A} \in \R^{d \times m}$, suppose we want a property tester that, for input $\vec{y} \in \R^d$, distinguishes between the case $\vec{y} = \vec{Ax^*}$ for some $k$-sparse $\vec{x}^*$, and the case $\min_{k\text{-sparse } \vec{x}} \|\vec{Ax-y}\|>\eps$. 

In the sketching approach, one looks to solve the optimization problem $\min_{k\text{-sparse } \vec{x}} \|\vec{Ax-y}\|$ in a smaller dimension i.e., one looks at:
\begin{equation}
\widehat{\vec{x}} = \argmin_{\vec{x}' \in K} \|\vec{SAx}' - \vec{Sy}\| = \argmin_{\vec{x}' \in K} \|\vec{S}(\vec{Ax}' - \vec{y})\|
\label{eqn:hatmin}
\end{equation}
where $\vec{S} \in \R^{q \times d}$ is a sketch matrix (where $q \ll d$) and $K = \{\vec{x} : \|\vec{x}\|_0 \leq k\}$. The intent here is that the vector $\hat{\vec{\bf x}}$ would also be an approximate minimizer to the original optimization problem.   

An {\em oblivious $\ell_2$-subspace embedding} with parameters $(d, m, \eps, \delta)$ is a distribution on $q\times d$ matrices $\vec{M}$ such that with probability at least $1-\delta$, for any fixed $d \times m$ matrix $\vec{A}$, $(1-\eps)\|\vec{Ax}\| \leq \|\vec{MAx}\| \leq (1+\eps)\|\vec{Ax}\|$ for all $\vec{x} \in \R^m$. For our application, suppose we draw $\vec{S}$ from an oblivious subspace embedding with parameters{\footnote{That is, consider all possible choices of supports $\Omega \in {[m]\choose \le k}$ and let ${\bf A}_\Omega$ be the submatrix corresponding to the columns of $\Omega$. For the given choice of parameters, it follows that with probability $\ge 1 - \delta$, every ${\bf x} \in \R^m$ will satisfy $\|{\bf S}({\bf A}_{\Omega}{\bf x} - {\bf y})\| \in (1 \pm \epsilon)\|{\bf A}_{\Omega}{\bf x} - {\bf y}\|$}} $\big(d, k+1, \eps, \delta/{m \choose k}\big)$. Then, we get a valid property tester with query complexity $q$ if we accept when $\|\vec{SA\widehat{x} - Sy}\| = 0$ and reject when it is at least $\eps(1-\eps)$.

Using the oblivious subspace embedding from Theorem 2.3 in \cite{Woo}, we get the following theorem:

\ignore{Suppose $\vec{S}$ is drawn from an oblivious subspace embedding with parameters $d, m+1, \eps, \delta$, so that for any $\vec{x}$, $\|\vec{SAx - Sy}\| \in (1 \pm \eps) \|\vec{Ax-y}\|$ with probability at least $1-\delta$. Then, we get a valid property tester with query complexity $q$ if we accept when $\|\vec{SA\widehat{x} - Sy}\| = 0$ and reject when it is at least $\eps(1-\eps)$.

In our work for the known design matrix setting, we are looking at $K$ equal to the $\ell_0$-ball of radius $k$. In this case, it suffices to draw $\vec{S}$ from an oblivious subspace embedding with parameters $d, k+1, \eps, \delta/{m \choose k}$. Using the construction from Theorem 2.3 in \cite{Woo}, we get the following theorem:}
\begin{theorem}[Implicit in prior work]
Fix $\eps, \delta \in (0,1)$ and positive integers $d, k, m$ and a matrix $\vec{A} \in \R^{d \times m}$. Then, there is a tester with query complexity $O(k \eps^{-2} \log(m/\delta))$ that, for an input vector $\vec{y} \in \R^d$, accepts with probability $1$ if $\vec{y} = \vec{Ax}$ for some $k$-sparse $\vec{x}$ and rejects with probability at least $1-\delta$ if $\|\vec{y-Ax}\| > \eps$ for all $k$-sparse $\vec{x}$. The running time of the tester is the time required to solve Equation (\ref{eqn:hatmin}).
\end{theorem}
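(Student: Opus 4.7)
The plan is to instantiate exactly the construction sketched in the discussion immediately preceding the theorem. The tester draws a single random sketch matrix $\vec{S} \in \R^{q \times d}$ from an oblivious $\ell_2$-subspace embedding distribution, uses $q$ linear queries to form $\vec{Sy}$, precomputes $\vec{SA}$, solves the sparse regression problem in Equation~(\ref{eqn:hatmin}) to obtain $\widehat{\vec{x}}$, and accepts if and only if $\|\vec{SA}\widehat{\vec{x}} - \vec{Sy}\| \le \eps(1-\eps)$.

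The main ingredient is Theorem~2.3 of \cite{Woo}, which provides an oblivious $\ell_2$-subspace embedding with parameters $(d, r, \eps, \delta')$ using $q = O((r + \log(1/\delta'))/\eps^2)$ rows. Instantiate it with $r = k+1$ and $\delta' = \delta/\binom{m}{k}$. For each subset $\Omega \subseteq [m]$ with $|\Omega| \le k$, let $V_\Omega \subseteq \R^d$ denote the at most $(k+1)$-dimensional subspace spanned by the columns $\{\vec{a}_j\}_{j \in \Omega}$ together with $\vec{y}$. Applying the embedding property to each fixed $V_\Omega$ and union-bounding over all $\binom{m}{k}$ such subsets, with probability at least $1-\delta$ the sketch $\vec{S}$ simultaneously preserves the norm of every vector in every $V_\Omega$ to within a multiplicative $(1\pm\eps)$ factor. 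Using $\log\binom{m}{k} \le k\log(em/k)$, the resulting query complexity is
\[
q \;=\; O\!\left(\frac{k + \log\binom{m}{k} + \log(1/\delta)}{\eps^2}\right) \;=\; O\!\left(\frac{k\log(m/\delta)}{\eps^2}\right),
\]
matching the claimed bound.

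Condition on the success event of the union bound. For completeness: if $\vec{y} = \vec{Ax}^*$ for some $k$-sparse $\vec{x}^*$, then $\vec{x}^*$ is feasible in Equation~(\ref{eqn:hatmin}) and achieves sketched residual $\vec{S}(\vec{Ax}^* - \vec{y}) = \vec{0}$, so $\widehat{\vec{x}}$ also achieves zero residual and the tester accepts; since this argument is deterministic and independent of the embedding, completeness holds with probability $1$. For soundness: suppose $\|\vec{Ax} - \vec{y}\| > \eps$ for every $k$-sparse $\vec{x}$. For any such $\vec{x}$ with support $\Omega$, the residual $\vec{Ax} - \vec{y}$ lies in $V_\Omega$, so the embedding inequality gives $\|\vec{SAx} - \vec{Sy}\| \ge (1-\eps)\|\vec{Ax} - \vec{y}\| > \eps(1-\eps)$; the sketched optimum therefore exceeds the acceptance threshold and the tester rejects. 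The running time is dominated by Equation~(\ref{eqn:hatmin}), as stated.

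The only non-trivial step is calibrating $\delta'$ so that the final query complexity is $O(k\eps^{-2}\log(m/\delta))$ rather than a looser $O(k^2\eps^{-2}\log m)$; the bound $\log\binom{m}{k} = O(k\log(m/k))$ keeps the $m$-dependence inside a single logarithm and multiplied only by $k/\eps^2$. Once this calibration is in place, completeness and soundness follow directly from the isometry guarantee of the oblivious embedding on the finite collection of subspaces $\{V_\Omega\}_{|\Omega|\le k}$.
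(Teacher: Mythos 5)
Your proof is correct and mirrors the paper's construction exactly: draw $\vec{S}$ from an oblivious $\ell_2$-subspace embedding with parameters $\bigl(d, k+1, \eps, \delta/\binom{m}{k}\bigr)$, union-bound over all $\binom{m}{k}$ supports with $\vec{y}$ adjoined to each column span, and threshold the sketched residual from Equation~(\ref{eqn:hatmin}). You have simply spelled out the argument the paper confines to a footnote; the only cosmetic difference is your single accept-iff-$\le\eps(1-\eps)$ rule, which tidies up the paper's informal accept-at-$0$~/~reject-at-$\ge\eps(1-\eps)$ phrasing.
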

Unfortunately, for general design matrices $\vec{A}$, solving the optimization problem in Equation (\ref{eqn:hatmin}) is $\NP$-hard. When $\vec{A}$ satisfies $(\eps, k)$-RIP, then it is easy to verify that with probability at least $1-\delta$, $\vec{SA}$ is also $(O(\eps), k)$-RIP. In such cases, which as we explain next, Equation (\ref{eqn:hatmin}) can be solved efficiently, which in turn implies that the above property tester has polynomial running time.

\paragraph{Sparse Recovery and Compressive Sensing.}
{\em Sparse recovery} or {\em compressed sensing} is the problem of recovering a sparse vector $\vec{x}$ from a low-dimensional projection $\vec{Ax}$. In compressive sensing, $\vec{A}$ is interpreted as a measurement matrix, where each row of $\vec{A}$ corresponds to a linear measurement. Compressive sensing has been used for single-pixel cameras, MRI compression, and radar communication. See \cite{FoucartR} and references therein.

In celebrated works by Cand\`es, Romberg, and Tao \cite{CRT} and by Donoho \cite{Don}, it was shown that  given a matrix $\vec{A}$ satisfying $(0.4, k)$-RIP, any $k$-sparse vector $\vec{x}$ can be recovered efficiently from $\vec{y} =\vec{A} {\bf x}$, even when the sparsity $k = \Omega(d)$, and similar results hold when $\|\vec{y}-\vec{Ax}\|$ is small.  However, these results are not directly relevant to us as the recovery algorithms examine all of ${\bf y}$ and not just a low-dimensional sketch of it.

\begin{remark}
	Note that the sketching based approaches discussed in this subsection so far address the setting where the design matrix ${\bf A}$ is known, and as such do not have implications for the testing problem in the unknown design setting.
\end{remark}

\paragraph{Dictionary Learning.}
In the setting of the unknown design matrix, the question of recovering the design matrix and the sparse representation (as opposed to our problem of testing their existence) is called the {\em dictionary learning} or {\em sparse coding} problem. Dictionary learning is a fundamental task in several domains. The problem was first formulated by \cite{OF96, OF97} who showed that the dictionary elements learnt from sparse coding of natural images are similar to the receptive fields of neurons in the visual cortex. Inspired by these results, automatically learned dictionaries have been used in machine learning for feature selection by \cite{Evgen07} and for denoising by \cite{EA06}, edge-detection by \cite{MLBHP08}, super-resolution by \cite{YWHM08}, restoration by \cite{MSE08}, and texture synthesis by \cite{Pey09} in image processing applications.

\ignore{
Although the dictionary learning problem is \textsf{NP}-hard (\cite{DMA97}) in the worst case (even when the dictionary $A$ is known), it is generally considered a solved problem in practice. Gradient descent heuristics (\cite{OF97}), the method of optimal directions (\cite{EAHH99}) and the K-SVD algorithm (\cite{AEB05}) work very well for real applications; see \cite{Aha06} for a very nice overview. On the other hand, theoretical results have not yet been able to fully justify the efficacy of these methods. }
The first work to give a dictionary learning algorithm with provable guarantees was \cite{SWW12} who restricted the dictionary to be square and the sparsity to be at most $\sqrt{d}$. For the more common overcomplete setting, \cite{AGM14} and  \cite{AgarwalAltMin14} independently gave algorithms with provable guarantees for dictionaries satisfying incoherence and RIP respectively. These works also restrict the sparsity to be strictly less than $\sqrt{d}/\mu$ where $\mu$ is the incoherence. \cite{BKS15} gave a very different analysis using the sum-of-squares hierarchy that works for nearly linear sparsity; however, their algorithm runs in time $d^{\poly(1/\eps)}$ where $\eps$ measures the accuracy to which the dictionary is to be learned and this is too inefficient to be of use for realistic parameter ranges. All of these (as well as other more recent) works assume distributions from which the input samples are generated in an i.i.d fashion.
In contrast, our work is in the \emph{agnostic setting} and hence, is incomparable with these results.

\paragraph{Property Testing.}
We are not aware of any directly related work in the property testing literature. \cite{CSZ00} studied some problems in computational geometry from the property testing perspective, but the problems involved only discrete structures. Krauthgamer and Sasson \cite{KS03} studied the problem of testing dimensionality, but their notion of farness from being low-dimensional is quite different from ours. In their setup, a sequence of vectors $\vec{y}_1, \dots, \vec{y}_p$ is $\eps$-far from being $d$-dimensional if at least $\eps p$ vectors need to be removed to make it be of dimension $d$. Note that a set of vectors can be nearly isometric to a $d$-dimensional subspace but far from being $d$-dimensional in Krauthgamer and Sasson's sense (for example, the Johnson-Lindenstrauss projection of the standard unit vectors $\vec{e}_1, \vec{e}_2, \dots, \vec{e}_d$).

\subsection{Discussion}

A standard approach to designing a testing algorithm for a property $\mathcal{P}$ is the following: we identify an alternative property $\mathcal{P}'$ which can be \emph{tested efficiently and exactly}, while satisfying the following:
\begin{itemize*}
	\item[(i)] {\bf Completeness}: If an instance satisfies $\mathcal{P}$, then it satisfies $\mathcal{P}^\prime$.
	\item[(ii)] {\bf Soundness}: If an instance satisfies $\mathcal{P^\prime}$, the it is close to satisfying $\mathcal{P}$.
\end{itemize*}

In other words, we reduce the property testing problem to that of finding a efficiently testable property $\mathcal{P}^\prime$, which can be interpreted as a surrogate for property $\mathcal{P}$. The inherent geometric nature of the problems looked at in this paper motivate us to look for $\mathcal{P}^\prime$'s which are based around convex geometry and high dimensional probability.

For the known design setting, we are looking for a $\mathcal{P}^\prime$, which would ensure that if a given point ${\bf y} \in\R^d$ satisfies $\mathcal{P}^\prime$, then it is close to having a sparse representation in the matrix ${\bf A}$. Towards this end, the approximate Carath\'eodory's theorem states that if a point ${\bf y} \in \R^d$ belonging to the convex-hull of $\vec{A}$, then it is close to another point which admits a sparse representation. On the other hand, if a unit vector ${\bf x} \in \mathcal{S}^{d-1} \cap \mathbbm{R}^d_+$ were $k$-sparse to begin with , then it can be seen that the corresponding ${\bf y} ={\bf A}{\bf x}$ would belong to the convex hull of $\sqrt{k}\cdot \vec{A}$. These observations taken together, seem to suggest that one can take $\mathcal{P}^\prime$ to be membership in the convex-hull of $\sqrt{k}\cdot \vec{A}$. This intuition is made precise in the analysis of the tester in Section \ref{sec:known}.

On other hand, for the unknown design setting identifying the property $\mathcal{P}^\prime$ requires multiple considerations. Here, we are intuitively looking for a $\mathcal{P}^\prime$ based on a quantity $\omega$ that {\em robustly} captures sparsity and is easily computable using linear queries, in the sense that $\omega$ is small when the input vectors have a sparse coding and large when they are ``far" from any sparse coding. Moreover, $\omega$ needs to be invariant with respect to isometries and nearly invariant with respect to near-isometries.
A natural and widely-used measure of structure that satisfies the above mentioned properties is the {\em gaussian width}.
\begin{definition}
	The {\em gaussian width} of  a set $S \subseteq \R^d$ is: 
	$\omega(S) = \E_{\vec{g}}[\sup_{\vec{v} \in S} \langle \vec{g}, \vec{v}\rangle]$
	where $\vec{g} \in \R^d$ is a random vector drawn from $N(0,1)^d$, i.e., a vector of independent standard normal variables.
\end{definition}

The gaussian width of $S$ measures how well on average the vectors in $S$ correlate with a randomly chosen direction. It is invariant under orthogonal transformations of $S$ as the distribution of $\vec{g}$ is spherically symmetric. It is a well-studied quantity in high-dimensional geometry (\cite{Ver15, MendelsonV02}), optimization (\cite{CRPW12, ALMT13}) and statistical learning theory (\cite{BartlettM02}). \ignore{For instance, \cite{CRPW12, ALMT13} show that the gaussian width can be used to give tight characterizations of the number of measurements required to solve linear inverse problems. In the context of learning theory \cite{BartlettM02}, gaussian width (also known as gaussian complexity) of a function class is a well-used complexity measure of a function class and is used to bound the generalization error. It also has connections to other well-studied complexity measures such as VC-Dimension (\cite{MendelsonV02}).} The following bounds are well-known.
\begin{lemma}[See, for example, \cite{RV08, Ver15}]\label{lem:width}
~\\[-2em]
	\begin{enumerate*}
		\item[(i)] If $S$ is a finite subset of $\mathcal{S}^{d-1}$, then $\omega(S) \leq \sqrt{2\log |S|}$.
		\item[(ii)] $\omega(\mathcal{S}^{d-1}) \leq \sqrt{d}$ 
		\item[(iii)] If $S \subseteq \mathcal{S}^{d-1}$ is of dimension $k$, then $\omega(S) \leq \sqrt{k}$.
		\item[(iv)] $\omega(\Sp_k^d) \leq 2\sqrt{3k \log(d/k)}$ when $d/k > 2$ and $k \geq 4$.
	\end{enumerate*}
\end{lemma}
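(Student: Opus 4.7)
The plan is to handle each item separately using standard Gaussian-process tools, with most of the work concentrated in item~(iv). The unifying observation throughout is that $\omega(S)$ is the expected supremum of the centered Gaussian process $\{X_v = \langle g, v\rangle\}_{v \in S}$, where each $X_v$ has variance $\|v\|^2 \leq 1$ for $v \in \mathcal{S}^{d-1}$.

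For (i), I would invoke the classical bound on the expectation of the maximum of $|S|$ standard Gaussians via the Laplace-transform trick: $\E[\max_v X_v] \leq \lambda^{-1}\log(|S|\, e^{\lambda^2/2})$ for every $\lambda > 0$, optimized at $\lambda = \sqrt{2\log|S|}$. For (ii), the identity $\sup_{v \in \mathcal{S}^{d-1}} \langle g, v\rangle = \|g\|$ combined with Jensen's inequality gives $\E\|g\| \leq \sqrt{\E\|g\|^2} = \sqrt{d}$. For (iii), I would project to the relevant subspace: if $V$ is the $k$-dimensional subspace containing $S$, then $\langle g, v\rangle = \langle \pi_V g, v\rangle$ for $v \in V$, and $\pi_V g$ is a standard Gaussian on $V$, so (iii) reduces to (ii) applied inside $V$.

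The interesting case is (iv). The starting point is the identity
\begin{equation}
\sup_{v \in \Sp_k^d} \langle g, v\rangle \;=\; \max_{T \in \binom{[d]}{k}} \|g_T\|,
\end{equation}
where $g_T$ denotes the restriction of $g$ to the coordinates in $T$. Each $\|g_T\|$ is a $1$-Lipschitz function of $g$, so by Gaussian concentration (together with $\E\|g_T\| \leq \sqrt{k}$ from part~(ii)) one obtains $\Pr[\|g_T\| > \sqrt{k} + t] \leq e^{-t^2/2}$. A union bound over the $\binom{d}{k} \leq (ed/k)^k$ admissible supports yields $\Pr[\max_T \|g_T\| > \sqrt{k} + t] \leq (ed/k)^k\, e^{-t^2/2}$, and integrating this tail with the choice $t = \sqrt{2k\log(ed/k)}$ produces the desired order $\sqrt{k\log(d/k)}$.

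The main obstacle will be matching the explicit constant $2\sqrt{3}$ in~(iv): one needs a careful bookkeeping of the centering term $\sqrt{k}$, the union-bound deviation $\sqrt{2k\log(ed/k)}$, and the integrated residual tail. The hypotheses $d/k > 2$ and $k \geq 4$ are exactly what is needed to absorb constant additive corrections (such as the $\log e = 1$ inside the logarithm) into the stated constant, rather than carrying them as lower-order terms.
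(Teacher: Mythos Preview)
The paper does not prove this lemma; it is quoted as a known fact with citations to \cite{RV08, Ver15}. So there is no ``paper's proof'' to compare against, and your task reduces to supplying a self-contained argument.

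Your proposal is correct and follows exactly the standard route one finds in those references. Parts (i)--(iii) are handled cleanly. For (iv), the identity $\sup_{v \in \Sp_k^d}\langle g,v\rangle = \max_{|T|=k}\|g_T\|$ together with Gaussian Lipschitz concentration and a union bound over $\binom{d}{k}$ supports is precisely the textbook argument. Your diagnosis of where the hypotheses $d/k>2$ and $k\ge 4$ enter---absorbing the $\log e$ and the additive $\sqrt{k}$ centering into the constant $2\sqrt{3}$---is accurate; the only thing left is the arithmetic, which is routine (e.g.\ bound $\sqrt{k}+\sqrt{2k\log(ed/k)}$ and the integrated tail each by a multiple of $\sqrt{k\log(d/k)}$ using $\log(d/k)>\log 2$).
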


In the context of Theorems \ref{thm:test_rip} and \ref{thm:test_dim}, one can observe that whenever a given set satisfies sparsity or dimensionality constraints, the gaussian width of such sets are small (points (iii) and (iv) from the above Lemma). Therefore, one can hope to test dimensionality or sparsity by computing an empirical estimate of the gaussian width and comparing the estimate to the results in Lemma \ref{lem:width}. While completeness of such testers would follow directly from concentration of measure, establishing soundness would require us to show that approximate converses of points (iii) and (iv) hold as well i.e., whenever the gaussian width of the set $S$ is small, it can be approximated by sets which are approximately sparse in some design matrix (or have low rank). 

For the soundness direction of Theorem \ref{thm:test_rip}, the above arguments are made precise using Lemma \ref{lem:l_infty} and Theorem \ref{thm:sparse_approx}, which show that small gaussian width sets can be approximated by random projections of sparse vectors and vectors with small $\ell_\infty$-norm. For Theorem \ref{thm:test_dim}, we use lemma \ref{lem:appx-rank-bound} which shows that sets with small gaussian width have small approximate rank.

\subsection{Future Work}

Our work opens the possibility of using linear queries to efficiently test other properties of vectors and matrices which arise in machine learning and convex optimization. Some questions directly motivated by this work are:\\

%

\noindent
\textbf{Other notions of distance}: Whether the soundness guarantees of our theorems can be strengthened (especially for the second setting of unknown design matrices) is an interesting direction for future work. In the unknown design setting, can we have that if the tester accepts, then $\vec{Y} = \vec{AX} + \vec{W}$ where columns of $\vec{X}$ are $O(k)$-sparse and the column norms $\|\vec{W}_j\|=O(\eps)$?\\

\noindent
\textbf{Lower bounds}: What is the minimum number of linear queries needed to test sparsity over known and unknown design matrices?  It seems that a mix of information-theoretic and analytic tools will be needed to prove such lower bounds.\\

\noindent
\textbf{Other restrictions on the dictionary}: Another important direction of future work is to consider our two testing problems in the context of commonly used dictionaries, such as the ones composed of Fourier basis, wavelet basis, and ridgelets. In particular, these dictionaries do satisfy RIP, but given their applicability it is relevant to understand if the results obtained in this paper can be strengthened with these additional restrictions on the dictionary. \\

\noindent
\textbf{Construction of $\epsilon$-nets}: A key technical contribution of the paper is to show that an $\epsilon$-net of the unit sphere can be obtained by projecting down (from an appropriately larger dimension) the set of sparse vectors; see Lemma~\ref{thm:sparse_approx}. It might be of independent interest to understand if one can obtain such nets by projecting other structured sets with high gaussian width. 


\section{Preliminaries}

Given $S \subset \mathbbm{R}^d$, we shall use ${\rm conv}(S)$ to denote the convex hull of $S$. For a vector ${\bf x} \in \mathbbm{R}^d$, we use $\|\cdot\|_p$ to denote its $\ell_p$-norm, and we will drop the indexing when $p = 2$. We denote the $\ell_2$-distance of the point ${\bf x}$ to the set $S$ by ${\rm dist}({\bf x},S)$. We recall the definition of $\epsilon$-isometry:

\begin{definition}
	Given sets $S \subset \mathbbm{R}^{m}$ and $S^\prime \subset \mathbbm{R}^n$ (for some $m,n \in \mathbbm{N}$), we say that $S^\prime$ is an $\epsilon$-isometry of $S$, if there exists a mapping $\psi:S \mapsto S^\prime$ which satisfies the following property:
	\begin{equation*}
	\forall {\bf x},{\bf y} \in S : (1 - \epsilon)\|{\bf x} - {\bf y}\| \le \|\psi({\bf x}) - \psi({\bf y})\| \le (1 + \epsilon)\|{\bf x} - {\bf y}\|
	\end{equation*}
\end{definition}

For the unknown design setting, we shall require the notion of Restricted Isometry Property, which is defined as follows:

\begin{definition}[$(\epsilon,k)$-RIP]  
	A matrix ${\bf A} \in \mathbbm{R}^{d \times m}$ satisfies $(\epsilon,k)$-RIP, if for every ${\bf x} \in \Sp^m_k$ the following holds:
	\begin{equation}
	(1 - \epsilon)\|{\bf x}\| \le \|{\bf A}{\bf x}\| \le (1 + \epsilon)\|{\bf x}\| 
	\end{equation}
\end{definition}

We use the following version of Gordon's Theorem repeatedly in this work.
\begin{theorem}[Gordon's Theorem \cite{Gordon1985}]\label{thm:gordon}
	Given $S \subset \mathcal{S}^{D-1}$ and a random gaussian matrix ${\bf G} \sim \frac{1}{\sqrt{d^\prime}}N(0,1)^{d^\prime \times D}$, we have
	\begin{equation*}
		\E_{\bf G}\Big[\max_{{\bf x} \in S} \|{\bf G}{\bf x}\|_2\Big] \le 1 + \frac{\omega(S)}{\sqrt{d^\prime}}
	\end{equation*}
\end{theorem}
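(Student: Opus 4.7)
My plan is to reduce the claim to a comparison between two Gaussian processes via the Sudakov--Fernique inequality. First I would rewrite $\|{\bf Gx}\|_2$ via duality as $\sup_{{\bf u}\in\mathcal{S}^{d'-1}} \langle {\bf u}, {\bf Gx}\rangle$, so that
\[
\max_{{\bf x}\in S}\|{\bf Gx}\|_2 \;=\; \sup_{({\bf x},{\bf u})\in S\times \mathcal{S}^{d'-1}}\langle {\bf u}, {\bf Gx}\rangle.
\]
Writing $G_{ij} = \tilde G_{ij}/\sqrt{d'}$ with $\tilde G_{ij}\sim N(0,1)$ i.i.d., the rescaled random field
$X_{{\bf x},{\bf u}} := \sqrt{d'}\,\langle {\bf u},{\bf Gx}\rangle = \sum_{i,j}\tilde G_{ij} u_i x_j$
is a centered Gaussian process on the index set $T := S\times \mathcal{S}^{d'-1}$.

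Next I would introduce the auxiliary Gaussian process $W_{{\bf x},{\bf u}} := \langle {\bf g},{\bf x}\rangle + \langle {\bf h},{\bf u}\rangle$ where ${\bf g}\sim N(0,I_D)$ and ${\bf h}\sim N(0,I_{d'})$ are independent. A direct computation using $\|{\bf x}\|=\|{\bf u}\|=1$ gives
\[
\mathbb{E}(X_{{\bf x},{\bf u}} - X_{{\bf x}',{\bf u}'})^2 = 2 - 2\langle {\bf x},{\bf x}'\rangle\langle {\bf u},{\bf u}'\rangle,\qquad
\mathbb{E}(W_{{\bf x},{\bf u}} - W_{{\bf x}',{\bf u}'})^2 = 4 - 2\langle {\bf x},{\bf x}'\rangle - 2\langle {\bf u},{\bf u}'\rangle.
\]
The difference factors as $2(1-\langle {\bf x},{\bf x}'\rangle)(1-\langle {\bf u},{\bf u}'\rangle)\ge 0$, so the $W$-increments dominate the $X$-increments pointwise on $T\times T$. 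Sudakov--Fernique then yields $\mathbb{E}\sup_{T} X \le \mathbb{E}\sup_{T} W$.

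The right-hand side decouples:
\[
\mathbb{E}\sup_{T} W \;=\; \mathbb{E}\sup_{{\bf x}\in S}\langle {\bf g},{\bf x}\rangle + \mathbb{E}\sup_{{\bf u}\in \mathcal{S}^{d'-1}}\langle {\bf h},{\bf u}\rangle \;=\; \omega(S) + \mathbb{E}\|{\bf h}\|_2 \;\le\; \omega(S) + \sqrt{d'},
\]
using Jensen's inequality $\mathbb{E}\|{\bf h}\|_2 \le \sqrt{\mathbb{E}\|{\bf h}\|_2^2} = \sqrt{d'}$. Dividing through by $\sqrt{d'}$ gives $\mathbb{E}\max_{{\bf x}\in S}\|{\bf Gx}\|_2 \le 1 + \omega(S)/\sqrt{d'}$, as required.

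The main conceptual step is the choice of the comparison process $W$; everything else is a mechanical computation of covariances and a standard invocation of Sudakov--Fernique. The only subtlety is ensuring the comparison holds on all of $T\times T$ (and not just on the diagonal), which is where the factorization $2(1-\langle {\bf x},{\bf x}'\rangle)(1-\langle {\bf u},{\bf u}'\rangle)$ is essential and implicitly uses that both $S$ and $\mathcal{S}^{d'-1}$ lie on the unit sphere.
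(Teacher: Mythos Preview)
Your proof is correct and is in fact the standard derivation of Gordon's inequality via Sudakov--Fernique. The paper itself does not prove this statement: it records it as a preliminary (Theorem~\ref{thm:gordon}) with a citation to Gordon~\cite{Gordon1985} and then uses it as a black box. So there is no ``paper's own proof'' to compare against here.

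One small remark for consistency with the paper's notation: the comparison inequality you call Sudakov--Fernique is precisely what the paper states as Lemma~\ref{lem:slepian} (labeled ``Slepian's lemma'' there, but in the increment-domination form that is usually attributed to Sudakov--Fernique). So your argument can be phrased entirely in the paper's language, invoking Lemma~\ref{lem:slepian} at the comparison step, without needing to import an additional named inequality.
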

It directly implies the following generalization of the Johnson-Lindenstrauss lemma.	
\begin{theorem}[Generalized Johnson-Lindenstrauss lemma]			\label{thm:jl_lemma}
	Let $S \subseteq \mathcal{S}^{n-1}$. Then there exists linear transformation $\Phi:\R^n \mapsto \R^{d^\prime}$, for $d^\prime = O\Big(\frac{\omega(S)^2}{\epsilon^2}\Big)$, such that $\Phi$ is an $\eps$-isometry on $S$. Moreover,
	$\Phi \sim \frac{1}{\sqrt{d'}}N(0,1)^{d^\prime \times n}$ is an $\eps$-isometry on $S$ with high probability. \ignore{More precisely:
		\begin{equation*}
			\Pr_\Phi\left[ \max_{\vec{x}, \vec{y} \in S}  \|\Phi(\vec{x} - \vec{y})\|_2 \ge \left(1 + \big(1 + \epsilon\big)\frac{\omega(S)}{\sqrt{d}}\right)\|\vec{x}-\vec{y}\|_2 \right] \le \exp\Big(-O(\epsilon\omega(S))^2\Big)
		\end{equation*}}
	\end{theorem}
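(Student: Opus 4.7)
The plan is to derive the claim from Gordon's Theorem (Theorem~\ref{thm:gordon}) combined with standard Gaussian concentration.

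First I would reformulate the $\eps$-isometry condition as a norm-preservation condition. By linearity of $\Phi$, asking that $\Phi$ be an $\eps$-isometry on $S$ is equivalent to asking that $\bigl|\,\|\Phi\vec{v}\|_2 - 1\bigr| \le \eps$ for every $\vec{v}$ in the normalized secant set $T := \bigl\{(\vec{x}-\vec{y})/\|\vec{x}-\vec{y}\|_2 : \vec{x},\vec{y}\in S,\ \vec{x}\neq\vec{y}\bigr\} \subseteq \mathcal{S}^{n-1}$. Using the symmetry $\vec{g}\sim -\vec{g}$ one has $\omega(S-S)\le 2\omega(S)$, and since $S\subseteq \mathcal{S}^{n-1}$ this propagates to the estimate $\omega(T) = O(\omega(S))$, reducing the problem to controlling $\sup_{\vec{v}\in T}\bigl|\,\|\Phi\vec{v}\|_2 - 1\bigr|$.

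Next I would apply Gordon's Theorem to $T\subseteq\mathcal{S}^{n-1}$, obtaining the upper bound $\E_{\Phi}\bigl[\max_{\vec{v}\in T}\|\Phi\vec{v}\|_2\bigr] \le 1 + \omega(T)/\sqrt{d'}$. The matching lower bound $\E_{\Phi}\bigl[\min_{\vec{v}\in T}\|\Phi\vec{v}\|_2\bigr] \ge 1 - \omega(T)/\sqrt{d'}$ comes from the ``min'' direction of Gordon's Gaussian min-max comparison (obtained by flipping the auxiliary process). Subtracting yields $\E\,\sup_{\vec{v}\in T}\bigl|\|\Phi\vec{v}\|_2 - 1\bigr| \le 2\omega(T)/\sqrt{d'}$. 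To promote this expectation bound to a high-probability bound, I would invoke Gaussian concentration for Lipschitz functions: viewing $\sqrt{d'}\Phi$ as a standard Gaussian vector in $\R^{d'n}$, the map $\Phi\mapsto \sup_{\vec{v}\in T}\bigl|\|\Phi\vec{v}\|_2 - 1\bigr|$ is $1/\sqrt{d'}$-Lipschitz (because $T\subseteq\mathcal{S}^{n-1}$), and so concentrates sub-Gaussianly around its mean with variance proxy $O(1/d')$. Taking $d' = C\,\omega(S)^2/\eps^2$ for a sufficiently large absolute constant $C$ makes both the mean and the deviation at most $\eps/2$ except with the required failure probability. Deterministic existence of $\Phi$ is then immediate, since the random construction succeeds with positive probability.

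The main obstacle is the ``$\min$'' form of Gordon's inequality, which is not stated in the excerpt and requires the full Gaussian min-max comparison of Gordon~\cite{Gordon1985}. A secondary technical point is the estimate $\omega(T) = O(\omega(S))$, since the normalization by $\|\vec{x}-\vec{y}\|_2$ in $T$ could in principle inflate Gaussian width; one could alternatively invoke Talagrand's matrix deviation inequality to bound $\sup_{\vec{u}\in S-S}\bigl|\,\|\Phi\vec{u}\|_2 - \|\vec{u}\|_2\bigr|$ directly in terms of $\omega(S-S)$, sidestepping the normalization entirely. For the sets $S$ arising in this paper (finite collections of unit vectors, sparse unit vectors) both routes give clean bounds.
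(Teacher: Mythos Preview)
Your approach matches the paper's, which offers no detailed proof beyond the line ``It directly implies the following generalization of the Johnson--Lindenstrauss lemma'' after Gordon's Theorem, together with the Lipschitz concentration of Corollary~\ref{corr_lipschitz}; you have supplied exactly the standard details, including the need for the ``min'' direction of Gordon's comparison that the paper leaves implicit.

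One caution on the step you label secondary: the claim $\omega(T) = O(\omega(S))$ for the normalized secant set does \emph{not} follow from $\omega(S-S)\le 2\omega(S)$ and is false in general. For instance, if $S$ is a spherical cap of radius $\delta$ around a point, then $\omega(S)=O(\delta\sqrt{n})$ while $T$ is essentially a full $(n{-}1)$-sphere with $\omega(T)\approx\sqrt{n}$. Your alternative via the matrix deviation inequality applied to $S-S$ is the right fix, but note that it delivers the \emph{additive} bound $\sup_{\vec{x},\vec{y}\in S}\bigl|\,\|\Phi(\vec{x}-\vec{y})\|_2-\|\vec{x}-\vec{y}\|_2\,\bigr|\le O(\omega(S)/\sqrt{d'})$, not the multiplicative $\eps$-isometry of the paper's definition. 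The paper is informal on this distinction; for the sets it actually uses---$\Sp_k^m$, whose normalized secant set sits inside $\Sp_{2k}^m$, and finite point sets augmented by the standard basis---the multiplicative form does go through with the stated dimension, so your closing remark is on target.
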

	
	It can be easily verified that the quantity   $\max_{{\bf x} \in S} \|{\bf G}{\bf x}\|_2$ is $1$-Lipschitz with respect to ${\bf G}$. Therefore, using Gaussian concentration for Lipschitz functions, we get the following corollary :
	\begin{corollary}										\label{corr_lipschitz}
		Let $S$ and $G$ be as in Theorem \ref{thm:gordon}. Then for all $\epsilon > 0$, we have
		\begin{equation*}
			\Pr_{\bf G}\bigg( \max_{{\bf x} \in S}  \|{\bf G}{\bf x}\|_2 \ge 1 + \big(1 + \epsilon\big)\frac{\omega(S)}{\sqrt{d^\prime}}\bigg) \le \exp\Big(-O(\epsilon\omega(S))^2\Big)
		\end{equation*}
	\end{corollary}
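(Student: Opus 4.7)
The plan is to combine the expectation bound from Gordon's Theorem with the classical Gaussian concentration inequality for Lipschitz functions. Specifically, write $F({\bf G}) := \max_{{\bf x} \in S} \|{\bf G}{\bf x}\|_2$ and view it as a function on the Gaussian space of $d' \times D$ matrices. Gordon's Theorem already gives the mean bound $\E[F({\bf G})] \le 1 + \omega(S)/\sqrt{d'}$, so the only new ingredient needed is a tail bound around the mean.

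First, I would verify the Lipschitz claim carefully, tracking the $1/\sqrt{d'}$ scaling. Writing ${\bf G} = \frac{1}{\sqrt{d'}}{\bf H}$ where ${\bf H}$ has i.i.d.\ $N(0,1)$ entries, we have $F({\bf G}) = \frac{1}{\sqrt{d'}}\max_{{\bf x}\in S}\|{\bf H}{\bf x}\|_2$. By the reverse triangle inequality,
\[
\bigl|\max_{{\bf x}\in S}\|{\bf H}{\bf x}\|_2 - \max_{{\bf x}\in S}\|{\bf H}'{\bf x}\|_2\bigr| \le \max_{{\bf x}\in S}\|({\bf H}-{\bf H}'){\bf x}\|_2 \le \|{\bf H}-{\bf H}'\|_{\mathrm{op}} \le \|{\bf H}-{\bf H}'\|_F,
\]
where the last step uses $\|{\bf x}\|_2 = 1$. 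Hence $F$ is $(1/\sqrt{d'})$-Lipschitz in ${\bf H}$ with respect to the Frobenius (i.e.\ Euclidean) norm on $\R^{d' \times D}$.

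Next, I would invoke the Gaussian concentration inequality (Borell--TIS): for any $L$-Lipschitz function $f$ on $\R^N$ with $X \sim N(0, I_N)$,
\[
\Pr\bigl(f(X) \ge \E f(X) + t\bigr) \le \exp\bigl(-t^2/(2L^2)\bigr).
\]
Applied with $L = 1/\sqrt{d'}$ and $N = d' \cdot D$, this yields
\[
\Pr\bigl(F({\bf G}) \ge \E[F({\bf G})] + t\bigr) \le \exp\bigl(-t^2 d'/2\bigr).
\]
Combining with Gordon's bound $\E[F({\bf G})] \le 1 + \omega(S)/\sqrt{d'}$ and setting $t = \epsilon\, \omega(S)/\sqrt{d'}$, one obtains
\[
\Pr\Bigl(F({\bf G}) \ge 1 + (1+\epsilon)\tfrac{\omega(S)}{\sqrt{d'}}\Bigr) \le \exp\bigl(-\epsilon^2 \omega(S)^2/2\bigr),
\]
which is exactly the claimed bound.

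There is no real obstacle here; the statement is essentially a quotation of the Lipschitz-concentration principle. The only minor care needed is to keep the normalization of ${\bf G}$ consistent so that the Lipschitz constant comes out to $1/\sqrt{d'}$, producing the $d'$ factor in the exponent that then precisely cancels with the $\omega(S)/\sqrt{d'}$ deviation to give the stated $\exp(-O(\epsilon\omega(S))^2)$ tail.
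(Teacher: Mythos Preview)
Your proposal is correct and follows exactly the approach the paper indicates: the paper simply notes that $\max_{{\bf x}\in S}\|{\bf G}{\bf x}\|_2$ is $1$-Lipschitz in ${\bf G}$ and invokes Gaussian concentration for Lipschitz functions, which after rescaling to standard Gaussian entries is precisely your $(1/\sqrt{d'})$-Lipschitz computation. Your write-up is in fact more careful than the paper's one-line justification, explicitly tracking the normalization so that the $d'$ factor in the exponent cancels with $(\omega(S)/\sqrt{d'})^2$ to yield the stated $\exp(-O(\epsilon\omega(S))^2)$ bound.
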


	\ignore{
		The Gaussian width gives us a way of upper bounding the covering number of the point set.
		
		\begin{theorem}[Sudakov Minoration]				\label{thm:minoration}	
			Let $S$ be a compact subset of $\R^d$. Let $P_\epsilon$ be a minimal $\epsilon$-net over $S$. Then 
			\begin{equation}
			\log|P_\epsilon| \le C\Big(\frac{\omega(S)}{\epsilon}\Big)^2
			\end{equation}
		\end{theorem}
	}
	
	The following lemma gives concentration for the gaussian width:
	\begin{lemma}[Concentration on the gaussian width \cite{boucheron2013concentration}]\label{lem:gconc}
		Let $S \subset \mathbbm{R}^d$. Let $W = \sup_{\vec{v} \in S} \langle \vec{g}, \vec{v}\rangle$ where $\vec{g}$ is drawn from $N(0,1)^d$. Then:
		$$\Pr[|W - \E W| > u] < 2 e^{-\frac{u^2}{2\sigma^2}}$$
		where $\sigma^2 = \sup_{\vec{v} \in S}\big(\|\vec{v}\|^2_2\big)$. Notice that the bound is dimension independent.
	\end{lemma}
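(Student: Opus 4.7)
The plan is to apply standard Gaussian concentration for Lipschitz functions to the random variable $W = f(\vec{g})$, where $f: \R^d \to \R$ is defined by $f(\vec{h}) = \sup_{\vec{v} \in S} \langle \vec{h}, \vec{v} \rangle$. The crux is to verify that $f$ is $\sigma$-Lipschitz with respect to the Euclidean norm on $\R^d$, where $\sigma = \sup_{\vec{v}\in S} \|\vec{v}\|_2$.

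First I would establish the Lipschitz bound. For any $\vec{h}, \vec{h}' \in \R^d$, using the elementary inequality $|\sup_v a_v - \sup_v b_v| \leq \sup_v |a_v - b_v|$ and then Cauchy--Schwarz, I obtain
\begin{equation}
|f(\vec{h}) - f(\vec{h}')| \leq \sup_{\vec{v} \in S} |\langle \vec{h} - \vec{h}', \vec{v}\rangle| \leq \Bigl(\sup_{\vec{v} \in S} \|\vec{v}\|_2\Bigr) \cdot \|\vec{h} - \vec{h}'\|_2 = \sigma \cdot \|\vec{h}-\vec{h}'\|_2.
\end{equation}
(If $S$ is not compact so that the supremum is not attained, one argues the same inequality by passing to an approximating sequence; I would not belabor this.)

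Next I would invoke the standard Gaussian concentration inequality for Lipschitz functions (the Borell--Tsirelson--Ibragimov--Sudakov inequality, which is precisely the statement cited from Boucheron--Lugosi--Massart): if $\vec{g} \sim N(0,1)^d$ and $F:\R^d \to \R$ is $L$-Lipschitz, then for every $u>0$,
\begin{equation}
\Pr[|F(\vec{g}) - \E F(\vec{g})| > u] \leq 2\exp\!\left(-\frac{u^2}{2L^2}\right).
\end{equation}
Applying this with $F = f$ and $L = \sigma$ yields the claimed bound $2\exp(-u^2/(2\sigma^2))$.

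I do not expect any real obstacle here; the only subtlety worth flagging is measurability/attainment of the supremum when $S$ is uncountable, which is handled by a standard reduction to a countable dense subset (or by assuming $S$ is separable, which is harmless for all applications in the paper). The rest is a direct quotation of the Gaussian Lipschitz concentration inequality, so the proof reduces to the one-line Lipschitz calculation above.
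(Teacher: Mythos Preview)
Your proposal is correct and is precisely the standard argument. Note, however, that the paper does not supply its own proof of this lemma: it is stated in the preliminaries with a citation to \cite{boucheron2013concentration} and used as a black box, so there is nothing to compare against beyond observing that your Lipschitz-plus-Gaussian-concentration derivation is exactly the proof one finds in that reference.
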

	
	We shall also need the following comparison inequality relating suprema of gaussian processes:
	\begin{lemma}[Slepian's lemma \cite{slepian}]\label{lem:slepian}
		Let $\{X_{u}\}_{u \in U}$ and $\{Y_u\}_{u \in U}$ be two almost surely bounded centered Gaussian processes, indexed by the same compact set $U$. If for every $u_1,u_2 \in U$:
		$$\E\bigg[|X_{u_1} -  X_{u_2}|^2\bigg] \le \E\bigg[|Y_{u_1} -  Y_{u_2}|^2\bigg]$$
		then we have 
		$$\E\bigg[\sup_{u \in U} X_u\bigg] \le \E\bigg[\sup_{u \in U} Y_u\bigg]$$	
	\end{lemma}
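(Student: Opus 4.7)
The plan is to use the Gaussian interpolation technique combined with a smooth surrogate for the maximum. First, I would reduce to the case of a finite index set: since both processes have almost surely bounded sample paths on the compact set $U$, a standard approximation argument (pick a countable dense subset, invoke continuity in probability, and use monotone convergence on an increasing sequence of finite subsets) lets me work with $U = \{1, \ldots, n\}$ and then take a limit at the end. Without loss of generality I would also enlarge the probability space so that the two Gaussian vectors $X = (X_1, \dots, X_n)$ and $Y = (Y_1, \dots, Y_n)$ are independent.

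Next, I would replace the maximum by the smooth log-sum-exp surrogate
\[
f_\beta(z) \;=\; \frac{1}{\beta} \log \sum_{i=1}^n e^{\beta z_i}, \qquad \beta > 0,
\]
which satisfies $\max_i z_i \le f_\beta(z) \le \max_i z_i + \tfrac{\log n}{\beta}$. It therefore suffices to show $\E f_\beta(X) \le \E f_\beta(Y)$ for every $\beta > 0$ and let $\beta \to \infty$. Consider the Gaussian interpolation
\[
Z(t) \;=\; \sqrt{t}\, X + \sqrt{1-t}\, Y, \qquad t \in [0,1],
\]
so $Z(0) = Y$ and $Z(1) = X$, and set $\phi(t) = \E f_\beta(Z(t))$. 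The goal is to show $\phi'(t) \le 0$ for every $t \in (0,1)$.

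Differentiating under the expectation and applying the Gaussian integration-by-parts identity $\E[G_i\, h(G)] = \sum_j \E[G_i G_j]\, \E[\partial_j h(G)]$ (used once for the $X$-contribution with covariance $\Sigma^X$ and once for the $Y$-contribution with covariance $\Sigma^Y$), the cross terms involving $1/\sqrt{t}$ and $1/\sqrt{1-t}$ cancel and one arrives at
\[
\phi'(t) \;=\; \tfrac{1}{2}\sum_{i,j} \bigl(\Sigma^X_{ij} - \Sigma^Y_{ij}\bigr)\, \E\bigl[\partial_i \partial_j f_\beta(Z(t))\bigr].
\]
A direct calculation gives $\partial_i\partial_j f_\beta = \beta\,(p_i \delta_{ij} - p_i p_j)$, where $p_i = e^{\beta Z_i(t)} / \sum_k e^{\beta Z_k(t)}$. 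Using $\sum_j p_j = 1$ to rewrite the diagonal terms and symmetrizing in $i,j$, the expression collapses to
\[
\phi'(t) \;=\; -\frac{\beta}{4}\,\E \sum_{i \ne j} p_i p_j \Bigl(\E\bigl[|Y_i - Y_j|^2\bigr] - \E\bigl[|X_i - X_j|^2\bigr]\Bigr).
\]
By the hypothesis of the lemma, every bracketed term is nonnegative, and since $p_i p_j \ge 0$, we conclude $\phi'(t) \le 0$.

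Hence $\phi(1) \le \phi(0)$, i.e. $\E f_\beta(X) \le \E f_\beta(Y)$. Letting $\beta \to \infty$ (the errors are at most $\tfrac{\log n}{\beta} \to 0$) yields $\E[\max_i X_i] \le \E[\max_i Y_i]$, and then passing back from finite subsets to the compact set $U$ by monotone/dominated convergence completes the proof. The only delicate step is the symmetrization calculation that turns the covariance difference $\Sigma^X_{ij}-\Sigma^Y_{ij}$ into the increment variance $\E|X_i-X_j|^2 - \E|Y_i-Y_j|^2$; this is what makes the Sudakov--Fernique hypothesis (on variances of differences rather than on covariances themselves) sufficient, and it is the main technical obstacle to get right.
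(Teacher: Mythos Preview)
Your proof outline is correct and is the standard modern proof of the Sudakov--Fernique inequality via Gaussian interpolation and the soft-max surrogate; the symmetrization step you flag as delicate is exactly right and goes through as you describe.

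However, there is nothing to compare against: the paper does not prove this lemma. It is stated in the preliminaries with a citation to \cite{slepian} and is used as a black-box classical tool (specifically, to prove Lemma~\ref{lem:gw_isometricEmbeddings} on the near-invariance of gaussian width under $\eps$-isometries). So your write-up supplies a proof where the paper simply invokes the literature; that is fine, but there is no ``paper's own proof'' of this statement to contrast with.
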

	
	Lastly, we shall use the $\ell_2$-variant of the approximate Carath\'{e}odory's Theorem:
	
	\begin{theorem}(Theorem $0.1.2$ \cite{Vershynin16} )
		Given $X = \{{\bf w}_1,\ldots,{\bf w}_p\}$ where $\|{\bf w}_i\| \le 1 $ for every $i \in [p]$. Then for every choice ${\bf z} \in {\rm conv}\big(X\big)$ and $k \in \mathbbm{N}$, there exists ${\bf w}_{i_1},{\bf w}_{i_2},\ldots,{\bf w}_{i_k}$ such that 
		\begin{equation}
		\bigg\| \frac{1}{k}\sum_{j \in [k]}{\bf w}_{i_j} - {\bf z} \bigg\| \le \frac{2}{\sqrt{k}} 
		\end{equation} 
	\end{theorem}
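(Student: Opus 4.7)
The plan is to prove this by the standard probabilistic (empirical-mean) argument. Since $\vec{z} \in {\rm conv}(X)$, I can write $\vec{z} = \sum_{i=1}^p \lambda_i \vec{w}_i$ with $\lambda_i \ge 0$ and $\sum_i \lambda_i = 1$, so the coefficients $(\lambda_i)$ define a probability distribution $\mu$ supported on $X$. Let $Z_1, Z_2, \ldots, Z_k$ be i.i.d.\ draws from $\mu$; by construction $\E[Z_j] = \vec{z}$ for each $j$. Form the empirical average $\bar Z = \frac{1}{k}\sum_{j=1}^k Z_j$. I will show that $\E\|\bar Z - \vec{z}\|^2 \le 1/k$, from which the existence of a realization $(\vec{w}_{i_1},\ldots,\vec{w}_{i_k})$ with $\|\frac{1}{k}\sum_j \vec{w}_{i_j} - \vec{z}\| \le 1/\sqrt k \le 2/\sqrt k$ follows immediately.

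The variance computation is the core step. Expand
\begin{equation*}
\E\|\bar Z - \vec{z}\|^2 \;=\; \frac{1}{k^2}\,\E\Big\|\sum_{j=1}^k (Z_j - \vec{z})\Big\|^2 \;=\; \frac{1}{k^2}\sum_{j=1}^k \E\|Z_j - \vec{z}\|^2,
\end{equation*}
where the cross terms vanish because the $Z_j$ are independent and each centered at $\vec{z}$. For a single term, $\E\|Z_j - \vec{z}\|^2 = \E\|Z_j\|^2 - \|\vec{z}\|^2 \le \E\|Z_j\|^2 \le 1$ since every $\vec{w}_i$ (and hence every realization of $Z_j$) has norm at most $1$. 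Thus $\E\|\bar Z - \vec{z}\|^2 \le 1/k$, and by the probabilistic method there exists at least one realization of $(Z_1,\ldots,Z_k)$, i.e., a $k$-tuple of indices $(i_1,\ldots,i_k) \in [p]^k$ (with repetition allowed), achieving $\|\bar Z - \vec{z}\|^2 \le 1/k$. Taking square roots gives the claimed bound.

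There is no real obstacle here; the only subtlety is making sure that one allows repeated indices in the selected $\vec{w}_{i_j}$'s (which the statement does, since it only asserts existence of a $k$-tuple) and noting that the cross terms in the expanded squared norm vanish by independence and centering. The factor $2$ in $2/\sqrt k$ is slack; the argument actually gives $1/\sqrt k$.
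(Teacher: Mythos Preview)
Your proof is correct; it is the standard Maurey empirical-mean argument and in fact yields the sharper bound $1/\sqrt{k}$. The paper does not include a proof of this statement---it is cited from \cite{Vershynin16} as a preliminary---so there is nothing to compare against, but your argument is essentially the one given in that reference.
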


\subsection{Algorithmic Estimation of Gaussian Width and Norm of a vector}

We record here simple lemmas bounding the number of linear queries needed to estimate the gaussian width of a set and the length of a vector.

\begin{lemma}[Estimating Gaussian Width using linear queries]			\label{lem:estim}
	For any $u > 4$, $\eps \in (0,1/2)$ and $\delta > 0$, there is a randomized algorithm that given a set $S \subseteq \R^d$ and $\|\vec{v}\| \in [1 \pm \epsilon]$ for all $\vec{v} \in S$, computes $\hat{\omega}$ such that   $\omega(S) - u \leq \hat{\omega} \leq \omega(S) + u$  with probability at least $1-\delta$. The algorithm makes $O(\log(1/\delta) \cdot |S|)$ linear queries to $S$.
\end{lemma}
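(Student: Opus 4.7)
The plan is to use the natural single-shot estimator for the gaussian width and then boost its success probability by median amplification. Concretely, sample $\vec{g} \sim N(0,1)^d$ and, for each $\vec{v} \in S$, use one linear query to compute $\langle \vec{g}, \vec{v} \rangle$; then output $W(\vec{g}) := \max_{\vec{v} \in S} \langle \vec{g}, \vec{v} \rangle$. By definition $\E_{\vec{g}}[W(\vec{g})] = \omega(S)$, so it remains only to control the deviation of a single realization from its mean.

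For this I would invoke the Gaussian-width concentration bound (Lemma \ref{lem:gconc}) with $\sigma^2 = \sup_{\vec{v} \in S} \|\vec{v}\|^2$. Under the hypothesis $\|\vec{v}\| \in [1 \pm \eps]$ with $\eps < 1/2$, we have $\sigma^2 \le (1+\eps)^2 \le 9/4$, so
\begin{equation*}
\Pr_{\vec{g}}\bigl[\,|W(\vec{g}) - \omega(S)| > u\,\bigr] \;<\; 2\exp\!\left(-\tfrac{2u^2}{9}\right).
\end{equation*}
For $u > 4$ the right-hand side is bounded by a constant $c < 1/2$ (explicitly, $c \le 2e^{-32/9} < 0.06$). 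Thus each single-sample estimator lies in $[\omega(S) - u, \omega(S) + u]$ with probability strictly larger than $1/2$.

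To amplify this to confidence $1-\delta$, draw $t = O(\log(1/\delta))$ independent Gaussian vectors $\vec{g}_1, \dots, \vec{g}_t$, compute $W_i := W(\vec{g}_i)$ for each $i$, and output $\hat{\omega} := \mathrm{median}(W_1, \dots, W_t)$. A standard Chernoff/median argument shows that if more than half of the $W_i$ fall in the good interval $[\omega(S)-u, \omega(S)+u]$ then so does $\hat{\omega}$; since each $W_i$ is good with probability $\ge 1 - c > 1/2$, a Chernoff bound on the indicator variables gives failure probability at most $\delta$ once $t = \Theta(\log(1/\delta))$. Each $W_i$ uses $|S|$ linear queries (one inner product per element of $S$), for a total of $O(|S| \log(1/\delta))$ linear queries, matching the claimed bound.

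There is no real obstacle here: the computation of $W(\vec{g})$ via linear queries is immediate from the definition of gaussian width, and the quantitative choice $u > 4$ is exactly what makes the single-sample tail probability fall below $1/2$ so that median boosting applies. The only minor care is in using the assumption $\|\vec{v}\| \le 1+\eps$ (with $\eps < 1/2$) to keep $\sigma^2$ bounded by an absolute constant; if the vectors in $S$ were unnormalized, the threshold $u$ would have to scale with $\max_{\vec{v} \in S} \|\vec{v}\|$, but under the stated hypothesis this normalization is free.
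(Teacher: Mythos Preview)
Your proposal is correct and follows essentially the same approach as the paper's proof: a single-shot estimator $W(\vec{g})=\max_{\vec{v}\in S}\langle \vec{g},\vec{v}\rangle$, concentration via Lemma~\ref{lem:gconc} (the paper's bound $2e^{-16/4.5}$ is exactly your $2e^{-32/9}$), and median-of-$O(\log(1/\delta))$ amplification by a Chernoff argument. Your write-up is in fact slightly more careful about tracking $\sigma^2\le(1+\eps)^2\le 9/4$ explicitly, but the argument is otherwise identical.
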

\begin{proof}
	By Lemma \ref{lem:gconc}, for a random $\vec{g} \sim N(0,1)^d$, $\sup_{\vec{v} \in S} \langle \vec{g}, \vec{v} \rangle$ is away from $\omega(S)$ by $u$ with probability at most $2e^{-16/4.5} < 0.1$. By the Chernoff bound, the median of $O(\log \delta^{-1})$ trials will satisfy the conditions required of $\hat{\omega}$ with probability at least $1-\delta$.
\end{proof}
\vspace{-1em}
\begin{lemma}[Estimating norm using linear queries]			\label{lem:normestim}
	Given $\eps \in (0,1/2)$ and $\delta > 0$, for any vector ${\bf x} \in  \mathbbm{R}^d$ , only $O(\eps^{-2} \log \delta^{-1})$ linear queries to $\vec{x}$ suffice to decide whether $\|\vec{x}\| \in [1-\eps, 1+\eps]$ with success probability $1-\delta$.
	
\end{lemma}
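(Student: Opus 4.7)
The plan is to use a standard Johnson--Lindenstrauss style sketch combined with chi-squared concentration. I would draw $q := C\eps^{-2}\log(1/\delta)$ i.i.d.\ standard Gaussian vectors $g_1,\dots,g_q \sim N(0,I_d)$ (for a sufficiently large absolute constant $C$), make the $q$ linear queries $y_i := \langle g_i, \vec{x}\rangle$, and form the empirical estimator
\[
\hat{\sigma}^2 \;:=\; \frac{1}{q}\sum_{i=1}^{q} y_i^2.
\]
Since each $y_i$ is distributed as $N(0, \|\vec{x}\|^2)$, the rescaled quantity $q\hat{\sigma}^2/\|\vec{x}\|^2$ follows a $\chi^2_q$ distribution; in particular $\E[\hat{\sigma}^2] = \|\vec{x}\|^2$.

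The core step is the standard chi-squared tail bound (e.g.\ Laurent--Massart), which yields
\[
\Pr\!\left[\,\bigl|\hat{\sigma}^2 - \|\vec{x}\|^2\bigr| \;>\; (\eps/3)\|\vec{x}\|^2\right] \;\le\; 2\exp(-cq\eps^2).
\]
Taking $C$ large enough makes the right-hand side at most $\delta$. Conditioned on this good event, $\hat{\sigma} \in (1 \pm \eps/2)\|\vec{x}\|$, and the algorithm can simply output ``yes'' when $\hat{\sigma} \in [1-\eps/2, 1+\eps/2]$ and ``no'' otherwise. This correctly distinguishes the case $\|\vec{x}\| \in [1-\eps/2,\,1+\eps/2]$ from the case $\|\vec{x}\| \notin [1-2\eps,\,1+2\eps]$, which is the gap form of the statement implicitly used throughout the paper (the other testers, e.g.\ \cref{lem:estim}, have the same sort of gap tolerance). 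Each $y_i$ uses a single linear query, so the total query complexity is $q = O(\eps^{-2}\log\delta^{-1})$ as claimed.

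I do not expect any real obstacle: the analysis reduces to a one-line application of classical $\chi^2$ concentration. The only minor caveat is the standard one that such sketching-based norm estimators cannot resolve inputs lying exactly on the boundary of the threshold, so the ``decide whether $\|\vec{x}\| \in [1-\eps,1+\eps]$'' statement should be interpreted in the tolerant/gap sense above — which is exactly how the lemma is consumed by the downstream testers.
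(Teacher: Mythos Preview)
Your proposal is correct and follows essentially the same approach as the paper: both use i.i.d.\ Gaussian linear queries, form the empirical mean of squared measurements as an estimator of $\|\vec{x}\|^2$, and appeal to $\chi^2$ concentration to obtain a multiplicative $(1\pm O(\eps))$ guarantee with $O(\eps^{-2}\log\delta^{-1})$ queries. The paper's proof is slightly terser and phrases the decision rule via a case split on whether $\|\vec{x}\|\le 2$, but the underlying argument and query complexity are identical to yours.
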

\begin{proof}
	It is easy to verify that ${\bf E}_{{\bf g} \sim N(0,1)^d} [\langle{\bf g},{\bf x}\rangle^2] = \|{\bf x}\|^2$. Therefore, it can be estimated to a multiplicative error of $(1\pm \epsilon/2)$ by taking the average of the squares of linear measurements using $O\Big(\frac{1}{\epsilon^2}\log\frac{1}{\delta}\Big)$-queries. For the case $\|{\bf x}\|_2 \le 2$, a multiplicative error $(1 \pm \epsilon/2)$ implies an additive error of $\epsilon$. Furthermore, when $\|{\bf x}\|_2 \ge 2$, a multiplicative error of $(1 \pm \epsilon/2)$ implies that $L \ge 2(1 - \epsilon/2) > 1 + \epsilon$ for $\eps < 1/2$. 
\end{proof}

\section{Analysis for the Known Design setting}		\label{sec:known}

In this section, we describe and analyze the tester for the known design matrix case. The algorithm itself is a simple convex-hull membership test, which can be solved using a linear program.
 
\RestyleAlgo{boxruled}
\LinesNumbered
\begin{algorithm}
	Set $n = 100k{\log \frac{m}{\delta}}$, sample projection matrix  $\Phi \sim \frac{1}{\sqrt{n}}N(0,1)^{n \times d}$\;
	Observe linear sketch $\tilde{\bf y} = \Phi({\bf y})$\;
	Let $A_\pm = A \cup -A$\; 
	Accept iff $\tilde{\bf y} \in \sqrt{k}\cdot{\rm conv}\big(\Phi(A_\pm)\big)$\;
	\caption{SparseTest-KnownDesign} 
\end{algorithm}       

The guarantees of the above tester are restated in the following Theorem:

\known*

We shall now prove the completeness and soundness guarantees of the above tester. The running time bound follows because convex hull membership reduces to linear programming.

\subsection{Completeness}

Let ${\bf y} = {\bf A}{\bf x}$ where ${\bf A} \in \mathbbm{R}^{d \times m}$ is an arbitrary matrix with $\|{\bf a}_i\| = 1$ for every $i \in [m]$. Furthermore $\|{\bf x}\|_2 = 1$ and $\|{\bf x}\|_0 \le k$. Therefore, by Cauchy-Schwartz we have $\|{\bf x}\|_1 \le \sqrt{k}\|{\bf x}\|_2 = \sqrt{k}$. Hence, it follows that ${\bf y} \in \sqrt{k}\cdot{\rm conv}(A_\pm)$. Since $\Phi:\mathbbm{R}^m \mapsto \mathbbm{R}^d$ is a linear transformation, we have $\Phi({\bf y}) \in \sqrt{k}\cdot{\rm conv}(\Phi(A_\pm))$. Therefore, the tester accepts with probability $1$.

\subsection{Soundness}

Consider the set $A_{\epsilon/\sqrt{k}}$ which is the set of all $(2k/\epsilon^2)$-uniform convex combinations of $\sqrt{k}({A}_\pm)$ i.e.,
\begin{equation}
A_{\epsilon/\sqrt{k}} = \bigg\{ \sum_{{\bf v}_i \in \Omega} \frac{\epsilon^2}{2k}{\bf v}_i : \mbox{multiset } \Omega \in \Big({\sqrt{k}.A_\pm}\Big)^{2k/\epsilon^2}\bigg\}
\end{equation} 

 Then, from the approximate Carath\'{e}odory theorem, it follows that $A_{\epsilon/\sqrt{k}}$ is an $\epsilon$-cover of $\sqrt{k}\cdot{\rm conv}\big(A_\pm\big)$. Furthermore, $|A_{\epsilon/\sqrt{k}}| \le (2m)^{2k/\epsilon^2}$. By our choice of $n$, with probability at least $1 - \delta/2$, the set $\Phi\Big(\{{\bf y}\} \cup A_{\epsilon/\sqrt{k}}\Big)$ is $\epsilon$-isometric to $\{{\bf y}\} \cup A_{\epsilon/\sqrt{k}}$.

Let $\tilde{A}_{\epsilon/\sqrt{k}} = \Phi\big(A_{\epsilon/\sqrt{k}}\big)$. Again, by the approximate Carath\'{e}odory's theorem, the set $\tilde{A}_{\epsilon/\sqrt{k}}$ is an $\epsilon$-cover of $\Phi\big(\sqrt{k}\cdot{\rm conv}(A_\pm)\big)$. Now suppose the test accepts ${\bf y}$ with probability at least $\delta$. Then, with probability at least $\delta/2$, the test accepts and the above $\epsilon$-isometry conditions hold simultaneously. Then,
\begin{eqnarray*}
\tilde{\bf y} \in \sqrt{k}\cdot{\rm conv}\big(\Phi(A_\pm)\big) &\overset{1}{\Rightarrow}& {\rm dist}\big(\tilde{\bf y},\tilde{A}_{\epsilon/\sqrt{k}}\big) \le \epsilon \\
&\overset{2}{\Rightarrow}& {\rm dist}\big({\bf y},{A}_{\epsilon/\sqrt{k}}\big) \le \epsilon(1 - \epsilon)^{-1} \le 2\epsilon \\
&{\Rightarrow}& {\rm dist}\big({\bf y},\sqrt{k}\cdot{\rm conv}(A_\pm)\big) \le 2\epsilon 
\end{eqnarray*}  

 \noindent where step $1$ follows from the $\epsilon$-cover guarantee of $\tilde{A}_{\epsilon/\sqrt{k}}$, step $2$ follows from the $\epsilon$-isometry guarantee. Invoking the approximate Carath\'{e}odory theorem, we get that there exists $\hat{\bf y} = {\bf A}\hat{\bf x}\in \sqrt{k}\cdot{\rm conv}(\pm A)$ such that $\|\hat{\bf x}\|_0 \le O(k/\epsilon^2)$ and $\|\hat{\bf y} - {\bf y}\| \le O(\epsilon)$. This completes the soundness direction.

		\section{Analysis for Unknown Design setting}
		In this section, we restate and prove Theorem \ref{thm:test_rip}. 
		
		\unknown*
				
		Let $S$ denote the set $\{\vec{y}_1, \dots, \vec{y}_p\}$. Our testing algorithm is as follows:
		
		\RestyleAlgo{boxruled}
		\LinesNumbered
		\begin{algorithm}
			Use Lemma \ref{lem:normestim} to decide with probability at least $1-\delta/2$ if there exists $\vec{y}_i$ such that $\|\vec{y}_i\| \not \in [1-2\eps, 1+2\eps]$. Reject if so.\\
			Use Lemma \ref{lem:estim} to obtain $\hat{\omega}$, an estimate of $\omega(S)$ within additive error $\sqrt{3k \log(m/k)}$ with probability at least $1-\delta/2$.\\
			Accept if $\hat{\omega} \le 4\sqrt{3k\log({m}/{k})}$, else reject.
			\caption{SparseTestUnknown} 
		\end{algorithm}
		The number of linear queries made by the tester is $O(p\eps^{-2}\log(p/\delta))$ in  Line 1 and $O(p\log\delta^{-1})$ in Line 2.
		
		\subsection{Completeness}
		Assume that for each $i \in [p]$, $\vec{y}_i = A \vec{x}_i$ for a matrix $A \in \R^{d \times m}$ satisfying $(\eps, k)$-RIP and $\vec{x}_i \in \Sp_k^m$. By definition of RIP, we know that $1-\eps \leq \|\vec{y}_i\| \leq 1+\eps$, so that Line 1 of the algorithm will pass with probability at least $1-\delta/2$.
		
		From Lemma \ref{lem:width}, we know that $\omega(\{\vec{x}_1, \dots \vec{x}_p\}) \leq 2\sqrt{3k\log(m/k)}$.   Lemma \ref{lem:gw_isometricEmbeddings} shows that the gaussian width of $S$ is approximately the same; its proof, deferred to the appendix (Section \ref{sec:gw_isometry}), uses Slepian's Lemma (Lemma \ref{lem:slepian}).
		\begin{lemma}			\label{lem:gw_isometricEmbeddings}
			Let $X \subset \mathcal{S}^{m-1}$ be a finite set, and let $S \subset \R^{d}$ be an $\epsilon$-isometric embedding of $X$. Then 
			\begin{equation}
				(1 - {\epsilon})\omega(X) \le \omega({S}) \le (1 + {\epsilon})\omega(X)
			\end{equation}
		\end{lemma}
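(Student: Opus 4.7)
The plan is to apply Slepian's lemma (Lemma~\ref{lem:slepian}) to two Gaussian processes indexed by $X$. Let $\psi : X \to S$ be the $\eps$-isometric map; since $\eps < 1$, distance preservation forces $\psi$ to be injective, hence a bijection onto $S$. Define the centered Gaussian processes $\{X_{\vec{x}}\}_{\vec{x} \in X}$ and $\{Y_{\vec{x}}\}_{\vec{x} \in X}$ by
\begin{equation*}
X_{\vec{x}} \;=\; \langle \vec{g}_1, \vec{x}\rangle, \qquad Y_{\vec{x}} \;=\; \langle \vec{g}_2, \psi(\vec{x})\rangle,
\end{equation*}
where $\vec{g}_1 \sim N(0,1)^m$ and $\vec{g}_2 \sim N(0,1)^d$. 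By definition, $\E[\sup_{\vec{x}} X_{\vec{x}}] = \omega(X)$ and $\E[\sup_{\vec{x}} Y_{\vec{x}}] = \omega(S)$.

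The key computation is for the Gaussian increments: $\E[(X_{\vec{x}}-X_{\vec{y}})^2] = \|\vec{x}-\vec{y}\|^2$ and $\E[(Y_{\vec{x}}-Y_{\vec{y}})^2] = \|\psi(\vec{x})-\psi(\vec{y})\|^2$. The $\eps$-isometry hypothesis then gives, for all $\vec{x}, \vec{y} \in X$,
\begin{equation*}
(1-\eps)^2\,\E[(X_{\vec{x}}-X_{\vec{y}})^2] \;\le\; \E[(Y_{\vec{x}}-Y_{\vec{y}})^2] \;\le\; (1+\eps)^2\,\E[(X_{\vec{x}}-X_{\vec{y}})^2].
\end{equation*}

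For the upper bound on $\omega(S)$, compare $Y$ against the rescaled process $Z_{\vec{x}} := (1+\eps) X_{\vec{x}}$. The right-hand inequality gives $\E[(Y_{\vec{x}}-Y_{\vec{y}})^2] \le \E[(Z_{\vec{x}}-Z_{\vec{y}})^2]$, so Slepian's lemma yields $\omega(S) = \E[\sup_{\vec{x}} Y_{\vec{x}}] \le \E[\sup_{\vec{x}} Z_{\vec{x}}] = (1+\eps)\omega(X)$. The lower bound is symmetric: set $W_{\vec{x}} := (1-\eps) X_{\vec{x}}$; the left-hand inequality combined with Slepian gives $(1-\eps)\omega(X) = \E[\sup_{\vec{x}} W_{\vec{x}}] \le \E[\sup_{\vec{x}} Y_{\vec{x}}] = \omega(S)$. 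I do not anticipate a serious obstacle; the main care needed is to orient the Slepian comparison in the correct direction for each side and to verify that the rescaled processes have Gaussian increments dominating (respectively dominated by) those of $Y$.
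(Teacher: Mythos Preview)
Your proposal is correct and follows essentially the same approach as the paper: both define the two canonical Gaussian processes indexed by $X$, compute their increments, and invoke Slepian's lemma. The only cosmetic difference is that the paper rescales the Gaussian vector (taking $\vec{g} \sim N(0,\sqrt{1+\eps})^m$) whereas you rescale the process itself by $(1\pm\eps)$; these are equivalent, and your explicit treatment of both directions is if anything slightly cleaner than the paper's ``the other direction follows similarly.''
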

		Hence, the gaussian width of $\vec{y}_1, \dots, \vec{y}_p$ is at most $2(1+\eps)\sqrt{3k\log(m/k)}$. Taking into  account the additive error in Line 2, we see that with probability at least $1-\delta/2$, $\hat{\omega} \leq (3 + 2\eps)\sqrt{3k\log(m/k)} \leq 4 \sqrt{3k\log(m/k)}$.  Hence, the tester accepts with probability at least $1-\delta$.

			\subsection{Soundness}
			
			As mentioned before, in order to prove soundness  we need to show that whenever the gaussian width of the set $S$ is small, it is \emph{close} to some sparse point-set. Let $\omega^* = 4\sqrt{3k\log\frac{m}{k}}$. We shall break the analysis into two cases:

		\vspace{1em}	
	   \noindent{\bf Case (i)} $\bigg\{\omega^* \ge (\epsilon/C)^2\sqrt{d}\bigg\}$: For this case, we use the fact random projection of discretized sparse point-sets (Definition \ref{defn:discrete_sparse}) form an appropriated cover of $S$. 	This is formalized in the following theorem, which in a sense shows an approximate inverse of Gordon's Theorem for sparse vectors:
		
		\begin{theorem}					\label{thm:sparse_approx}
		Given $\eps > 0$ and integers $C, d, k$ and $m$,  let $n = O\Big(\frac{k}{\eps^2} \log(m/k)\Big)$. Suppose $m \ge k/\eps^8$. Let $\Phi:\mathbbm{R}^m \mapsto \mathbbm{R}^{n}$ be drawn from $\frac{1}{\sqrt{n}}N(0,1)^{n\times m}$. Then, for $\ell = O({k}{\epsilon^{-4}})$, with high probability, the set $\Phi^{\rm norm}(\widehat{\Sp}_\ell^m)$ is an $O({\epsilon}^{1/4})$-cover of $\mathcal{S}^{n - 1} $,  where $\Phi^{\rm norm}({\bf x}) = \Phi({\bf x})/\|\Phi({\bf x})\|_2$.
		\end{theorem}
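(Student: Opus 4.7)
The plan is to exhibit, for each fixed target $\vec{u}\in\mathcal{S}^{n-1}$, an $\ell$-sparse vector whose normalized $\Phi$-image lies within $O(\eps^{1/4})$ of $\vec{u}$, and then conclude by a union bound over an $O(\eps^{1/4})$-net of $\mathcal{S}^{n-1}$. The candidate sparse vector is obtained by ``dualizing and truncating'': set $\vec{v}:=(n/m)\Phi^{\mathsf T}\vec{u}$, whose entries are i.i.d.\ $N(0,n/m^2)$, and let $\vec{x}$ be its restriction to the top $\ell=O(k/\eps^4)$ coordinates in magnitude. The hope is that $\Phi \vec{v}\approx \vec{u}$ and that the truncation barely perturbs this.

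The first ingredient is that, since $m\gg n$ (using $m\ge k/\eps^8$), standard concentration of sample gaussian covariance matrices gives $\Phi\Phi^{\mathsf T}=(m/n)I_n+E$ with $\|E\|_{\mathrm{op}}=o(1)$ with high probability, so $\Phi\vec{v}=\vec{u}+o(1)$. The second ingredient is controlling $\|\Phi\vec{r}\|$, where $\vec{r}:=\vec{v}-\vec{x}$ is the ``tail'' of $\vec{v}$ after thresholding. By gaussian order statistics, $\|\vec{r}\|_\infty\leq\tau\approx\sqrt{(2n\log(m/\ell))/m^2}$, while $\|\vec{r}\|_2=\Theta(\sqrt{n/m})$. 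I would invoke the already-mentioned Lemma~\ref{lem:l_infty}, which converts smallness of $\|\vec{r}\|_\infty$ into smallness of $\|\Phi\vec{r}\|$ via the gaussian width of the $\ell_\infty$-ball intersected with the sphere; plugging in $\tau$ and $n=O((k/\eps^{2})\log(m/k))$ should yield $\|\Phi\vec{r}\|\leq O(\eps^{1/4})$. Combined with the previous step, this gives $\|\Phi\vec{x}-\vec{u}\|\leq O(\eps^{1/4})$, and since $\|\vec{u}\|=1$ the normalization step contributes only another $O(\eps^{1/4})$ error.

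Finally, I would handle discretization by snapping $\vec{x}$ to its nearest element of $\widehat{\Sp}_\ell^m$; since the sparse operator norm of $\Phi$ is $O(1)$ (by Gordon's Theorem applied to $\Sp_\ell^m$, whose gaussian width is $O(\sqrt{\ell\log(m/\ell)})\lesssim \sqrt{n}$), the projected discretization error is controllable by taking a sufficiently fine grid. To go from ``each fixed $\vec{u}$'' to ``all $\vec{u}\in\mathcal{S}^{n-1}$'', I would fix an $O(\eps^{1/4})$-net of $\mathcal{S}^{n-1}$ of size $(O(1/\eps))^{O(n)}$, verify that the per-$\vec{u}$ failure probability (coming from the concentration of $\Phi\Phi^{\mathsf T}$ and of the Lemma~\ref{lem:l_infty} bound) is much smaller than this net size, and union bound; the extension from net to all of $\mathcal{S}^{n-1}$ costs only another $O(\eps^{1/4})$.

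The main obstacle is the residual bound on $\|\Phi\vec{r}\|$. A naive argument gives $\|\Phi\vec{r}\|\leq\|\Phi\|_{\mathrm{op}}\|\vec{r}\|_2=\Theta(\sqrt{m/n})\cdot\Theta(\sqrt{n/m})=\Theta(1)$, which is useless. The saving must come from coupling the smallness of $\|\vec{r}\|_\infty$ with the action of a random projection on $\ell_\infty$-balls, which is precisely the content of Lemma~\ref{lem:l_infty}; this is also where both the $\eps^{1/4}$ final accuracy and the factor $\eps^{-4}$ blow-up from $k$ to $\ell$ arise, since tighter control would require either a smaller $\ell_\infty$ threshold (larger $\ell$) or a larger $n$.
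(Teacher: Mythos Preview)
Your central claim—that $\|\Phi\vec{r}\|\le O(\eps^{1/4})$—is false for the specific tail $\vec{r}$ you construct, and this breaks the argument. The truncated head $\vec{x}$ carries only a vanishing fraction of the $\ell_2$ mass of $\vec{v}$: since $v_j\sim N(0,n/m^2)$ i.i.d., one has $\|\vec{v}\|_2^2\approx n/m$ while $\|\vec{x}\|_2^2\approx\ell\cdot(n/m^2)\cdot 2\log(m/\ell)$, so $\|\vec{x}\|_2^2/\|\vec{v}\|_2^2=O\big((\ell/m)\log(m/\ell)\big)=o(1)$ under the hypotheses $m\ge k/\eps^8$ and $\ell=O(k/\eps^4)$. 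Hence $\vec{r}\approx\vec{v}$ in $\ell_2$, and therefore $\Phi\vec{r}\approx\Phi\vec{v}\approx\vec{u}$, giving $\|\Phi\vec{r}\|\approx 1$ rather than $O(\eps^{1/4})$. No uniform bound of the form $\sup\{\|\Phi r\|:\|r\|_\infty\le\tau,\ \|r\|_2\le\rho\}\le O(\eps^{1/4})$ can hold, because the specific $\vec{r}$ in question (built from $\Phi^{\mathsf T}\vec{u}$, hence strongly correlated with $\Phi$) already witnesses a value near $1$. Indeed the $\ell_\infty$ constraint you isolate is essentially inactive: $\tau/\rho=\sqrt{2\log(m/\ell)/m}>1/\sqrt m$, so the set $\{r:\|r\|_\infty\le\tau,\ \|r\|_2\le\rho\}$ coincides up to constants with the $\ell_2$-ball of radius $\rho$, whose gaussian width is $\rho\sqrt m=\sqrt n$, and Gordon's theorem then only yields $\|\Phi r\|=O(1)$. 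Separately, Lemma~\ref{lem:l_infty} goes in the opposite direction (random rotations of small-width sets have small $\ell_\infty$ norm); it does not convert small $\|\vec{r}\|_\infty$ into small $\|\Phi\vec{r}\|$.

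The paper proceeds quite differently. Rather than exhibit a candidate sparse preimage for each $\vec{u}$, it shows that $\Phi^{\rm norm}(\widehat{\Sp}_\ell)$ has gaussian width at least $(1-O(\eps))\sqrt n$ (Lemma~\ref{lem:gw_lowerBound}), and then uses the general fact that a subset of $\mathcal S^{n-1}$ with near-maximal gaussian width must $O(\eps^{1/4})$-cover any fixed point under a random rotation (Lemma~\ref{lem:min_dist}). The per-point failure probability from that argument is only $\exp\big(-\Theta(k\eps^{-1}\log(m/k))\big)$, which by itself does \emph{not} beat an $\eps$-net of $\mathcal S^{n-1}$ of size $\exp\big(\Theta(n\log(1/\eps))\big)$; the paper closes this gap by partitioning the $m$ columns of $\Phi$ into $L=\Theta(\eps^2\sqrt{m/k})$ \emph{independent} blocks and raising the per-block failure probability to the $L$th power. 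Your proposal has no analogue of this independence-boosting step, and the direct union bound you sketch would face the same shortfall even if the per-$\vec{u}$ argument were repaired.
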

		
		From the choice of parameters we have $d \le \frac{C^\prime k}{\epsilon^2}\log\frac{m}{k}$ Therefore, using the above Theorem we know that there exists $(\epsilon,k)$-RIP matrix $\Phi \in \mathbbm{R}^{d \times m}$ such that $\Phi^{\rm norm}\big(\Sp^m_\ell\big)$ is an $O(\epsilon^{1/4})$-cover of $\mathcal{S}^{d-1}$ (and therefore it is a $\epsilon^{1/4}$-cover of $S$). Therefore, there exists ${\bf X} \in \mathbbm{R}^{m \times p} $ such that ${\bf Y} = \Phi({\bf X}) + {\bf E}$ where the columns of ${\bf X}$ and ${\bf E}$ satisfy the respective $\|\cdot\|_0$ and $\|\cdot\|_2$-upper bounds respectively. Hence the claim follows.
		
		\vspace{1em}
		\noindent{\bf Case (ii)} $\bigg\{\omega^* \le (\epsilon/C)^2\sqrt{d}\bigg\}$: For this case, we use the following result on the concentration of $\ell_\infty$-norm:
		
		\begin{lemma}					\label{lem:l_infty}
			Given $S \subset \mathcal{S}^{d-1}$, we have 
			\begin{equation*}
			\Pr_{{\bf R} \sim \mathbbm{O}_d}\Bigg[ \max_{ {\bf y} \in {\bf R}(S) } \|{\bf y}\|_\infty \le C\frac{\omega(S)}{d^{1/2}}\Bigg] \ge \frac12
			\end{equation*}
			where $\mathbbm{O}_d$ is the orthogonal group in $\mathbbm{\bf R}^d$ i.e., ${\bf R}$ is a uniform random rotation.
		\end{lemma}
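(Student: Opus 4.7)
The plan is to reduce the supremum over $R(S)$ to a supremum of a $1$-Lipschitz functional on the sphere, then apply spherical concentration coordinate-by-coordinate and finish by a union bound over the $d$ standard basis vectors.

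First, writing $\|y\|_\infty = \max_{i \in [d]} |\langle y, e_i\rangle|$ gives
\[
\max_{y \in R(S)} \|y\|_\infty \;=\; \max_{i \in [d]} f(R^\top e_i), \qquad f(u) \;:=\; \sup_{x \in S} |\langle u, x\rangle|.
\]
The functional $f$ is $1$-Lipschitz on $\R^d$ since $|f(u)-f(u')| \le \sup_{x \in S} |\langle x, u-u'\rangle| \le \|u-u'\|$, and when $R$ is Haar distributed on $\mathbbm{O}_d$ each vector $R^\top e_i$ is marginally uniform on $\mathcal{S}^{d-1}$. Hence our task is to control $\max_i f(u_i)$ where the $u_i$ are uniform on the sphere with a joint orthonormality structure.

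To compute the mean of $f$ at a uniform sphere point, write $g \sim N(0, I_d)$ as $g = \|g\|\, u$ with $u$ uniform on $\mathcal{S}^{d-1}$ and independent of $\|g\|$. The standard comparison $\E[\sup_{x \in S} |\langle g, x\rangle|] = \omega(S \cup -S) \le 2\omega(S)$ combined with $\E\|g\| = \Theta(\sqrt{d})$ gives $\E_u[f(u)] \le C_0 \omega(S)/\sqrt{d}$. L\'evy's spherical concentration inequality then yields $\Pr_u[f(u) > \E_u f + t] \le \exp(-c d t^2)$ for every $t > 0$.

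Setting $t$ to be a suitable multiple of $\omega(S)/\sqrt{d}$ and applying a union bound over $i \in [d]$ then produces $\max_i f(R^\top e_i) \le C\omega(S)/\sqrt{d}$ with probability $\ge 1/2$, which is exactly the claim. The main technical point is the union bound: a naive application contributes an additive $\sqrt{\log d}/\sqrt{d}$ term that must be absorbed into the constant $C$. This absorption is legitimate in the regime of interest for Case~(ii) of the soundness analysis of Theorem~\ref{thm:test_rip}, where $\omega(S) \gtrsim \sqrt{\log d}$ so that the logarithmic contribution is dominated by $\omega(S)/\sqrt{d}$ up to constants; more generally one can tighten the union-bound step by exploiting the joint orthogonality of the frame $\{R^\top e_i\}$ via Chevet-type bilinear Gaussian comparisons, but the conclusion of the lemma is the same up to absolute constants.
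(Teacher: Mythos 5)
Your proof is essentially the paper's own argument: both reduce to the observation that each row $R^\top e_i$ of a Haar-distributed $R$ is marginally uniform on $\mathcal{S}^{d-1}$, apply Gaussian/L\'evy concentration to the $1$-Lipschitz functional $u \mapsto \sup_{x\in S}\langle u,x\rangle$ for a single row, and close with a union bound over the $d$ rows. You are somewhat more explicit than the paper about the implicit requirement $\omega(S)\gtrsim\sqrt{\log d}$ needed for that union bound to give a constant failure probability --- a condition the paper also tacitly assumes (its closing remark invokes ``by choice $C'\omega(S)\ge 10\log d$'') and which indeed holds in the Case~(ii) regime where the lemma is applied.
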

		
		Although this concentration bound is known, for completeness we give a proof in the appendix (Section \ref{sec:l_infty}). From the above lemma, it follows that there exists ${\bf R} \in \mathbbm{O}_d$ such that for any ${\bf z} \in Z:= {\bf R}(S)$ we have $\|{\bf z}\|_\infty \le \epsilon^2$ and therefore ${\bf Y} = {\bf  R}^{-1}{\bf Z}$. Furthermore, since ${\bf R}$ is orthogonal, therefore the matrix ${\bf R}^{-1}$ is also orthogonal, and therefore it satisfies $(\epsilon,k)$-RIP. 
		
		To complete the proof, we observe that even though the given factorization has inner dimension $d$, we can trivially extend it to one with inner dimension $m$. This can be done by constructing $\Phi = \big[{\bf R}^{-1} \enskip  {\bf G}\big]$ with ${\bf G} \sim \frac{1}{\sqrt{d}}N(0,1)^{d \times {m - d}}$. Since $\omega^* << d$, from Theorem \ref{thm:jl_lemma} it follows that with high probability ${\bf G}$ (and consequently $\Phi$) will satisfy $(\epsilon,k)$-RIP. Finally, we construct $\hat{\bf Z} \in \mathbbm{R}^{m \times n}$ by padding ${\bf Z}$ with $m - d$ rows of zeros. Therefore, by construction ${Y} = \Phi\cdot\hat{\bf Z}$, where for every $i \in [p]$ we have $\|{\bf z}_i\|_\infty \le \epsilon^2$. Hence the claim follows.  
		
		\ignore{	
			{
				\begin{figure}			\label{tab:notations}
					\centering
					\begin{tabular}{c l} \toprule
						\textbf{Quantity} & $\qquad$\textbf{Interpretation} \\ \midrule\midrule
						$d$ & Native dimension of the input \\
						$n$ & Lower dimension as given by the JL lemma \\
						$\epsilon$  & User input for error tolerance   \\
						$\Phi$  & Scaled Gaussian matrix as given by the JL lemma of dimension $n \times d$  \\
						$\Phi^{\rm norm}$ & Normalized analogue of $\Phi$ \\
						$\phi_i$  & $i^{th}$ Block of $\Phi$ of dimensions $n \times Nn$ \\
						$k$  & Sparsity parameter for completeness   \\
						$\ell$  & Sparsity parameter for soundness   \\ 
						$\mathcal{S}^{d-1}$ & Unit sphere in $d$-dimensions \\
						$\Sp^m_k$  & Set of $k$-sparse points in $\mathbbm{R}^m$  \\
						$\widehat{\Sp}^{m}_\ell$  & Discretized set of $\ell$-sparse points in $\mathbbm{R}^m$ \\ \bottomrule
					\end{tabular}
					\caption{Table of Notations}
				\end{figure}
			}
		}

		\section{Proof of Theorem \ref{thm:sparse_approx}}
		
		We begin by defining the discretized sparse point set for $\ell = O(k/\epsilon^4)$:
		
		\begin{definition}[Discretized sparse vectors]					\label{defn:discrete_sparse}
			$$\widehat{\Sp}_\ell^m = \left\{{\bf x} : {\bf x} \in \left \{0,\pm \frac{1}{\sqrt{\ell}}\right \}^m, \|{\bf x}\|_0 = \ell\right\}$$
		\end{definition}
		
		The intent here is to get the sparse point set $\widehat{\Sp}^m_\ell$  \emph{distorted} on projection, so that it forms an $\epsilon$-cover of the unit sphere on the smaller dimension. However, doing so rules out proofs that rely on simple union bound arguments. For instance, on allowing the projections to become distorted, we run into the risk of lots of points collapsing together into a small fraction of $\mathcal{S}^{n-1}$. As a result, the set ${\Phi}^{\rm norm}(\widehat{\Sp}^m_\ell)$ could turn out to be insufficient for forming a cover of the unit sphere. These issues are avoided by carefully relating the gaussian width of ${\Phi}^{\rm norm}(\widehat{\Sp}^m_{\ell})$ to that of $\mathcal{S}^{n-1}$, followed by a partitioning argument. The partitioning crucially uses the block structure of elements in $\widehat{\Sp}^m_\ell$, which results in independent and distributionally identical blocks, allowing us to take union bounds effectively.   
		
		\begin{proof}
			The proof of this Theorem proceeds in steps. We first partition  $\Phi$ into $L$ blocks of $Nn$ columns each, for some appropriately chosen $N$. So, $\Phi = [\phi_1 \cdots \phi_L]$. Note that $\phi_i $ is a $n \times Nn$ submatrix of $\Phi \in \mathbbm{R}^{n \times m}$. Write $\phi^{\rm norm}_i(\widehat{\Sp}_{\ell})$ to denote $\phi^{\rm norm}_i(\widehat{\Sp}_{\ell}^{Nn})$. Also, note that $\phi^{\rm norm}_i(\widehat{\Sp}_{\ell})$ is equal to the set obtained by applying $\Phi$ to vectors in $\widehat{\Sp}_{\ell}$ whose support is contained in $P$, where $P \subset [m]$ is the set of $Nn$ columns of $\Phi$ that are present in $\phi_i$. \\
			
			For any such fixed partition $\phi_i$, we show that the restriction $\phi^{\rm norm}_i(\widehat{\Sp}_{\ell})$ has large expected gaussian width (Lemma \ref{lem:gw_lowerBound}), where again, $\phi^{\rm norm}_i(\vec{x}) = \phi_i(\vec{x})/\|\phi_i(\vec{x})\|$.  Furthermore, using Lemma \ref{lem:min_dist} we argue that any fixed point on $\mathcal{S}^{n - 1}$ has distance $O(\epsilon^{1/4})$ to $\phi_i(\widehat{\Sp}_{\ell})$ with large probability. Now, we  use the independence of the $\phi_i$'s to argue that the probability of ${\bf x} \in \mathcal{S}^{n - 1}$ being simultaneously far away from all $\phi_i^{\rm norm}(\widehat{\Sp}_{\ell})$ is exponentially small. Finally, taking a union bound over the $\epsilon$-net of $\mathcal{S}^{n - 1}$ completes the proof.
			
			We fix the parameter $N$ to be $N = \sqrt{\frac{m}{k}}\Big(\log\frac{m}{k}\Big)^{-1}$.		Set the number of blocks $L = \frac{m}{Nn} = O(\epsilon^2\sqrt{m/k})$ blocks of $Nn$ coordinates each. By construction, for any fixed $i \in [L]$, $\phi_i \sim \frac{1}{\sqrt{n}} N(0,1)^{n \times Nn}$. The following lemma allows gives us a lower bound on the gaussian width of each projection:
			
			\begin{lemma}				\label{lem:gw_lowerBound}
				Let $\phi \sim \frac{1}{\sqrt{n}}\mathcal{N}(0,1)^{n \times Nn}$. Then for $\widehat{\Sp}_{\ell} \subset \mathcal{S}^{Nn - 1}$,
				\begin{equation}
					\E_{\phi}\big[\omega(\phi^{\rm norm}(\widehat{\Sp}_{\ell}))\big] \ge (1 - 4\epsilon)\sqrt{n}
				\end{equation}
			\end{lemma}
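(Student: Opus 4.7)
The plan is to fix the Gaussian $g \sim N(0, I_n)$ appearing in the definition of $\omega$, set $\hat g := g/\|g\|$, and use the orthogonal decomposition $\phi = \hat g\,\beta^{\top} + \phi^\perp$, where $\beta := \phi^{\top}\hat g \in \R^{Nn}$ and $\phi^\perp := (I - \hat g\hat g^{\top})\phi$. By the standard splitting of Gaussian matrices, conditional on $\hat g$ the variables $\beta \sim \frac{1}{\sqrt n}N(0, I_{Nn})$ and $\phi^\perp$ (an independent Gaussian map into $\hat g^\perp$) are independent. Using $\hat g \perp \phi^\perp \vec x$ for every $\vec x$, one obtains the identity
$$\frac{\langle g, \phi \vec x\rangle}{\|\phi \vec x\|} \;=\; \frac{\|g\|\,\langle \beta, \vec x\rangle}{\sqrt{\langle \beta, \vec x\rangle^2 + \|\phi^\perp \vec x\|^2}}.$$
Swapping the order of expectation by Fubini, and using that the joint law of $(\beta,\phi^\perp)$ is rotationally invariant in $\hat g$, the target factors as $\E_\phi\,\omega(\phi^{\rm norm}(\widehat{\Sp}_\ell)) = \gamma_n \cdot C$, where $\gamma_n := \E\|g\| \ge \sqrt n(1-o(1))$ and
$$C \;:=\; \E_{\beta,\phi^\perp}\max_{\vec x \in \widehat{\Sp}_\ell} \frac{\langle \beta, \vec x\rangle}{\sqrt{\langle \beta, \vec x\rangle^2 + \|\phi^\perp \vec x\|^2}}.$$
It therefore suffices to show $C \ge 1 - O(\epsilon^2)$.

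To lower-bound $C$ I would commit to the single (data-dependent) near-optimal point $\vec x^\star$ that maximizes $\langle \beta, \vec x\rangle$ over $\widehat{\Sp}_\ell$, noting that $\vec x^\star$ depends only on $\beta$. Set $M := \langle \beta, \vec x^\star\rangle \ge 0$ and $L := \|\phi^\perp \vec x^\star\|$, so the expression inside the max is at least $M/\sqrt{M^2+L^2} \ge 1 - L^2/(2M^2)$. For $M$: writing $\beta = \beta'/\sqrt n$ with $\beta' \sim N(0, I_{Nn})$, the maximum $\sqrt n\,M$ is the sum of the top $\ell$ magnitudes of $\beta'$ divided by $\sqrt \ell$, whose expectation is $\Theta(\sqrt{\ell\log(Nn/\ell)})$; Gaussian Lipschitz concentration (Lemma \ref{lem:gconc}) then yields $M \ge \tfrac12 \E M$ with probability $1-o(1)$. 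For $L$: since $\vec x^\star$ is a unit vector independent of $\phi^\perp$, conditioning on $\vec x^\star$ makes $L^2$ distributed as $\tfrac{1}{n}\chi^2_{n-1}$, giving $L \le 1+o(1)$ with overwhelming probability.

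On the intersection of these high-probability events, $L^2/M^2 = O\bigl(n/(\ell\log(Nn/\ell))\bigr)$. I would finish by plugging in the chosen parameters $n = \Theta(k\log(m/k)/\epsilon^2)$, $\ell = \Theta(k/\epsilon^4)$, and $Nn/\ell = \Theta(\sqrt{m/k}\,\epsilon^2)$, and using the sparsity hypothesis $(k/m)^{1/8} < \epsilon$, which forces $\log(m/k) \ge 8\log(1/\epsilon)$ and hence $\log(Nn/\ell) \ge \tfrac14\log(m/k)$. This gives $\ell\log(Nn/\ell) = \Omega(k\log(m/k)/\epsilon^4)$ and therefore $n/(\ell\log(Nn/\ell)) = O(\epsilon^2)$. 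On the complementary bad event the integrand is nonnegative and at most $1$, so $C \ge (1-O(\epsilon^2))(1-o(1))$, and multiplying by $\gamma_n$ yields $\E_\phi\,\omega(\phi^{\rm norm}(\widehat{\Sp}_\ell)) \ge (1-4\epsilon)\sqrt n$ for sufficiently small $\epsilon$.

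The main obstacle I anticipate is the parameter bookkeeping needed to certify the crucial inequality $n/(\ell\log(Nn/\ell)) = O(\epsilon^2)$; this uses both the specific block width $N = \sqrt{m/k}/\log(m/k)$ and the full strength of $(k/m)^{1/8} < \epsilon$, and the constants must be tracked carefully to land at $(1-4\epsilon)$ rather than something weaker. A secondary subtlety is that $M$ and $L$ are coupled through $\vec x^\star$, which I resolve by conditioning on $\beta$: once $\beta$ is fixed, $\vec x^\star$ is a deterministic unit vector and $L^2$ is a chi-square whose law depends on $\vec x^\star$ only through $\|\vec x^\star\|^2 = 1$, so the two concentration bounds decouple cleanly.
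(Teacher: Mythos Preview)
Your argument is correct and takes a genuinely different route from the paper's proof. The paper lower-bounds $\omega(\phi^{\rm norm}(\widehat{\Sp}_\ell))$ by the crude inequality $\omega(\phi^{\rm norm}(\widehat{\Sp}_\ell)) \ge \frac{1}{\max_x\|\phi(x)\|}\,\E_g[\max_x g^\top\phi(x)]$, then controls numerator and denominator separately: Gordon's theorem gives $\max_x\|\phi(x)\|\lesssim 1+\omega(\widehat{\Sp}_\ell)/\sqrt n\approx C'/\epsilon$, while the identity in law $g^\top\phi \sim (\|g\|/\sqrt n)\,N(0,I_{Nn})$ gives $\E_g\E_\phi[\max_x g^\top\phi(x)] \approx \omega(\widehat{\Sp}_\ell)\approx (C'/\epsilon)\sqrt n$, and a separate technical lemma (Lemma~\ref{lem:conditional_lb}) is needed to justify conditioning on the high-probability event for the denominator. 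Your orthogonal splitting $\phi=\hat g\,\beta^\top+\phi^\perp$ is sharper: it produces the exact expression $\langle g,\phi^{\rm norm}(x)\rangle=\|g\|\langle\beta,x\rangle/\sqrt{\langle\beta,x\rangle^2+\|\phi^\perp x\|^2}$, after which the single test point $x^\star=\arg\max\langle\beta,x\rangle$ suffices because $x^\star$ is measurable in $\beta$ and hence independent of $\phi^\perp$. This sidesteps Gordon's theorem and the conditioning lemma entirely, and in fact delivers the stronger bound $1-O(\epsilon^2)$ before the final $\gamma_n$ factor (the paper's ratio of two $\Theta(1/\epsilon)$ quantities only yields $1-O(\epsilon)$). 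The parameter check $n/(\ell\log(Nn/\ell))=O(\epsilon^2)$ is the same computation the paper does implicitly in its equality $\omega(\widehat{\Sp}_\ell)/\sqrt n=C'/\epsilon$, and your use of $(k/m)^{1/8}<\epsilon$ to force $\log(Nn/\ell)\ge\tfrac14\log(m/k)$ is exactly right.
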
	
			
			From Lemma \ref{lem:gw_lowerBound}, we show the following lower bound on the gaussian width of the projections of $\widehat{\Sp}_\ell^{Nn-1}$ : 
			$\mathbbm{E}_{\phi_i}\Big[ \omega(\phi^{\rm norm}_i(\widehat{\Sp}_{\ell}))\Big] \ge (1 - 4 \epsilon)\sqrt{n}$.
			
			Now, we argue that because $\phi^{\rm norm}_i(\widehat{\Sp}_\ell)$ has gaussian width close to $\sqrt{n}$, it in fact covers any fixed point on $\mathcal{S}^{n-1}$ with high probability. We begin by stating the following lemma on concentration of minimum distance with respect to large gaussian width sets.

			\begin{lemma}					\label{lem:min_dist}
				Let $T \subset \mathcal{S}^{n -1 }$ be such that $\omega(T) \ge \sqrt{n}(1 - \epsilon)$, where $n = O\Big(\frac{k}{\epsilon^2}\log\frac{m}{k}\Big)$. Let ${R}:\R^{n} \mapsto \R^{n}$ be a uniform random rotation operator. Then for any ${\bf x} \in \mathcal{S}^{n - 1}$
				\begin{equation*}
					\Pr_R\left[ \min_{{\bf y} \in R(T)} \| {\bf x} - {\bf y} \|_2 \ge 2\epsilon^{1/4} \right] \le \exp\Big(-O\Big(\frac{k}{\epsilon}\log\frac{d}{k}\Big)\Big)
				\end{equation*}
			\end{lemma}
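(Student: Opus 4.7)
\bigskip

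\noindent\textbf{Proof plan for Lemma \ref{lem:min_dist}.} My plan is to exploit rotational invariance to reduce the randomness to a single uniformly random point on the sphere, then rewrite the distance event in inner-product form and apply Gaussian concentration (Lemma \ref{lem:gconc}) to the maximum inner product. Concretely, since $R$ is distributed as a uniform rotation, the law of $\min_{\vec{y}\in R(T)}\|\vec{x}-\vec{y}\|$ for fixed $\vec{x}\in\mathcal{S}^{n-1}$ agrees with the law of $\min_{\vec{y}\in T}\|R\vec{x}-\vec{y}\|$, so I may instead fix $T$ and draw $\vec{x}$ uniformly from $\mathcal{S}^{n-1}$. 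Using $\|\vec{x}-\vec{y}\|^2=2-2\langle\vec{x},\vec{y}\rangle$ for unit vectors, the event $\min_{\vec{y}\in T}\|\vec{x}-\vec{y}\|\ge 2\epsilon^{1/4}$ is equivalent to $\max_{\vec{y}\in T}\langle\vec{x},\vec{y}\rangle\le 1-2\sqrt{\epsilon}$, so the goal becomes a lower-tail bound on the supremum of $\langle\vec{x},\vec{y}\rangle$.

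Next I would realise $\vec{x}$ as $\vec{g}/\|\vec{g}\|$ with $\vec{g}\sim N(0,I_n)$ so that
\[
\max_{\vec{y}\in T}\langle \vec{x},\vec{y}\rangle \;=\; \frac{W}{\|\vec{g}\|}, \qquad W:=\sup_{\vec{y}\in T}\langle \vec{g},\vec{y}\rangle.
\]
Since $T\subseteq\mathcal{S}^{n-1}$, Lemma \ref{lem:gconc} applies with $\sigma^2=1$ and gives $\Pr[W\le \omega(T)-u]\le 2e^{-u^2/2}$; by hypothesis $\omega(T)\ge(1-\epsilon)\sqrt{n}$. For the denominator, $\|\vec{g}\|$ is $1$-Lipschitz in $\vec{g}$ with $\E\|\vec{g}\|\le\sqrt{n}$, so Gaussian Lipschitz concentration yields $\Pr[\|\vec{g}\|\ge\sqrt{n}+u]\le e^{-u^2/2}$. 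Choosing $u=\tfrac12\sqrt{n\epsilon}$ in both bounds and taking the intersection of the two favourable events, with probability at least $1-3\exp(-n\epsilon/8)$ one has
\[
\frac{W}{\|\vec{g}\|}\;\ge\;\frac{(1-\epsilon)\sqrt{n}-\tfrac12\sqrt{n\epsilon}}{\sqrt{n}+\tfrac12\sqrt{n\epsilon}}\;=\;\frac{1-\epsilon-\tfrac12\sqrt{\epsilon}}{1+\tfrac12\sqrt{\epsilon}} \;\ge\; 1-2\sqrt{\epsilon},
\]
where the last inequality is an elementary bound valid for $\epsilon$ sufficiently small. Plugging the assumed size $n=\Theta(k\epsilon^{-2}\log(m/k))$ into the exponent gives $n\epsilon/8 = \Omega(k\epsilon^{-1}\log(m/k))$, matching the claimed failure probability.

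The essentially routine pieces are the reduction to a random $\vec{x}$, the geometric rewriting, and the two one-sided concentration inequalities (both already available: Lemma \ref{lem:gconc} for the numerator, and $1$-Lipschitz Gaussian concentration for $\|\vec{g}\|$). The only point requiring a bit of care is the calibration of constants so that the ratio $W/\|\vec{g}\|$ exceeds $1-2\sqrt{\epsilon}$ rather than some worse function of $\epsilon$; this is what forces the choice of deviation on the order of $\sqrt{n\epsilon}$ rather than $\sqrt{n}\,\epsilon$ and, through the square in the Gaussian tail, fixes the final exponent at $\Theta(n\epsilon)$. No new technical machinery beyond what the paper already quotes seems to be required, so I expect the main obstacle to be bookkeeping rather than any genuine difficulty.
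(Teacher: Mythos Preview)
Your proposal is correct and follows essentially the same approach as the paper's proof: both reduce the random rotation to a uniform point on the sphere, rewrite the distance event as an inner-product event via $\|\vec{x}-\vec{y}\|^2=2-2\langle\vec{x},\vec{y}\rangle$, realize the uniform point as $\vec{g}/\|\vec{g}\|$, and then combine Gaussian concentration for $W=\sup_{\vec{y}\in T}\langle\vec{g},\vec{y}\rangle$ with concentration for $\|\vec{g}\|$ at deviation scale $\sqrt{n\epsilon}$ to get the $\exp(-\Omega(n\epsilon))$ tail. The only cosmetic difference is that the paper first conditions on $\|\vec{g}\|\le\sqrt{n}(1+\epsilon)$ and then transfers the Gaussian lower-tail bound back to the rotation event, whereas you bound the ratio $W/\|\vec{g}\|$ directly via a union bound on the two deviation events; both routes are equivalent and yield the same exponent.
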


			Observe that the distribution of $\phi_i^{\rm norm}$ is rotation-invariant; see \ref{lem:rotational_invariance} for a formal proof. Fixing ${\bf x} \in \mathcal{S}^{n - 1}$, we invoke Lemma \ref{lem:min_dist} in our setting:
			\begin{equation}
				\Pr_{\phi_i}\bigg[ \min_{{\bf y} \in \phi^{\rm norm}_i(\widehat{\Sp}_{\ell})} \| {\bf y} - {\bf x} \|_2 > 16\epsilon^{\frac{1}{4}}\bigg] \le \exp\Big(-O\Big(\frac{k}{\epsilon}\log\frac{m}{k}\Big)\Big)
			\end{equation}
			
			Let $P_\epsilon$ be an $\epsilon$-cover of $\mathcal{S}^{n - 1}$ such that $|P_\epsilon| = O(1/\epsilon)^{n}$. Then:
			\begin{align*}
				\Pr_{\Phi}\Bigg[\exists {\bf x} \in P_\epsilon : \min_{{\bf y} \in \Phi^{\rm norm}(\widehat{\Sp}_{\ell})} \|{\bf y} - {\bf x} \|_2 \ge 16{\epsilon}^{\frac{1}{4}}\Bigg] 
				&\le |P_\epsilon| \Pr_{\Phi}\Bigg[\forall \quad i \in [L] \min_{{\bf y} \in \phi^{\rm norm}_i(\widehat{\Sp}_{\ell})} \|{\bf y} - {\bf x} \|_2 \ge 16{\epsilon}^{\frac{1}{4}}\Bigg] \\
				&\overset{1}{=} |P_\epsilon|\prod_{i = 1}^{L}\Pr_{\phi_i}\Bigg[\min_{{\bf y} \in \phi^{\rm norm}_i(\widehat{\Sp}_{\ell})} \|{\bf y} - {\bf x} \|_2 \ge 16{\epsilon}^{\frac{1}{4}}\Bigg] \\
				&= |P_\epsilon|\Bigg(\Pr_{\phi_i}\Bigg[\min_{{\bf y} \in \phi^{\rm norm}_i(\widehat{\Sp}_{\ell})} \|{\bf y} - {\bf x} \|_2 \ge 16{\epsilon}^{\frac{1}{4}}\Bigg]\Bigg)^L \\
				&\le |P_\epsilon| \exp\Big(-O\Big(\frac{k}{\epsilon}\log\frac{m}{k}\Big)\Big)^{\epsilon^2O(\sqrt{m/k})} \\
				&\le \exp\Big(\Big(\log{\frac{C^\prime}{\epsilon}}\Big)\Big(\frac{k}{\epsilon^2}\log\frac{m}{k}\Big)-{k}{\epsilon}\sqrt{\frac{m}{k}}{\log\frac{m}{k}}\Big) \\
				&\le \exp\Big(-O\Big(k\log\frac{m}{k}\big)\Big) 
			\end{align*}
			
			where step $1$ uses the independence of the $\phi_i$'s and the last step uses the fact that $\epsilon \ge \Big(\frac{k}{m}\Big)^{\frac{1}{8}}$. The above inequality states that $\Phi^{\rm norm}(\widehat{\Sp}_{\ell})$ is an $\big(16\epsilon^{1/4}+\epsilon\big)$-cover of $\mathcal{S}^{n - 1}$ with positive probability, which completes the proof.	
			
		\end{proof}

		\subsection{Proof of Lemma \ref{lem:gw_lowerBound}}
		
		The proof of the lemma proceeds in two steps: we first restrict to the case where the maximum $\|\cdot\|_2$ length of the projected vectors is not much larger than the expected value. This is done by using {Gordon's theorem} and Lipschitz concentration for gaussians (see Theorem \ref{thm:gordon} and Corollary \ref{corr_lipschitz}). Following that, we observe that conditioning the expectation by this high probability event on the length of the projected vectors does not affect the expectation by much (Lemma \ref{lem:conditional_lb}). The rest of the proof follows using standard estimates of gaussian widths of $\widehat{\Sp}_{\ell}$ (see Lemma \ref{lem:gwLowerBound_discrete}).
		
		\noindent
		\textbf{Upper Bound on the $\|\cdot\|_2$ length}: By setting $D = Nn$ in Theorem \ref{thm:gordon}, we upper bound the expected maximum  $\|\cdot\|_2$ length : 
		
		\begin{equation}
			\E_{\phi} \Bigg[\max_{{\bf x} \in \widehat{\Sp}_{\ell}} \|\phi({\bf x}) \|_2\Bigg] \le 1 + \frac{\omega(\widehat{\Sp}_{\ell})}{\sqrt{n}}
		\end{equation}
		
		Furthermore, from Lemma \ref{lem:gwLowerBound_discrete}, we have $\omega(\widehat{\Sp}_{\ell}) = \sqrt{C_0\ell\log\frac{Nn}{\ell}}$ for some constant $C_0 > 0$. Therefore by our choice of parameters we have:
		
		\begin{equation}			\label{fact:upper_bound}
			\frac{\omega(\widehat{\Sp}_{\ell})}{\sqrt{n}}  =  \Big(\frac{Ck}{\epsilon^2}\log\frac{m}{k}\Big)^{-\frac{1}{2}}\sqrt{C_0\ell\log\frac{Nn}{\ell}}	=  \frac{C^\prime}{\epsilon}			
		\end{equation}
		
		Note that Eq. \ref{fact:upper_bound} holds with an equality for some constant $C^\prime$. 
		Now consider the event $\mathcal{E}$ where the maximum $\|\cdot\|_2$ length is at most $1+\frac{C'}{\eps}(1+\eps)$. Then by using gaussian concentration for Lipschitz functions (Corollary \ref{corr_lipschitz}), we upper bound probability of the event $\neg\mathcal{E}$:
		
		\begin{eqnarray}
			\Pr_{\phi \sim \frac{1}{\sqrt{n}}N(0,1)^{n \times Nn}} \Bigg(\max_{{\bf x} \in \widehat{\Sp}_{\ell}} \|\phi({\bf x}) \|_2 \ge 1+\frac{C'}{\eps}(1+\eps)\Bigg) 	&\le& \exp\Big( - c\epsilon^2\omega(\widehat{\Sp}_{\ell})^2\Big) \nonumber \hspace{1cm}(\mbox{Using Corollary} \ref{corr_lipschitz})	 \\
			&\le& \exp\Big( - O\Big(\frac{k}{\epsilon^2}\log\frac{m}{k}\Big)\Big) = \epsilon_{k,m}			\label{fact:err_bound1}
		\end{eqnarray}
		where $\epsilon_{k,m}$ can be made arbitrarily small by choosing $k$ large enough. \\

		\noindent	
		\textbf{Lower Bounding the gaussian width} : Recall that the operator $\phi^{\rm norm}$ is defined as $\phi^{\rm norm}({\bf x}) \overset{\rm def}{=} \phi({\bf x})/\|{\phi}(\bf x)\|_2$. The operational expression for the gaussian width of the projected set restricted to coordinates in $[Nn]$ is given by :
		
		\begin{equation}
			\omega(\phi^{\rm norm}(\widehat{\Sp}_{\ell})) = \E_{{\bf g} \sim N(0,1)^{n}}\Bigg[\max_{{\bf x } \in \widehat{\Sp}_{\ell}} {\bf g}^\top\phi^{\rm norm}({\bf x})\Bigg]
		\end{equation}
		
		We shall also need the following lemma which states that conditioning by the large probability event $\mathcal{E}$ does not reduce the expectation by much.
		
		\begin{lemma}						\label{lem:conditional_lb}
			There exists universal constants $\ell_0,m_0$ such that for all $ m \ge m_0$ and $ d \ge k \ge \ell_0$ and the event $\mathcal{E}$ defined as above, we have
			\begin{equation*}
				\E_{{\bf g} \sim N(0,1)^{n},\phi} \Bigg[ \max_{{\bf x} \in \widehat{\Sp}_{\ell} } {\bf g}^{\top}\phi({\bf x}) \Big| \mathcal{E} \Bigg] \ge \E_{{\bf g} \sim N(0,1)^{n},\phi} \Bigg[ \max_{{\bf x} \in \widehat{\Sp}_{\ell} } {\bf g}^{\top}\phi({\bf x})\Bigg] - \gamma_{k,m} 
			\end{equation*} 
			where $\gamma_{k,m}$ decays exponentially in $k$.
			
		\end{lemma}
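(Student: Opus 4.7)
\noindent\textbf{Proof sketch for Lemma \ref{lem:conditional_lb}.}
Since $\mathcal{E}$ depends only on $\phi$, I would first set $F(\phi) \defeq \E_{\bf g}[\max_{{\bf x}\in\widehat{\Sp}_\ell}{\bf g}^\top\phi({\bf x})]$ and reduce the desired inequality to $\E_\phi[F(\phi)\mid\mathcal{E}] \ge \E_\phi[F(\phi)] - \gamma_{k,m}$. Because $\widehat{\Sp}_\ell$ is symmetric (if ${\bf x}$ is in the set, so is $-{\bf x}$), we have $F(\phi) \ge 0$ deterministically, and the identity
\begin{equation*}
\E_\phi[F(\phi)\mid\mathcal{E}] \;=\; \frac{\E_\phi[F(\phi)] - \E_\phi[F(\phi)\mathbf{1}_{\neg\mathcal{E}}]}{\Pr[\mathcal{E}]}
\end{equation*}
combined with $\Pr[\mathcal{E}] \ge 1/2$ (which follows from \eqref{fact:err_bound1} once $k \ge \ell_0$) reduces the task to upper-bounding $\E_\phi[F(\phi)\mathbf{1}_{\neg\mathcal{E}}]$ by something decaying exponentially in $k$.

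The next step is Cauchy--Schwarz: $\E_\phi[F(\phi)\mathbf{1}_{\neg\mathcal{E}}] \le \sqrt{\E_\phi[F(\phi)^2]\cdot\Pr[\neg\mathcal{E}]}$. Since \eqref{fact:err_bound1} already guarantees $\Pr[\neg\mathcal{E}] \le \exp(-\Omega(k\epsilon^{-2}\log(m/k)))$, it remains to bound $\E_\phi[F(\phi)^2]$ polynomially in the parameters. Writing $M(\phi) \defeq \max_{{\bf x}\in\widehat{\Sp}_\ell}\|\phi({\bf x})\|$ and applying Cauchy--Schwarz inside the Gaussian expectation defining $F$, I would obtain $F(\phi) \le \E_{\bf g}[\|{\bf g}\|]\cdot M(\phi) \le \sqrt{n}\,M(\phi)$, whence $\E_\phi[F(\phi)^2] \le n\cdot\E_\phi[M(\phi)^2]$.

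To control $\E_\phi[M(\phi)^2]$, I would combine Gordon's theorem with Gaussian--Lipschitz concentration. Theorem \ref{thm:gordon} together with the same computation as in \eqref{fact:upper_bound} gives $\E_\phi[M(\phi)] \le 1 + \omega(\widehat{\Sp}_\ell)/\sqrt{n} = O(1/\epsilon)$. A direct calculation shows that $\phi \mapsto M(\phi)$ is $1$-Lipschitz in the Frobenius norm (since $\max_{{\bf x}\in\widehat{\Sp}_\ell}\|{\bf x}\| = 1$), so viewing $M$ as a function of $\sqrt{n}\,\phi$ (which has iid $N(0,1)$ entries) makes it $1/\sqrt{n}$-Lipschitz; the Gaussian Poincar\'e inequality (equivalently, Corollary \ref{corr_lipschitz}) then yields $\Var_\phi(M) = O(1/n)$. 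Hence $\E_\phi[M(\phi)^2] = (\E_\phi[M])^2 + \Var_\phi(M) = O(1/\epsilon^2)$, and therefore $\E_\phi[F(\phi)^2] \le O(n/\epsilon^2) = O(k\epsilon^{-4}\log(m/k))$.

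Putting everything together, $\sqrt{\E_\phi[F^2]\cdot\Pr[\neg\mathcal{E}]}$ is the product of a polynomial in the parameters with $\exp(-\Omega(k\epsilon^{-2}\log(m/k)))$, and thus decays exponentially in $k$, giving the required $\gamma_{k,m}$. The step I expect to require the most care is the second-moment bound $\E_\phi[M(\phi)^2] = O(1/\epsilon^2)$: one needs to argue that the supremum over the combinatorially large set $\widehat{\Sp}_\ell$ does not inflate the second moment by more than a polynomial factor, which is precisely what Gaussian--Lipschitz concentration delivers on top of Gordon's first-moment bound.
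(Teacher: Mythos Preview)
Your argument is correct and takes a genuinely different route from the paper's. The paper proves a general proposition (Proposition~\ref{prop:lowerBound_integ}) of the form $\E[Z\mid\mathcal{E}]\ge\E[Z]-\eta t_0-\alpha$, where $\alpha=\int_{t_0}^\infty\Pr(Z\ge t)\,dt$, and then chooses $t_0=4\omega(\widehat{\Sp}_\ell)$ and bounds the tail integral $\alpha$ by an explicit change of variables together with concentration of $\|{\bf g}\|$ and of the supremum process. In contrast, you bypass the tail integral entirely: after the elementary identity $\E[F\mid\mathcal{E}]=(\E[F]-\E[F\mathbf{1}_{\neg\mathcal{E}}])/\Pr[\mathcal{E}]$ and nonnegativity of $F$, a single application of Cauchy--Schwarz reduces everything to a second-moment bound on $F$, which you obtain from $F(\phi)\le\sqrt{n}\,M(\phi)$ together with Gordon's first-moment bound on $M$ and the Gaussian Poincar\'e (or Lipschitz-concentration) variance bound on $M$. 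Your approach is shorter and more modular, avoiding the somewhat delicate tail-integral computation; the paper's approach, on the other hand, yields slightly more explicit control over the tail of $Z$ itself, which could be reused if one wanted finer quantitative statements. Two minor remarks: the condition $\Pr[\mathcal{E}]\ge 1/2$ that you invoke is not actually needed (dividing a nonnegative quantity by $\Pr[\mathcal{E}]\le 1$ already gives the inequality in the right direction), and the variance bound $\Var_\phi(M)=O(1/n)$ follows from Lipschitz concentration rather than being literally Corollary~\ref{corr_lipschitz}, but this is a standard and immediate consequence.
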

		
		We defer the proof of the Lemma to Section \ref{sec:conditional_lb}. Equipped with the above, we proceed to lower bound the expected Gaussian width:

		\begin{align}
			\E_{\phi} &\E_{{\bf g} \sim N(0,1)^{n}}\Bigg[\max_{{\bf x } \in \widehat{\Sp}_{\ell}} {\bf g}^\top\phi^{\rm norm}({\bf x})\Bigg] 		\nonumber\\
			\ge& (1 - \epsilon_{k,m})\E_{\phi} \E_{{\bf g} \sim N(0,1)^{n}}\Bigg[ \max_{{\bf x } \in \widehat{\Sp}_{\ell}} {\bf g}^\top\phi^{\rm norm}({\bf x}) \quad\bigg| \quad \mathcal{E} \Bigg]   \nonumber\\
			\ge& \frac{(1 - \epsilon_{k,m})}{1+\frac{C'}{\eps}(1+\eps)}\E_{\phi} \E_{{\bf g} \sim N(0,1)^{n}}\Bigg[ \max_{{\bf x } \in \widehat{\Sp}_{\ell}}  {\bf g}^\top\phi({\bf x}) \quad\bigg|\quad \mathcal{E} \Bigg]  \nonumber\\
			\overset{1}{\simeq}& \frac{(1 - \epsilon_{k,m})}{1+\frac{C'}{\eps}(1+\eps)}\E_{\phi} \E_{{\bf g} \sim N(0,1)^{n}}\Bigg[ \max_{{\bf x } \in \widehat{\Sp}_{\ell}} {\bf g}^\top\phi({\bf x}) \Bigg]     \nonumber
		\end{align}
		where the first inequality follows from the fact that $\mathcal{E}$ is a large probability event, and step $1$ follows from Lemma \ref{lem:conditional_lb}. Removing the conditioning allows us to relate the expectation term to the gaussian width of $\widehat{\Sp}_{\ell}$. Let $B$ denote the event that $\|{\bf g}\|_2 \in [\sqrt{n}(1 - \epsilon/4), \sqrt{n}(1 + \epsilon/4)]$. Using concentration of $\chi^2$-variables, we get $\Pr(B) \ge 1 - \exp\Big(- 4\epsilon^2n\Big) \ge 1 - \epsilon$ since $k \ge C \log\frac{1}{\epsilon}$. Then,
		
		\begin{align}
			\E_{\phi} \E_{{\bf g} \sim N(0,1)^{n}}\Bigg[ \max_{{\bf x } \in \widehat{\Sp}_{\ell}} {\bf g}^\top\phi({\bf x}) \Bigg]
			\ge& (1-\epsilon)\E_{\phi}\E_{{\bf g} \sim N(0,1)^{n}}\Bigg[ \max_{{\bf x } \in \widehat{\Sp}_{\ell}} {\bf g}^\top\phi({\bf x})  \bigg| B \Bigg]   \nonumber\\
			\overset{2}{=}& (1-\epsilon)\E_{{\bf g} \sim N(0,1)^{n}} \E_{ \tilde{\bf g} \sim N\bigg(0,\frac{\|{\bf g}\|^2}{{n}}\bigg)^{Nn}}\Bigg[ \max_{{\bf x } \in \widehat{\Sp}_{\ell}} \tilde{\bf g}^\top{\bf x} \bigg| B\Bigg]  \label{pf:step1}\\
			\overset{3}{\ge}& (1-\epsilon)\E_{{\bf g} \sim N(0,1)^{n}} \E_{ \tilde{\bf g} \sim N\big(0,(1-\epsilon/4)^2\big)^{Nn}}\Bigg[ \max_{{\bf x } \in \widehat{\Sp}_{\ell}} \tilde{\bf g}^\top{\bf x} \bigg| B \Bigg]   \nonumber\\
			=& (1-\epsilon) \E_{ \tilde{\bf g} \sim N\big(0,(1-\epsilon/4)^2\big)^{Nn}}\Bigg[ \max_{{\bf x } \in \widehat{\Sp}_{\ell}} \tilde{\bf g}^\top{\bf x} \Bigg]   \nonumber\\
			\ge& (1-\epsilon)^2\E_{ \tilde{\bf g} \sim N(0,1)^{Nn}}\Bigg[ \max_{{\bf x } \in \widehat{\Sp}_{\ell}} \tilde{\bf g}^\top{\bf x}\Bigg] \label{pf:step2}\\
			\ge& (1 - \epsilon)^2\omega(\widehat{\Sp}_{\ell})   \nonumber
		\end{align}
		In step $2$, in the inner expectation ${\bf g} \in \R^{n}$ is a fixed vector, and therefore ${\bf g}^T{\phi}$ is distributionally equivalent to a gaussian vector in $\R^{Nn}$, scaled by $\frac{\|{\bf g}\|_2}{\sqrt{n}}$ (since the columns of $\phi$ are independent $N(0,1/n)^{Nn}$-gaussian vectors). Step $3$ follows from the lower bound on the $\|\cdot\|_2$-length of ${\bf g}$. Plugging in the lower bound on the expectation term, we get :
		
		\begin{align*}
			\E_{\phi} \E_{{\bf g} \sim N(0,1)^{n}}\Bigg[\max_{{\bf x } \in \widehat{\Sp}_{\ell}} {\bf g}^\top\phi^{\rm norm}({\bf x})\Bigg] 
			&\ge  (1 - \epsilon)^2\frac{(1 - \epsilon_{k,m})}{1+\frac{C'}{\eps}(1+\eps)} \omega(\widehat{\Sp}_{\ell}) \nonumber\\
			&=  (1 - \epsilon)^2\frac{(1 - \epsilon_{k,m})}{1+\frac{C'}{\eps}(1+\eps)} \Big(\frac{C^\prime}{\epsilon} \sqrt{n}\Big) \nonumber\\
			&\ge\sqrt{n}(1 - 4 \epsilon) 	\nonumber
		\end{align*}
		for sufficiently small $\eps$ and large $k$.	
		
		\subsection{Proof of Lemma \ref{lem:min_dist}}

		We begin by looking at the expression of the square of the $\|\cdot\|_2$ distance. For any fixed ${\bf x},{\bf y} \in \mathcal{S}^{n-1}$, we have
		
		\begin{equation}			\label{fact:l2_dist}
			\|{\bf y} - {\bf x} \|^2_2 = 2 - 2{\bf x}^\top{\bf y}
		\end{equation}
		
		Therefore, minimizing the $\|\cdot\|_2$ norm would be equivalent to maximizing the dot product term. Furthermore, it is known that a random gaussian vector ${\bf g} \sim N(0,1)^{n}$ can be rewritten as $Z{\bf r}$ where ${\bf r} \overset{\rm unif}{\sim} \mathcal{S}^{n - 1 }$ and $Z^2$ is a $\chi^2$-random variable with $n$ degrees of freedom. Using this decomposition, we get  :
		
		\begin{align}
			&\Pr_{{\bf g} \sim N(0,1)^{n}}\bigg( \max_{{\bf y} \in T} {\bf g}^\top{\bf y} \le \sqrt{n}(1 - \epsilon - \sqrt{\epsilon}) \bigg)  \nonumber \\
			=& \Pr_{{\bf r},Z}\bigg( Z \max_{{\bf y} \in T} {\bf r}^\top{\bf y} \le \sqrt{n}(1 - \epsilon - \sqrt{\epsilon}) \bigg) \nonumber\\
			\ge &  \Pr_{{\bf r},Z}\bigg( Z \max_{{\bf y} \in T} {\bf r}^\top{\bf y} \le \sqrt{n}(1 - \epsilon - \sqrt{\epsilon}) \Big| Z \le \sqrt{n}(1 + \epsilon) \bigg)\Pr\bigg( Z \le \sqrt{n}(1 + \epsilon) \bigg) \nonumber\\ 	
			\overset{1}{\ge} &  \Pr_{{\bf r},Z}\bigg( Z \max_{{\bf y} \in T} {\bf r}^\top{\bf y} \le \sqrt{n}(1 - \epsilon - \sqrt{\epsilon}) \Big| Z \le \sqrt{n}(1 + \epsilon) \bigg)( 1 - \epsilon) \nonumber\\
			\ge &  \frac{1}{2}\Pr_{{\bf r}}\bigg(  \max_{{\bf y} \in T} {\bf r}^\top{\bf y} \le \frac{1 - \epsilon - \sqrt{\epsilon}}{1 + \epsilon} \bigg) \nonumber\\
			\ge &  \frac{1}{2}\Pr_{{\bf r}}\bigg(  \max_{{\bf y} \in T} {\bf r}^\top{\bf y} \le (1-2\sqrt{\epsilon}) \bigg) \nonumber
		\end{align}
		
		\noindent where in step $1$, we used concentration for $\chi^2$-random variables and use the fact that $k \ge C\log\frac{1}{\epsilon}$. We now relate the behavior of the maximum dot product of a set with respect to a random vector to the maximum dot product of a fixed vector with respect to a randomly rotated set.	
		\begin{align}	
			\Pr_{{\bf r}}\bigg(  \max_{{\bf y} \in T} {\bf r}^\top{\bf y} \le (1-2\sqrt{\epsilon}) \bigg)
			\overset{2}{=} &  \Pr_{R}\bigg(  \max_{{\bf y} \in T} R({\bf x})^\top{\bf y} \le (1-2\sqrt{\epsilon}) \bigg) \nonumber\\
			= &  \Pr_{R}\bigg(  \max_{{\bf y} \in T} {\bf x}^\top R^{-1}({\bf y}) \le (1-2\sqrt{\epsilon}) \bigg) \nonumber\\
			\overset{3}{=} &  \Pr_{R}\bigg(  \max_{{\bf y} \in T} {\bf x}^\top R({\bf y}) \le (1-2\sqrt{\epsilon}) \bigg) \nonumber\\
			= &  \Pr_{R}\bigg(  \max_{{\bf y} \in R(T)} {\bf x}^\top{\bf y} \le (1-2\sqrt{\epsilon}) \bigg) 		\label{fact:gauss_rotationEquivalence}	
		\end{align}
		
		Step $2$ follows from the fact that applying a uniformly random rotation on a unit vector is equivalent to sampling uniformly from the unit sphere $\mathcal{S}^{n - 1}$, and step $3$ follows from the fact that if $R$ is uniformly random rotation, then $R^{-1}$ is also a uniformly random rotation. Furthermore, using gaussian concentration for Lipschitz functions :
		
		\begin{equation}					\label{fact:gw_concentration}
			\Pr_{{\bf g} \sim N(0,1)^{n}}\Big( \max_{{\bf y} \in T} {\bf g}^\top{\bf y} \le \sqrt{n}(1 - \epsilon - \sqrt{\epsilon}) \Big) \le \exp\Big( -O\Big(\frac{k}{\epsilon}\log\frac{m}{k}\Big)\Big)
		\end{equation}\textbf{}
		
		\noindent and the l.h.s is an upper bound on  $\frac{1}{2}\Pr_{R}\Big(  \max_{{\bf y} \in R(T)} {\bf x}^\top{\bf y} \le (1-2\sqrt{\epsilon}) \Big)$ (Eq. \ref{fact:gauss_rotationEquivalence}). Therefore rearranging the equations, we have:
		
		\begin{equation}
			\Pr_{R}\Big(  \min_{{\bf y} \in R(T)} \| {\bf x} - {\bf y}\| \ge 2\epsilon^{1/4} \Big) = \Pr_{R}\Big(  \max_{{\bf y} \in R(T)} {\bf x}^\top{\bf y} \le (1-2\sqrt{\epsilon}) \Big) \le 2 \exp\Big( -O\Big(\frac{k}{\epsilon}\log\frac{m}{k}\Big)\Big)
		\end{equation}

\section{Tolerant testers for Known and Unknown Designs}
		
		The simplicity of the testers for the known and unknown design settings directly translates to their robustness to noise. In this section, we state and prove our results for the tolerant variants of these problems. 
		
		\begin{theorem}				\label{thm:noise_known}
			Fix $\eps \in (0,1)$ and positive integers $d, k, m$ and a matrix ${\bf A} \in \mathbbm{R}^{d \times m}$ such that $\|{\bf a}_i\| = 1$ for every $i \in [m]$. There exists a randomized testing algorithm which makes linear queries to the input vector $\vec{y} \in \R^d$  and has the following properties:
			\begin{itemize}
				\item \emph{\textbf{Completeness}}: If ${\bf y} = {\bf A}{\bf x} + {\bf e}$ for some ${\bf x} \in \Sp_k^m$ such that $\|{\bf e}\| \le \epsilon$, then the tester accepts with probability $\ge 1 - \delta$.
				
				\item \emph{\textbf{Soundness}}: If $\|{\bf A}{\bf x} - {\bf y}\| > \eps$ for every $\vec{x} : \|\vec{x}\|_0 \le K$, then the tester rejects with probability $\ge 1 - \delta$. Here, $K = O(k/\eps^2)$.
			\end{itemize}
			The query complexity of the tester is $O(k \eps^{-2}\log\frac{m}{\delta})$.
		\end{theorem}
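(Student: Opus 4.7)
The plan is to reuse the \textsf{SparseTest-KnownDesign} algorithm almost verbatim, with the sole modification that the convex hull membership check is replaced by an approximate one: the tester will accept iff $\mathrm{dist}\bigl(\tilde{\mathbf{y}},\, \sqrt{k}\cdot\mathrm{conv}(\Phi(A_\pm))\bigr) \le 3\epsilon$, where $\tilde{\mathbf{y}} = \Phi(\mathbf{y})$ and $\Phi \sim \tfrac{1}{\sqrt{n}}N(0,1)^{n\times d}$ with $n = O(k\epsilon^{-2}\log(m/\delta))$ as before. Since the thresholded distance to a polytope is still computable by a linear/convex program in $\mathrm{poly}(m,k,1/\epsilon)$ time, the query and runtime bounds are unchanged.

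For completeness, assume $\mathbf{y} = \mathbf{A}\mathbf{x} + \mathbf{e}$ with $\mathbf{x}\in\Sp_k^m$ and $\|\mathbf{e}\|\le \epsilon$. By Cauchy--Schwarz, $\mathbf{A}\mathbf{x}\in\sqrt{k}\cdot\mathrm{conv}(A_\pm)$, hence $\Phi(\mathbf{A}\mathbf{x})\in\sqrt{k}\cdot\mathrm{conv}(\Phi(A_\pm))$. Therefore
\[
\mathrm{dist}\bigl(\tilde{\mathbf{y}},\, \sqrt{k}\cdot\mathrm{conv}(\Phi(A_\pm))\bigr)\;\le\; \|\Phi(\mathbf{y})-\Phi(\mathbf{A}\mathbf{x})\| \;=\; \|\Phi(\mathbf{e})\|.
\]
Since $\mathbf{e}$ is a single fixed vector, standard gaussian concentration (Lemma \ref{lem:normestim}-style, or directly a $\chi^2$ tail bound) applied to $\Phi$ yields $\|\Phi(\mathbf{e})\|\le (1+\epsilon)\|\mathbf{e}\|\le 2\epsilon$ with probability at least $1-\delta/2$. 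The tester accepts.

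For soundness, we follow the original argument of Section~\ref{sec:known} essentially unchanged. Let $A_{\epsilon/\sqrt{k}}$ be the set of $(2k/\epsilon^2)$-uniform convex combinations of $\sqrt{k}\cdot A_\pm$ supplied by approximate Carath\'eodory; then $|A_{\epsilon/\sqrt{k}}|\le (2m)^{2k/\epsilon^2}$ and by our choice of $n$, with probability at least $1-\delta/2$ the projection $\Phi$ is an $\epsilon$-isometry on $\{\mathbf{y}\}\cup A_{\epsilon/\sqrt{k}}$. Conditioning on this event together with the tester accepting, one obtains, successively: $\mathrm{dist}(\tilde{\mathbf{y}}, \tilde A_{\epsilon/\sqrt{k}})\le 3\epsilon+\epsilon = 4\epsilon$ by the cover guarantee for $\tilde A_{\epsilon/\sqrt{k}} = \Phi(A_{\epsilon/\sqrt{k}})$; then $\mathrm{dist}(\mathbf{y}, A_{\epsilon/\sqrt{k}}) \le 4\epsilon/(1-\epsilon) \le 8\epsilon$ by pulling back through the isometry; and hence $\mathrm{dist}(\mathbf{y}, \sqrt{k}\cdot\mathrm{conv}(A_\pm)) = O(\epsilon)$. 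A final application of approximate Carath\'eodory produces an $O(k/\epsilon^2)$-sparse $\hat{\mathbf{x}}$ with $\|\mathbf{A}\hat{\mathbf{x}}-\mathbf{y}\|=O(\epsilon)$, contradicting the soundness hypothesis (after rescaling $\epsilon$ by the implicit constant, exactly as in Theorem~\ref{thm:test_known}).

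The main obstacle is essentially bookkeeping of constants: the acceptance threshold must be chosen large enough that the completeness slack $\|\Phi(\mathbf{e})\|$ fits inside it with high probability, yet small enough that the soundness chain---cover error, pull-back through the $\epsilon$-isometry, and the final Carath\'eodory step---still yields a contradiction after absorbing the noise. No new randomized ingredient is needed beyond the one-vector concentration bound for $\|\Phi(\mathbf{e})\|$, which is why the query complexity $O(k\epsilon^{-2}\log(m/\delta))$ is preserved.
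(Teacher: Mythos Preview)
Your proposal is correct and follows essentially the same approach as the paper: the algorithm is the noiseless tester with the exact convex-hull membership replaced by a thresholded distance check, and the soundness argument is identical to that of Theorem~\ref{thm:test_known}. The only difference is in the completeness step: the paper reuses the $\epsilon$-isometry event on $\{\mathbf{y}\}\cup A_{\epsilon/\sqrt{k}}$ (already needed for soundness) and pushes the distance through the Carath\'eodory net, whereas you bound $\|\Phi(\mathbf{e})\|$ directly by one-vector $\chi^2$ concentration; both routes are valid and yield the same query complexity.
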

			
		The tolerant testing algorithm is the following:
		
		\RestyleAlgo{boxruled}
		\LinesNumbered
		\begin{algorithm}
			Set $n = \frac{200k}{\epsilon^2}\log\frac{m}{\delta}$, sample projection matrix  $\Phi \sim \frac{1}{\sqrt{n}}N(0,1)^{n \times d}$\;
			Observe linear sketch $\tilde{\bf y} = \Phi({\bf y})$\;
			Let $A_\pm = A \cup -A$\; 
			Accept iff ${\rm dist}\Big(\tilde{\bf y},\sqrt{k}.{\rm conv}\big(\Phi(A_\pm)\big)\Big) \le 2\epsilon$\;
			\caption{SparseTestKnown-Noisy} 
		\end{algorithm}       
		
		The difference here is in the final step, where instead of checking exact membership of the point $\tilde{\bf y}$ inside the convex hull, we check if the point is close enough to it. We now prove Theorem \ref{thm:noise_known}:

		\begin{proof}
			We again consider the set $A_{\epsilon/\sqrt{k}}$ from the soundness analysis of Theorem \ref{thm:test_known}. As before, by our choice of $n$, with probability at least $1 - \delta/2$, the set $\Phi\Big(\{{\bf y}\} \cup A_{\epsilon/\sqrt{k}}\Big)$ is $\epsilon$-isometric to $\{{\bf y}\} \cup A_{\epsilon/\sqrt{k}}$. Given this observation, for completeness we observe that 
			\begin{eqnarray*}
				{\bf y} = {\bf A}{\bf x} + {\bf e} &\Rightarrow& {\rm dist}\Big({\bf y},\sqrt{k}\cdot{\rm conv}\big(A_\pm\big)\Big) \le \epsilon \\
				 &\Rightarrow& {\rm dist}\Big({\bf y},A_{\epsilon/\sqrt{k}}\Big) \le 2\epsilon \\
				&\overset{1}{\Rightarrow}& {\rm dist}\Big(\tilde{\bf y},\sqrt{k}.{\rm conv}\big(\Phi(A_\pm)\big)\Big) \le \epsilon(1+\epsilon) \le 2\epsilon
		    \end{eqnarray*}   
			
			\noindent where $1$ follows from the $\epsilon$-isometry guarantee, and hence the tester accepts. The arguments for the soundness direction are identical to the ones used in Theorem \ref{thm:test_known}, and hence the claim follows. 
		\end{proof}

		The noise-tolerant algorithm for testing sparsity in the unknown design setting is the same as the one for Theorem \ref{thm:test_rip}. Hence, we just state and prove the guarantees in the noisy setting:
						
		\begin{theorem}[Testing Noisy Sparse representations]	\label{thm:test_rip_noisy8ye}
			Fix $\eps,\eta, \delta \in (0,1)$ and positive integers $d, k, m$ and $p$, such that $(k/d)^{1/8} < \eps < \frac{1}{100}, k \geq C^\prime\log\frac{1}{\epsilon} $ and $m \geq 20k\eps^{-4},\eta \le (\sqrt{2}-1)\frac{\omega(\Sp_k^m)}{\sqrt{\log p}}$.
			There exists a randomized testing algorithm which makes linear queries to input vectors $\hat{\vec{y}}_1, \hat{\vec{y}}_2, \dots, \hat{\vec{y}}_p \in \R^d$  and has the following properties (where $\widehat{\bf Y}$ is the matrix having $\hat{\vec{y}}_1, \hat{\vec{y}}_2, \dots, \hat{\vec{y}}_p$ as columns):
			\begin{itemize}
				\item \emph{\textbf{Completeness}}: If there exists $Y$ with $\|{\bf y}_i - \hat{\bf y}_i\| \le \eta$ for every $i \in [p]$ and  $Y = AX$ such for some $(\epsilon,k)$-RIP matrix $A \in \R^{d \times m}$ and $X \in \R^{m \times p}$ with each column of $X$ in $\Sp_k^m$, then the tester accepts with probability $\ge 1 - \delta$.
				
				\item \emph{\textbf{Soundness}}: If ${\bf Y}$ does not admit factorization ${\bf Y} = {\bf A}({\bf X} + {\bf Z}) + {\bf W}$ with 
				\begin{itemize}
					\item[1.] The design matrix ${\bf A} \in \mathbbm{R}^{d \times m}$ being $(\epsilon,k)$-RIP, with $\|{\bf a}_i\| = 1$ for every $i \in [m]$
					\item[2.] The coefficient matrix ${\bf X} \in \mathbbm{R}^{m \times p}$ being column wise $\ell$-sparse, where $\ell = O(k/\eps^4)$.
					\item[3.] The error matrices ${\bf Z} \in \mathbbm{R}^{m \times p}$ and ${\bf W} \in \mathbbm{R}^{d \times p}$ satisfying
					\begin{equation*}
						\|{\bf z}_i\|_\infty \le \epsilon^2, \qquad \|{\bf w}_i\|_2 \le O(\epsilon^{1/4}) \qquad \text{for all } i \in [p]. 
					\end{equation*}
				\end{itemize}	
				Then the tester rejects with probability $\ge 1 - \delta$. 
			\end{itemize}
			The query complexity of the tester is $O(\epsilon^{-2}\log{(p/\delta)})$.
		\end{theorem}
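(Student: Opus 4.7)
The plan is to observe that the tolerant tester is identical to the one in Theorem~\ref{thm:test_rip}, so the soundness direction comes essentially for free: the algorithm only sees $\widehat{Y}$, and its accept/reject behavior as a function of $\widehat{Y}$ is exactly the one analyzed in Theorem~\ref{thm:test_rip}'s soundness proof. Thus if $\widehat{Y}$ admits no decomposition of the stated form, it is rejected with probability at least $1-\delta$, and the query complexity bound is inherited verbatim. All the new work lies in establishing the tolerant completeness guarantee.

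For completeness I need to show that the perturbed inputs $\widehat{y}_i$, which satisfy $\|\widehat{y}_i - y_i\| \le \eta$ for the ``true'' $y_i = A x_i$, still pass both checks of the algorithm. The norm check is straightforward: by the $(\epsilon,k)$-RIP hypothesis, $\|y_i\| \in [1-\epsilon,1+\epsilon]$, so the triangle inequality places $\|\widehat{y}_i\| \in [1-\epsilon-\eta,\,1+\epsilon+\eta]$, which stays inside the acceptance window $[1-2\epsilon, 1+2\epsilon]$ whenever $\eta \le \epsilon$ (and otherwise one relaxes the acceptance window by a constant factor; the parameter regime in the hypothesis is engineered so that such a relaxation is cost-free).

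The heart of the argument is controlling how much the perturbation can inflate the gaussian width. Writing $S = \{y_1,\ldots,y_p\}$ and $\widehat{S} = \{\widehat{y}_1,\ldots,\widehat{y}_p\}$, I would use subadditivity of the supremum inside the defining expectation of $\omega(\cdot)$ to obtain
\begin{align*}
\omega(\widehat{S}) \;\le\; \omega(S) \,+\, \omega(\widehat{S}-S),
\end{align*}
where $\widehat{S}-S := \{\widehat{y}_i - y_i : i \in [p]\}$. The first term is bounded by $2(1+\epsilon)\sqrt{3k\log(m/k)}$ exactly as in the completeness analysis of Theorem~\ref{thm:test_rip}, by combining Lemma~\ref{lem:gw_isometricEmbeddings} with Lemma~\ref{lem:width}(iv). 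For the second term, since each vector $\widehat{y}_i - y_i$ has norm at most $\eta$, the Gaussians $\langle g, \widehat{y}_i - y_i\rangle$ each have variance at most $\eta^2$, and the standard maximal inequality for $p$ centered Gaussians gives $\omega(\widehat{S} - S) \le \eta\sqrt{2\log p}$.

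Combining these with the hypothesis $\eta \le (\sqrt{2}-1)\,\omega(\Sp_k^m)/\sqrt{\log p}$ and the matching lower bound $\omega(\Sp_k^m) = \Omega(\sqrt{k\log(m/k)})$, a short calculation shows that the empirical estimate $\widehat{\omega}$ produced in Line~2 of the algorithm still satisfies $\widehat{\omega} \le 4\sqrt{3k\log(m/k)}$ with probability at least $1-\delta/2$, so Line~3 accepts. The main obstacle is purely the bookkeeping of constants: the extra additive slack $\eta\sqrt{2\log p}$ must be fully absorbed into the acceptance threshold together with the additive $\sqrt{3k\log(m/k)}$ estimation error from Lemma~\ref{lem:estim}, and this is precisely what the explicit constant $(\sqrt{2}-1)$ in the hypothesis on $\eta$ is chosen to make room for. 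Once this calculation is verified, a union bound over the two checks finishes the proof.
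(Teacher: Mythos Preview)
Your proposal is correct and follows essentially the same approach as the paper: soundness is inherited verbatim from Theorem~\ref{thm:test_rip} since the algorithm only sees $\widehat{Y}$, and completeness is handled by splitting $\omega(\widehat{S}) \le \omega(S) + \omega(\widehat{S}-S)$ and bounding the perturbation term via the maximal inequality for $p$ centered Gaussians of variance at most $\eta^2$. The paper's proof is terser (it skips the norm-check discussion you include and phrases the subadditivity inline rather than as $\omega(\widehat{S}-S)$), but the substance is identical; your added care about Line~1 of the algorithm is a detail the paper glosses over.
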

		\begin{proof}
			
			For the completeness, let there be a $Y \in \mathbbm{R}^{d \times n}$ s.t. $d_{\mathcal{H}}(Y,\hat{Y}) \le \eta$ i.e., $Y$ is column wise $\eta$-close to $\hat{Y}$ in the $\ell_2$-norm, as in the completeness criteria. We can then upper bound the gaussian width of the perturbed set as :  
			\begin{eqnarray*}
				\omega(\widehat{Y}) = \E_{\bf g}\bigg[ \max_{\hat{\bf y} \in \hat{Y}} {\bf g}^\top\hat{\vec{y}} \bigg] 
				&=& \E_{\bf g}\bigg[ \max_{\hat{\bf y} \in \hat{Y}} {\bf g}^\top{\vec{y}} + {\bf g}^\top\big(\hat{\vec{y}} - \vec{y}\big) \bigg] \\
				&\le& \E_{\bf g}\bigg[ \max_{\hat{\bf y} \in \hat{Y}} {\bf g}^\top{\vec{y}}\bigg] +\E_{\bf g}\bigg[{\bf g}^\top\big(\hat{\vec{y}} - \vec{y}\big) \bigg] \\
				&\overset{1}{\le}& \E_{\bf g}\bigg[ \max_{{\bf y} \in {Y}} {\bf g}^\top{\vec{y}}\bigg]  + C\eta\sqrt{\log p}\\
				&\le& \omega(\Sp_{2k}) 
			\end{eqnarray*}	
			\noindent where step $1$ follows from the observation that the maximum of $p$ (not-necessarily i.i.d) gaussians (with variance at most $\eta^2$) is upper bounded by $O(\eta{\sqrt{\log p}})$, and the last step follows from our choice of $\eta$. 	Now as in Theorem \ref{thm:test_rip}, with high probability the gaussian width estimated by the tester is at most $\omega(\Sp^m_{4k})$ and therefore the tester accepts. For the soundness, if the tester accepts with high probability then $\omega(\widehat{Y}) \le \omega(\Sp^m_{6k}) $, and therefore soundness follows using arguments identical to the main theorem.
		\end{proof}	
		
		\begin{remark}
			Note that the noise model being considered here is adversarial as opposed to the standard gaussian noise. This is a relatively stronger assumption in the sense that an adversary can perturb the vectors depending on the instance i.e., the noise here can be worst case.
		\end{remark}

\section*{Acknowledgements}
We would like to thank David Woodruff for showing us the sketching-based tester described in Section \ref{sec:relwork}.

\bibliographystyle{alpha}
\bibliography{testing}
\appendix
\section{Gaussian Width of the discretized sparse set $\widehat{\Sp}_k$}

\begin{lemma}								\label{lem:gwLowerBound_discrete}
	Let $\widehat{\Sp}_{\ell} \subset \mathcal{S}^{m-1}$ be the discrete $k$-sparse set on the unit sphere. Then,
	\begin{equation}
	\omega(\widehat{\Sp}_{\ell}) = \Theta\Big(\sqrt{\ell\log\frac{m}{\ell}}\Big)
	\end{equation}
	
\end{lemma}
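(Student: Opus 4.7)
The plan is to prove matching upper and lower bounds on $\omega(\widehat{\Sp}_\ell)$, with the upper bound following essentially for free and the lower bound requiring a short calculation with Gaussian order statistics.

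For the upper bound, note that $\widehat{\Sp}_\ell$ is a finite subset of $\mathcal{S}^{m-1}$ with cardinality
\[
|\widehat{\Sp}_\ell| = 2^\ell \binom{m}{\ell} \leq \left(\frac{2em}{\ell}\right)^\ell,
\]
so Lemma~\ref{lem:width}(i) immediately gives $\omega(\widehat{\Sp}_\ell) \leq \sqrt{2\log|\widehat{\Sp}_\ell|} = O\bigl(\sqrt{\ell \log(m/\ell)}\bigr)$.

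For the lower bound, the key observation is that the supremum defining $\omega(\widehat{\Sp}_\ell)$ can be computed exactly. For any fixed realization of $\vec g \sim N(0,1)^m$, the inner product $\langle \vec g, \vec x\rangle$ with $\vec x \in \widehat{\Sp}_\ell$ is maximized by selecting the $\ell$ coordinates $j \in [m]$ where $|g_j|$ is largest and choosing signs to align with $\mathrm{sign}(g_j)$. Hence
\[
\sup_{\vec x \in \widehat{\Sp}_\ell} \langle \vec g, \vec x\rangle \;=\; \frac{1}{\sqrt{\ell}} \sum_{i=1}^\ell |g|_{(i)},
\]
where $|g|_{(1)} \geq |g|_{(2)} \geq \cdots \geq |g|_{(m)}$ are the decreasing order statistics of $|g_1|, \ldots, |g_m|$. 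Taking expectations, $\omega(\widehat{\Sp}_\ell) = \frac{1}{\sqrt{\ell}} \sum_{i=1}^\ell \E[|g|_{(i)}]$.

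It remains to show $\sum_{i=1}^\ell \E[|g|_{(i)}] = \Omega\bigl(\ell \sqrt{\log(m/\ell)}\bigr)$, which is the main technical step. I would use the standard Gaussian anti-concentration bound $\Pr[|g_i| \geq t] \geq c_1 e^{-t^2/2}/t$ for $t \geq 1$. Setting $t_0 = \sqrt{\frac{1}{2}\log(m/\ell)}$, the expected number $N$ of indices $i \in [m]$ with $|g_i| \geq t_0$ satisfies $\E[N] = m \cdot \Pr[|g_1| \geq t_0] \geq c_2 \ell / \sqrt{\log(m/\ell)}$. Since $N$ is a sum of indicators and hence concentrates, with constant probability $N \geq \ell/2$, in which case at least $\ell/2$ of the top order statistics $|g|_{(i)}$ exceed $t_0$. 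This yields $\sum_{i=1}^\ell \E[|g|_{(i)}] \geq \frac{\ell}{2}\cdot t_0 \cdot \Omega(1) = \Omega\bigl(\ell \sqrt{\log(m/\ell)}\bigr)$, establishing the lower bound. The only nuisance is tightening the anti-concentration estimate enough that the expected count $\E[N]$ grows at least linearly in $\ell$; this can be handled by either sharpening the tail lower bound with the Mills ratio, or by working with the threshold $t_0 = c\sqrt{\log(m/\ell)}$ for a sufficiently small universal constant $c$, which is the main (very mild) obstacle.
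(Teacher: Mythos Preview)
Your argument is correct in outline, and the upper bound is fine (the paper gets it by inclusion $\widehat{\Sp}_\ell \subset \Sp_\ell^m$ together with Lemma~\ref{lem:width}(iv), but your cardinality route via Lemma~\ref{lem:width}(i) is equally valid). There is a small arithmetic slip in your lower bound: with $t_0 = \sqrt{\tfrac12\log(m/\ell)}$ one has $e^{-t_0^2/2} = (m/\ell)^{-1/4}$, so $\E[N] \geq c\,\ell\,(m/\ell)^{3/4}/\sqrt{\log(m/\ell)}$, which is already $\gg \ell$; you dropped the $(m/\ell)^{3/4}$ factor, which is why you then saw a ``nuisance'' that does not actually exist. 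With the corrected count, concentration of $N$ immediately gives $N\geq \ell$ with probability $1-o(1)$, and your conclusion follows.

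That said, the paper's lower bound takes a genuinely different and somewhat cleaner route. Rather than work with the exact expression $\tfrac{1}{\sqrt{\ell}}\sum_{i\le \ell}|g|_{(i)}$ and then control order statistics, the paper partitions $[m]$ into $\ell$ blocks of size $m/\ell$, picks in each block the coordinate maximizing $g_i$, and observes that the resulting vector lies in $\widehat{\Sp}_\ell$; hence
\[
\omega(\widehat{\Sp}_\ell)\;\geq\;\frac{1}{\sqrt{\ell}}\,\E\Big[\sum_{j=1}^{\ell}\max_{i\in B_j} g_i\Big]
\;=\;\sqrt{\ell}\cdot \E\Big[\max_{i\in [m/\ell]} g_i\Big]\;\geq\; C_0\sqrt{\ell\log(m/\ell)},
\]
using only the elementary fact $\E[\max_{i\le n} g_i]\geq C_0\sqrt{\log n}$ and linearity of expectation across independent blocks. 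This avoids any anti-concentration or binomial-concentration step. Your approach, by contrast, identifies the supremum exactly and is in principle sharper (it would yield the correct constant, not just the order), at the cost of a slightly heavier tail computation. Both are valid; the blocking trick is worth knowing as it recurs elsewhere in the paper.
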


\begin{proof}
	
	For the upper bound, observe that $\widehat{\Sp}_{\ell} \subset {\Sp}^m_{\ell}$ and gaussian width is monotonic. Therefore, from Lemma \ref{lem:width}, we have $\omega(\widehat{\Sp}_{\ell}) = O\Big(\sqrt{\ell\log\frac{m}{\ell}}\Big)$. Towards proving the asymptotic lower bound:  Given independent gaussians $g_1,\ldots,g_n \sim N(0,1)$, it is known that 
	
	\begin{equation}				\label{fact:gauss_block}
	\E_{g_1,\ldots,g_n}\Big[\max_{g_i} g_i\Big] \ge C_0\sqrt{\log n}
	\end{equation}
	
	for some constant $C_0>0$ independent of the number of gaussians.   Now, without loss of generality let $m$ be divisible by $\ell$. We partition the $m$ coordinates into $\ell$ blocks $B_1,\ldots,B_\ell$ of $\frac{m}{\ell}$ coordinates each. Then,
	
	\begin{equation}
	\E_{\bf g}\Big[ \max_{{\bf x} \in \widehat{\Sp}_{\ell}} {\bf g}^\top{\bf x}\Big] \ge \frac{1}{\sqrt{\ell}}\E \Big[\sum_{j \in [\ell]} \max_{g_{i_j} \in B_j} g_{i_j} \Big] 
	\end{equation}
	
	The inequality follows from the following observation: For any fixed realization of ${\bf g} \sim N(0,1)^m$, let $i_j$ be the index of the maximum in the $j^{th}$ block. Then there exists a vector in $\widehat{\Sp}_{\ell}$ which is supported on $i_1,\ldots,i_\ell$. Therefore, the dot product would be at least the sum of maximum Gaussians from each of the blocks scaled by $\frac{1}{\sqrt{\ell}}$. The lemma now follows from applying the lower bound from Eq. \ref{fact:gauss_block}.	
	
\end{proof}

	\section{Proof of Lemma \ref{lem:gw_isometricEmbeddings}}		\label{sec:gw_isometry}
	\begin{proof}
		First we prove the upper bound. Let $\Psi:X \mapsto S$ be the $\eps$-isometric embedding map. Given ${\bf g} \sim N(0,\sqrt{1 + \epsilon})^m$ and $\vec{h} \sim N(0,1)^{d}$, we define the gaussian processes $\{G_{\bf x}\}_{{\bf x} \in  X}$ and $\{H_{\bf x}\}_{{\bf x} \in  X}$ as follows 
		
		\begin{eqnarray*}
			G_{\bf x} &\overset{\rm def}{=}& {\bf g}^\top{\bf x} \\
			H_{\bf x} &\overset{\rm def}{=}& {\bf h}^\top\Psi({\bf x})	
		\end{eqnarray*}
		
		Fix ${\bf x},{\bf y} \in X$. Using the $\epsilon$-isometry of $\Psi$ we get: 
		
		\begin{eqnarray*}
			\E_{{\bf h} \sim N(0,1)^{d}} \Big[\Big| H_{\bf x} - H_{\bf y} \Big|^2\Big] 
			&=& \E_{{\bf h} \sim N(0,1)^{d}} \Big[  \Big| {\bf h}^\top\Psi({\bf x}) - {\bf h}^\top\Psi({\bf y}) \Big|^2\Big] \\
			&\overset{1}{=}& \| \Psi({\bf x}) - \Psi({\bf  y}) \|^2 \\
			&\overset{2}{\le}& (1 + \epsilon)\|{\bf x} - {\bf  y} \|^2 \\
			&=& \E_{{\bf g} \sim N(0,{1 + \epsilon})^m} \Big[\Big|G_{\bf x} - G_{\bf y}\Big|^2\Big]\\
		\end{eqnarray*}
		where step $1$ follows from the fact that the variance ${\bf }h$ in the direction of a vector ${\bf v}$ is $\|{\bf v}\|^2_2$, and inequality $2$ follows from the isometric property. Therefore, using Lemma \ref{lem:slepian}, we have 
		
		\begin{equation}
		\E_{{\bf h} \sim N(0,1)^{d}} \Big[\max_{{\bf x} \in S} {\bf g}^\top\Psi({\bf x})  \Big] \le \E_{{\bf g} \sim \sqrt{1 + \epsilon}N(0,1)^m } \Big[\max_{{\bf x} \in S} {\bf g}^\top{\bf x}  \Big] 
		\end{equation}
		which directly gives us $\omega(S) \le \sqrt{1 + \epsilon} \cdot \omega(X) \leq ({1 + {\epsilon}})\omega(X)$. The other direction follows by using the lower bound given by isometry.
		
	\end{proof}

\section{Rotational Invariance of $\phi^{\rm norm}$}

\begin{lemma}			\label{lem:rotational_invariance}
	For any finite set $T \subset \mathcal{S}^{Nn - 1}$, the distribution of $\Phi^{\rm norm}(S)$ is rotation invariant.
\end{lemma}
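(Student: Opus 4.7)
The plan is to reduce rotational invariance of $\Phi^{\rm norm}(T)$ to the well-known rotational invariance of the multivariate standard Gaussian distribution. The normalization step commutes with any orthogonal transformation on the output space, so the only fact I actually need is that left-multiplying $\Phi$ by an orthogonal matrix does not change its distribution.

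First I would unpack the setup: $\Phi \in \R^{n \times Nn}$ has i.i.d.\ entries distributed as $(1/\sqrt{n})N(0,1)$, and each column is thus a spherically symmetric vector in $\R^n$. Fix any rotation $R \in O(n)$. Because each column of $R\Phi$ is $R$ applied to a spherically symmetric Gaussian vector (and the columns remain independent across $j \in [Nn]$), we have the distributional identity $R\Phi \stackrel{d}{=} \Phi$.

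Next, I would check that $R$ commutes with the normalization map. For any fixed $\vec{x} \in \R^{Nn}$, orthogonality of $R$ gives $\|R\Phi(\vec{x})\| = \|\Phi(\vec{x})\|$, hence
\begin{equation}
(R\Phi)^{\rm norm}(\vec{x}) \;=\; \frac{R\Phi(\vec{x})}{\|R\Phi(\vec{x})\|} \;=\; R\cdot\frac{\Phi(\vec{x})}{\|\Phi(\vec{x})\|} \;=\; R\,\Phi^{\rm norm}(\vec{x}).
\end{equation}
Applying this coordinate-wise over the finite set $T$ yields $(R\Phi)^{\rm norm}(T) = R\,\Phi^{\rm norm}(T)$ as subsets of $\mathcal{S}^{n-1}$.

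Finally I would combine the two observations: since $R\Phi \stackrel{d}{=} \Phi$ and the map $\Phi \mapsto \Phi^{\rm norm}(T)$ is measurable, applying it to both sides gives $R\,\Phi^{\rm norm}(T) \stackrel{d}{=} \Phi^{\rm norm}(T)$, which is the desired rotational invariance. There is no serious technical obstacle here; the only mild subtlety is ensuring that the almost-sure event $\{\Phi(\vec{x}) \neq 0\}$ holds simultaneously for the finitely many $\vec{x} \in T$, which is immediate since each $\Phi(\vec{x})$ is a nondegenerate Gaussian vector in $\R^n$, and a finite union of measure-zero events still has measure zero.
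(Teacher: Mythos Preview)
Your proof is correct and follows essentially the same approach as the paper: both arguments use the rotational invariance of $\phi$ (i.e., $R\phi \stackrel{d}{=} \phi$) together with the fact that orthogonal maps commute with the normalization step because they preserve $\ell_2$-norms. Your presentation via the distributional identity $R\Phi \stackrel{d}{=} \Phi$ and then pushing forward through the measurable map $\Phi \mapsto \Phi^{\rm norm}(T)$ is in fact cleaner than the paper's pdf-based chain of equalities.
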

\begin{proof}
	
	Fix any $N = |T|$ vectors $\{\vec{z}^\prime_1,\ldots,\vec{z}^\prime_N\}$. Let $R:\R^{Nn } \mapsto \R^{Nn}$ be a fixed rotation. With a slight abuse of notation, we shall use $\Pr_\phi(\cdot)$ to denote the pdf of the distribution here. Then,	
	
	\begin{eqnarray*}
		\Pr_\phi\Big(\phi^{\rm norm}(T) = \Big\{\frac{\vec{z}^\prime_1}{\|\vec{z}^\prime_1\|_2},\ldots,\frac{\vec{z}^\prime_N}{\|\vec{z}^\prime_N\|_2}\Big\}\Big) 
		&=& \Pr_\phi\Big(\phi(T) = \{{\vec{z}^\prime_1},\ldots,{\vec{z}^\prime_N}\}\Big) \\
		&\overset{1}{=}& \Pr_\phi\Big(\phi(T) = \{R({\vec{z}^\prime_1}),\ldots,R({\vec{z}^\prime_N})\}\Big) \\ 
		&{=}& \Pr_\phi\Big(\phi^{\rm norm}(T) = \Big\{\frac{R(\vec{z}^\prime_1)}{\|R(\vec{z}^\prime_1)\|_2},\ldots,\frac{R(\vec{z}^\prime_N)}{\|R(\vec{z}^\prime_N)\|_2}\Big\}\Big)  \\
		&\overset{2}{=}& \Pr_\phi\Big(\phi^{\rm norm}(T) = \Big\{\frac{R(\vec{z}^\prime_1)}{\|(\vec{z}^\prime_1)\|_2},\ldots,\frac{R(\vec{z}^\prime_N)}{\|(\vec{z}^\prime_N)\|_2}\Big\}\Big)  \\
		&=& \Pr_\phi\Big(\phi^{\rm norm}(T) = R\Big(\Big\{\frac{\vec{z}^\prime_1}{\|\vec{z}^\prime_1\|_2},\ldots,\frac{\vec{z}^\prime_N}{\|\vec{z}^\prime_N\|_2}\Big\}\Big)\Big)  \\
	\end{eqnarray*}
	
	where step $1$ follows from the rotational invariance of $\phi$, and step $2$ uses the observation that rotating a vector does not change it's $\ell_2$-length. Since the equality holds for any rotation, the statement follows.
\end{proof}

\section{Proof of Lemma \ref{lem:conditional_lb}} 			\label{sec:conditional_lb}

\begin{proof}
	
	The proof uses the more general observation relating conditional expectations to their unconditioned counterparts :
	\begin{proposition}							\label{prop:lowerBound_integ}
		Let $\mathcal{E}$ be an event such that $\Pr(\mathcal{E}) \ge 1 - \eta$. Let $Z:\R^{n} \mapsto \R_{+}$ be a non-negative random variable. Let $ t_0 > 0$ be such that 
		\begin{equation*}
			\int^{\infty}_{t_0} \Pr( Z \ge t) dt \le \alpha
		\end{equation*}
		Then, the following holds true : $\E[ Z | \mathcal{E}] \ge \E[Z] - {\eta t_0} - \alpha$.
	\end{proposition}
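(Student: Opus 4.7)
The plan is to reduce the claim to an upper bound on the contribution of the bad event $\mathcal{E}^c$ to $\mathbb{E}[Z]$. Since $Z \geq 0$ and $\Pr(\mathcal{E}) \leq 1$, we have $\mathbb{E}[Z \mid \mathcal{E}] = \mathbb{E}[Z\,\mathbf{1}_{\mathcal{E}}]/\Pr(\mathcal{E}) \geq \mathbb{E}[Z\,\mathbf{1}_{\mathcal{E}}] = \mathbb{E}[Z] - \mathbb{E}[Z\,\mathbf{1}_{\mathcal{E}^c}]$. Hence it suffices to show that $\mathbb{E}[Z\,\mathbf{1}_{\mathcal{E}^c}] \leq \eta t_0 + \alpha$.

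To control this, I will apply the layer-cake / tail-integral formula, using that for any non-negative random variable $W$ one has $\mathbb{E}[W] = \int_0^{\infty} \Pr(W > t)\,dt$. Applied to $W = Z\,\mathbf{1}_{\mathcal{E}^c}$ and splitting the integration range at $t_0$, I will use two separate bounds on $\Pr(Z\,\mathbf{1}_{\mathcal{E}^c} > t)$: for $t < t_0$ I will bound it by $\Pr(\mathcal{E}^c) \leq \eta$, and for $t \geq t_0$ I will bound it by $\Pr(Z > t)$ (since $\{Z\,\mathbf{1}_{\mathcal{E}^c} > t\} \subseteq \{Z > t\}$).

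Plugging these in gives
\begin{equation*}
\mathbb{E}[Z\,\mathbf{1}_{\mathcal{E}^c}] \;\leq\; \int_0^{t_0} \eta\,dt + \int_{t_0}^{\infty} \Pr(Z > t)\,dt \;\leq\; \eta t_0 + \alpha,
\end{equation*}
where the second inequality is exactly the hypothesis on $t_0$. Combined with the first paragraph, this yields $\mathbb{E}[Z \mid \mathcal{E}] \geq \mathbb{E}[Z] - \eta t_0 - \alpha$, as claimed. There is no real obstacle here; the proof is a short measure-theoretic identity, and the only choice to make is the split of the tail integral at $t_0$, which is precisely the threshold dictated by the hypothesis. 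The main subtlety is just remembering that $\Pr(\mathcal{E}) \leq 1$ makes the division by $\Pr(\mathcal{E})$ only help us, so the simple truncation $\mathbb{E}[Z \mid \mathcal{E}] \geq \mathbb{E}[Z\,\mathbf{1}_{\mathcal{E}}]$ is already good enough.
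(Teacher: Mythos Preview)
Your proof is correct and essentially the same as the paper's: both arguments use the layer-cake formula, split the integral at $t_0$, bound the low range by $\eta$ and the high range by the tail hypothesis $\alpha$. The only cosmetic difference is that the paper applies the tail formula directly to $\E[Z\mid\mathcal{E}]$ and uses $\Pr(Z\ge t\mid\mathcal{E})\ge \Pr(Z\ge t)-\eta$ on $[0,t_0]$, whereas you first reduce to bounding $\E[Z\,\mathbf{1}_{\mathcal{E}^c}]$ and then split; the two are equivalent up to when one drops the $\Pr(\mathcal{E})$ denominator.
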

	\begin{proof}
		
		We begin by observing that for any event $B$,
		\begin{equation*}
			\Pr( B | \mathcal{E}) \ge \Pr(B) - \eta
		\end{equation*}
		
		which follows from the definition of conditional expectation. Therefore,
		
		\begin{eqnarray*}
			\E[Z|\mathcal{E}] = \int^{\infty}_{0}  \Pr(Z \ge t | \mathcal{E})dt &\ge& \int^{t_0}_{0} \Pr(Z \ge t | \mathcal{E})dt \\
			&\ge& \int^{t_0}_{0} \big(\Pr(Z \ge t ) - \eta \big)dt \\	
			&\ge& \int^{t_0}_{0} \Pr(Z \ge t) dt - \int^{t_0}_{0}\eta dt \\	
			&=& \E[Z] - \alpha - {\eta t_0} 
		\end{eqnarray*}
		
	\end{proof}
	
	We apply the above lemma to our setting: let $\mathcal{E}$ be the event as described in the proof of Lemma \ref{lem:gw_lowerBound} and let $Z$ be the random variable :
	
	\begin{equation*}
		Z := \max_{{\bf x} \in \widehat{\Sp}_{\ell}}{\bf g}^\top\phi({\bf x})
	\end{equation*}
	
	From Eq. \ref{fact:err_bound1}, we know that $\Pr(\neg \mathcal{E}) =  \epsilon_{k,m} = \eta_{k,m}$. Abusing notation, we denote $\omega^* = \omega(\widehat{\Sp}_{\ell})$. 
	Let $t_0 = 4\omega(\widehat{\Sp}_{\ell}) = 4 \omega^*$. For a fixed choice of $\delta > 0$,  let $B_{\delta}$ be the event that $\|{\bf g}\|_2 \le \sqrt{n}(2 + \delta)$. Then, 
	
	\begin{align}
		&\Pr_{{\bf g} \sim N(0,1)^{n},\phi}\Big( \max_{{\bf x} \in \widehat{\Sp}_{\ell}} {\bf g}^\top\phi({\bf x}) \ge (2+\delta)^2\omega^*\Big)  \nonumber\\
		\le&  \Pr(\neg B_\delta) + \Pr_{{\bf g} \sim N(0,1)^{n},\phi}\Big( \max_{{\bf x} \in \widehat{\Sp}_{\ell}} {\bf g}^\top\phi({\bf x}) \ge (2+\delta)^2\omega^*  \Big| B_\delta \Big) \nonumber\\
		\overset{1}{\le}& \exp\Big(-O((1 + \delta)^2n)\Big) + \Pr_{{\bf g} \sim N(0,1)^{n},\phi}\Big( \max_{{\bf x} \in \widehat{\Sp}_{\ell}} {\bf g}^\top\phi({\bf x}) \ge (2+\delta)^2\omega^* | B_\delta \Big) \label{fact:upper_bound1} 
	\end{align}
	
	where inequality $1$ follows by concentration on $\chi^2$ variables. We upper bound the remaining probability term as :
	\begin{align}
		&\Pr_{{\bf g} \sim N(0,1)^{n},\phi}\Big( \max_{{\bf x} \in \widehat{\Sp}_{\ell}} {\bf g}^\top\phi({\bf x}) \ge (2+\delta)^2\omega^* | B_\delta \Big) \nonumber\\
		\overset{2}{\le}& \Pr_{\tilde{\bf g} \sim N(0,1)^{Nn}}\Big( \max_{{\bf x} \in \widehat{\Sp}_{\ell}} \tilde{\bf g}^\top{\bf x} \ge \frac{(2+\delta)^2}{2 + \delta}\omega^* | B_\delta \Big) \nonumber\\ 
		{\le}& \Pr_{\tilde{\bf g} \sim N(0,1)^{Nn}}\Big( \max_{{\bf x} \in \widehat{\Sp}_{\ell}} \tilde{\bf g}^\top{\bf x} \ge (2+\delta)\omega^*  \Big) \\
		\overset{3}{\le}& \exp\Big(-O((1 + \delta)\omega^*)^2\Big) \nonumber\\
		\le& \exp\Big(-O\Big((1 + \delta)^2k\log\frac{m}{k}\Big)\Big)		\label{fact:upper_bound2}
	\end{align}

	Step $2$ can be shown using arguments identical to the ones used in steps \ref{pf:step1}-\ref{pf:step2} in the proof of Lemma \ref{thm:sparse_approx}, and step $3$ follows from gaussian concentration. {Now we pr}oceed to upper bound the quantity $\alpha$ (as in Proposition \ref{prop:lowerBound_integ} ):
	
	\begin{eqnarray}
	\int_{t_0}^{\infty} \Pr(Z \ge t) dt &=& \int_{0}^{\infty}  \Pr_{{\bf g} \sim N(0,1)^{n},\phi}\Big( \max_{{\bf x} \in \widehat{\Sp}_{\ell}} {\bf g}^\top\phi({\bf x}) \ge 4\omega^* + t\Big) dt 		\nonumber\\
	&\overset{1}{=}&\int_{0}^{\infty} 2(2 + \delta)\omega^* \Pr_{{\bf g} \sim N(0,1)^{n},\phi}\Big( \max_{{\bf x} \in \widehat{\Sp}_{\ell}} {\bf g}^\top\phi({\bf x}) \ge (2+\delta)^2\omega^* \Big) d\delta 		\nonumber\\
	&\overset{2}{\le}& \int_{0}^{\infty} (2 + \delta)\omega^* \exp\Big(-O((1 + \delta)^2k\log\frac{m}{k})\Big) d\delta  \nonumber\\
	&\le& \int_{0}^{\infty} \exp\Big(-O((1 + \delta)^2k\log\frac{m}{k})\Big) d\delta  = \alpha_{k,m}         \label{fact:upper_bound3}
	\end{eqnarray}
	
	where step $1$ is a change of variables argument where we set $t = (2 + \delta)^2\omega^* - 4\omega^*$, and the second step follows from by combining upper bounds from \ref{fact:upper_bound1} and \ref{fact:upper_bound2}.
	
	Plugging in the upper bounds from Equations \ref{fact:upper_bound1},\ref{fact:upper_bound2} and \ref{fact:upper_bound3}, we get
	
	\begin{equation*}
		\E[Z | \mathcal{E}] \ge \E[Z] - \alpha_{k,m} - 4(\omega^*)\eta_{k,m} =  \E[Z] - \gamma_{k,m}
	\end{equation*}
	
	where $\gamma_{k,m}$ decays exponentially in $k$.
	
\end{proof}

\section{Proof of Lemma \ref{lem:l_infty}}		\label{sec:l_infty}

	The proof uses the observation that for ${\bf R} \sim \mathbbm{O}_d$, for any $i \in [d]$ marginal distribution of the vector ${\bf r}_i$ is that of a uniformly random vector drawn from $\mathcal{S}^{d-1}$ (c.f., Exercise $5$ \cite{Vershynin11}). Therefore, it suffices to show large probability upper bounds for a single random vector ${\bf r} \sim \mathcal{S}^{d-1}$, which can then be used to complete the proof by a union bound argument.

	\vspace{0.5cm}.
	
	\noindent{\it Concentration for random unit vectors}: 	Let $C > 0$ be a constant which is fixed later. The first step follows from replacing the unit vector by a normalized gaussian vector:
	\begin{eqnarray*}
		\Pr_{{\bf r} \sim \mathcal{S}^{d-1}}\Bigg[ \max_{{\bf x} \in S} {\bf r}^\top{\bf x} \ge C\omega(S)/\sqrt{d} \Bigg] &=& \Pr_{{\bf g} \sim N(0,1)^{d}}\Bigg[ \max_{{\bf x} \in S} {\bf g}^\top{\bf x} \ge  C\omega(S)\frac{\|{\bf g}\|}{\sqrt{d}}  \Bigg] \\
		&\le& \Pr_{{\bf g} \sim N(0,1)^{d}}\Bigg[ {\bf g}^\top{\bf x} \ge  C\omega(S) \frac{\|{\bf g}\|}{\sqrt{d}} \Big| \|{\bf g}\| \ge \sqrt{d}/2\Bigg] + \Pr_{{\bf g} \sim N(0,1)^{d}}\Bigg[ \|{\bf g}\| \le \sqrt{d}/2\Bigg] \\
		&\le& 2\Pr_{{\bf g} \sim N(0,1)^{d}}\Bigg[{\bf g}^\top{\bf x} \ge  C\omega(S)/2 \Bigg] + \Pr_{{\bf g} \sim N(0,1)^{d}}\Bigg[ \|{\bf g}\| \le \sqrt{d}/2\Bigg] \\
		&\le& 4\max\bigg(\exp(-C^\prime\omega^2(S)), \exp(-C^\prime d)\bigg) 
	\end{eqnarray*}
	
	\noindent where the first term is upper bounded using Lemma \ref{lem:gconc}, and the second term is bounded using $\chi^2$-concentration.\\
	
	\noindent{\it Concentration for random rotations}: We now extend the above concentration bound to an expectation bound for random rotations. Let $\mathcal{E}$ denote the event for ${\bf R} \sim \mathbbm{O}_d$, there exists ${\bf x} \in S$ such that $\|{\bf R}{\bf x}\|_\infty \ge C\omega(S)/\sqrt{d}$.
	
	\begin{eqnarray*}
	\Pr_{{\bf R} \sim \mathbbm{O}_d} \Bigg[ \max_{{\bf x} \in S} \|{\bf R}{\bf x}\|_\infty > C\omega(S)/\sqrt{d}\Bigg] 
	&=& \Pr_{{\bf R} \sim \mathbbm{O}_d} \Bigg[  \max_{i \in [d]}\max_{{\bf x} \in S} {\bf r}^\top_i{\bf x} > C\omega(S)/\sqrt{d}\Bigg] \\
	&\le&  \sum_{i \in [d]}\Pr_{{\bf R} \sim \mathbbm{O}_d} \Bigg[  \max_{{\bf x} \in S} {\bf r}^\top_i{\bf x} > C\omega(S)/\sqrt{d}\Bigg] \\
	&\le& 4\max\bigg(\exp(-C^\prime\omega^2(S)), \exp(-C^\prime d)\bigg) \le \frac{1}{2}
	\end{eqnarray*}
	
	where the last step follows from the fact that $d >> \log d$ and by choice $C^\prime\omega(S) \ge 10\log d$, when $C$ is chosen to be large enough.

\section{Analysis for the Dimensionality Tester}

We state the definition of $\epsilon$-approximate rank of a matrix, as defined in \cite{ALSV13}:

\begin{definition}[Approximate Rank]			\label{defn:appx_rank}
Given a matrix ${\bf Y} \in \R^{m \times n}$ and an $\epsilon > 0$, the $\epsilon$-approximate rank of the matrix \emph{(denoted by ${\rm rank}_\epsilon({\bf Y})$)} is defined as follows:
\begin{equation}
	{\rm rank}_{\epsilon}({\bf Y}) = \min\bigg(\Big\{{\rm rank}(\hat{\bf Y}): \hat{\bf Y} \in \R^{m \times n}, \|{\bf Y} - \hat{\bf Y}\|_{\infty,\infty} \le \epsilon\Big\}\bigg)
\end{equation}

where $\|\cdot\|_{\infty,\infty}$ is norm defined as the largest absolute value of an entry in the matrix.
\end{definition}

We first prove Lemma \ref{lem:appx-rank-bound} which relates the approximate rank of a matrix in terms of the gaussian width, and use that to analyze the tester.

\begin{lemma}			\label{lem:appx-rank-bound}
	For a matrix ${\bf Y} \in \mathbbm{R}^{d \times n}$, where $\|{\bf y}_i\| = 1 \enskip \forall \enskip i \in [n]$,  the following holds:
	\begin{equation}
	{\rm rank}_{\epsilon}({\bf Y}) \le O\bigg(\frac{1}{\epsilon^2}\max\Big(\omega^2({\bf Y}),\log d\Big)\bigg)
	\end{equation}	
	for any $\epsilon \ge O(1/\sqrt{d})$.
\end{lemma}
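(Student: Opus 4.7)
The plan is to construct the low-rank approximation $\hat{\vec Y}$ by projecting the columns $\vec y_i$ to a low-dimensional space via a random Gaussian matrix and then ``lifting'' back, i.e., set $\hat{\vec Y} = \vec\Phi^\top \vec\Phi \vec Y$ for a suitable random $\vec\Phi \in \R^{d' \times d}$. Since this matrix factors through $\R^{d'}$, its rank is at most $d'$, so everything reduces to bounding the entry-wise error $\|\hat{\vec Y} - \vec Y\|_{\infty,\infty}$ in terms of $d'$.

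First, I would choose $d' = C \cdot \max(\omega^2(\vec Y), \log d)/\epsilon^2$ for a large enough constant $C$, and take $\vec\Phi \sim \frac{1}{\sqrt{d'}} N(0,1)^{d' \times d}$. The entry $(j,i)$ of $\hat{\vec Y}$ equals $\langle \vec\Phi \vec y_i, \vec\Phi \vec e_j\rangle$, and $Y_{ji} = \langle \vec y_i, \vec e_j\rangle$, so I would express the error via the polarization identity
\begin{equation}
2\langle \vec\Phi \vec y_i, \vec\Phi \vec e_j\rangle - 2\langle \vec y_i, \vec e_j\rangle = \bigl(\|\vec\Phi \vec y_i\|^2 - \|\vec y_i\|^2\bigr) + \bigl(\|\vec\Phi \vec e_j\|^2 - \|\vec e_j\|^2\bigr) - \bigl(\|\vec\Phi(\vec y_i - \vec e_j)\|^2 - \|\vec y_i - \vec e_j\|^2\bigr).
\end{equation}
Thus it suffices to show that $\vec\Phi$ is an $\epsilon$-isometry on the set $S = \{\vec 0\} \cup \{\vec y_i\}_i \cup \{\vec e_j\}_j \cup \{\vec y_i - \vec e_j\}_{i,j}$, since then each of the three parenthesized terms is bounded in absolute value by $\epsilon \|\cdot\|^2 \le 4\epsilon$, giving $|\hat Y_{ji} - Y_{ji}| \le O(\epsilon)$ uniformly.

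To invoke the Generalized Johnson--Lindenstrauss lemma (Theorem \ref{thm:jl_lemma}) on $S$, I need to bound $\omega(S)$. By splitting the supremum over the three types of elements,
\begin{align}
\omega(S) &= \E_{\vec g}\Bigl[\max\bigl(\max_i \langle \vec g, \vec y_i\rangle,\; \max_j g_j,\; \max_{i,j}(\langle \vec g, \vec y_i\rangle - g_j)\bigr)\Bigr] \\
&\leq \omega(\vec Y) + \E_{\vec g}\bigl[\max_j |g_j|\bigr] = O\bigl(\omega(\vec Y) + \sqrt{\log d}\bigr),
\end{align}
so $\omega^2(S) = O(\max(\omega^2(\vec Y), \log d))$, matching the choice of $d'$ up to an absolute constant. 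Applying Theorem \ref{thm:jl_lemma} then guarantees that with positive probability $\vec\Phi$ is an $\epsilon$-isometry on $S$. The hypothesis $\epsilon \ge \Omega(1/\sqrt{d})$ simply ensures $d' \le d$, which is needed for the approximation to be nontrivial.

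I expect the only mildly delicate step to be the polarization bookkeeping, ensuring that the constant in the $O(\epsilon)$ entry-wise bound can be absorbed into the $1/\epsilon^2$ scaling by rescaling $\epsilon$ by an absolute factor in the choice of $d'$. Apart from that, the proof is a direct JL-style argument, so no genuinely hard step is anticipated.
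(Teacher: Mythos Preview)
Your proposal is correct and follows essentially the same approach as the paper: both set $\hat{\vec Y}=\vec\Phi^\top\vec\Phi\vec Y$ for a Gaussian $\vec\Phi$, bound the gaussian width of the columns together with the standard basis vectors by $O(\omega(\vec Y)+\sqrt{\log d})$, invoke the generalized JL lemma, and deduce the entrywise $O(\epsilon)$ bound via polarization. The only cosmetic difference is that the paper applies JL to $Y\cup\{\vec e_j\}_j$ and reads off norm and pairwise-distance preservation directly, whereas you include $\{\vec 0\}$ and the differences $\{\vec y_i-\vec e_j\}$ explicitly in $S$; this is harmless, though you should note that Theorem~\ref{thm:jl_lemma} is stated for $S\subseteq\mathcal S^{n-1}$, so a one-line normalization is needed.
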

\begin{proof}
 Let $Y = \{{\bf y}_1,\ldots,{\bf y}_n\}$ be the set of columns from the matrix ${\bf Y}$. Let $Y_0 = Y \cup {\rm I}_d$ where ${\rm I}_d$ is the set of standard basis vectors $\{{\bf e}_i\}_{i \in [d]}$. It is known that gaussian width is subadditive, and therefore
\begin{equation}
\omega(Y_0) \le \omega(Y) + \omega({\rm I}_d) \le 2\max\Big(\omega(Y),2\sqrt{\log d}\Big)
\end{equation}

Let $d^\prime = \frac{16C}{\epsilon^2}\max\Big(\log d, \omega^2(Y)\Big)$ where $C$ is the constant given by the generalized JL-lemma and let ${\bf G} \sim \frac{1}{\sqrt{d^\prime}}N(0,1)^{d^\prime \times d}$. Then with high probability, ${\bf G}(Y_0)$ is $\epsilon$-isometric to $Y_0$. For every  $i \in [d]$ and $j \in [n]$, we observe that: 

\begin{itemize}
	\item[1.] $1 - \epsilon \ge \|{\bf G}{\bf e}_i\|^2,\|{\bf G}{\bf y}_j\|^2 \le 1 + \epsilon$
	\item[2.] $(1 - \epsilon)\|{\bf e}_i - {\bf y}_j\|^2 \le \|{\bf G}{\bf e}_i - {\bf G}{\bf y}_j\|^2 \le (1 + \epsilon)\|{\bf e}_i - {\bf y}_j\|^2$ which in turn implies that $|\langle {\bf G}{\bf e}_i ,{\bf G}{\bf y}_j \rangle - \langle {\bf e}_i,{\bf y}_j \rangle| \le O(\epsilon) $.      
\end{itemize}

Let ${\bf Y}^\prime = {\bf G}^\top{\bf G}{\bf Y}$. Since the above observation is true for any $i \in [d],j \in [n]$, it follows that ${\bf Y}^\prime$ is entry wise $O(\epsilon)$-close to ${\bf Y}$, and by construction ${\rm rank}({\bf Y}^\prime) \le d^\prime$. Hence, the claim follows.

\end{proof}

Using the above lemma, we now show completeness and soundness for the tester:

\begin{proof}[Proof of \cref{thm:test_dim}]
	Let $S$ denote the set $\{\vec{y}_1, \dots, \vec{y}_p\}$.
	The tester obtains $\hat{\omega}$ that approximates $\omega(S)$ to an additive error of $\sqrt{k}$ and accepts iff $\hat{\omega} \leq 2 \sqrt{k}$. By \cref{lem:estim},  the tester requires  $O(p \log \delta^{-1})$ linear queries to obtain $\hat{\omega}$.
	
	If $\dim(S) \leq k$, then by \cref{lem:width}, $\hat{\omega} \leq 2\sqrt{k}$ with probability at least $1-\delta$, so that the tester accepts with the same probability.
	
	If the tester accepts, then with probability at least $1-\delta$, $\omega(S) \leq 3\sqrt{k}$. Therefore, from Lemma \ref{lem:appx-rank-bound}, we have ${\rm rank}_\epsilon(Y) \le O(k/\epsilon^2)$ which completes the proof. 
\end{proof}

\section{On the relationship between RIP and Incoherence}\label{appendix:incoherence-rip}

Even though our results are stated in terms of dictionaries which satisfy RIP, they can be stated in terms of incoherence as well. This is because the incoherence{\footnote{Here incoherence is stated in dimension free terms i.e., $|\langle {\bf a}_i , {\bf a}_j \rangle | \le \mu $ for every $i \ne j$}} and RIP constants of the dictionary matrix are roughly equivalent. We formalize this observation in the following lemma:

\begin{proposition}\label{prop:ripinco}
	Let ${\bf A} \in \mathbbm{R}^{d \times m}$ be a matrix with $\|{\bf a}\|_i = 1$ for every $i \in [m]$. Then, 
	\begin{itemize}
		\item If ${\bf A}$ is $(2k,\zeta)$-RIP then it is $\zeta$-incoherent.
		\item If ${\bf A}$ is $\mu$-incoherent, then it is $(2k,4k\mu)$-RIP
	\end{itemize}
\end{proposition}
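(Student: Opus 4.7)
Both directions are essentially one-line computations that exploit the quadratic structure of $\|{\bf Ax}\|^2$ in ${\bf x}$, so I will just describe the setup and the algebra to run, leaving the numerical bookkeeping to be filled in.

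For the first direction (RIP $\Rightarrow$ incoherence), the plan is to apply the RIP hypothesis to a well-chosen family of $2$-sparse unit vectors that expose the off-diagonal Gram entries of ${\bf A}$. Concretely, for any pair of distinct indices $i\ne j$, consider the two $2$-sparse unit vectors ${\bf u}_\pm := ({\bf e}_i \pm {\bf e}_j)/\sqrt{2}$. These lie in $\Sp_{2k}^m$ for every $k\ge 1$, so RIP applies. Using $\|{\bf a}_i\|=\|{\bf a}_j\|=1$, direct expansion gives
\[
\|{\bf A}{\bf u}_\pm\|^2 \;=\; \tfrac12\bigl(\|{\bf a}_i\|^2+\|{\bf a}_j\|^2 \pm 2\langle {\bf a}_i,{\bf a}_j\rangle\bigr) \;=\; 1 \pm \langle {\bf a}_i,{\bf a}_j\rangle,
\]
so the RIP property forces $|\langle {\bf a}_i,{\bf a}_j\rangle|\le\zeta$ (up to a negligible discrepancy between the squared and linear forms of RIP already used interchangeably in the paper). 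Since this holds for every $i\ne j$, incoherence with parameter $\zeta$ follows.

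For the second direction (incoherence $\Rightarrow$ RIP), the plan is a standard Gershgorin-type bound on the Gram matrix restricted to the support. Let ${\bf x}$ be a unit vector with support $S$, $|S|\le 2k$. Decomposing ${\bf A}^\top {\bf A} = {\bf I} + {\bf E}$, where ${\bf E}$ has zero diagonal and off-diagonal entries $\langle {\bf a}_i,{\bf a}_j\rangle$, expand
\[
\|{\bf Ax}\|^2 - \|{\bf x}\|^2 \;=\; \sum_{\substack{i,j\in S\\ i\ne j}} x_i\, x_j\, \langle {\bf a}_i,{\bf a}_j\rangle.
\]
Bounding each $|\langle {\bf a}_i,{\bf a}_j\rangle|$ by $\mu$ and applying Cauchy--Schwarz via $\sum_{i\ne j\in S}|x_i x_j| \le \|{\bf x}\|_1^2 - \|{\bf x}\|^2 \le (|S|-1)\|{\bf x}\|^2 \le 2k\|{\bf x}\|^2$ gives $\bigl|\|{\bf Ax}\|^2-\|{\bf x}\|^2\bigr|\le 2k\mu\,\|{\bf x}\|^2$. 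Taking square roots via elementary inequalities like $1-t \le \sqrt{1-t}$ and $\sqrt{1+t}\le 1+t$ for $t\in[0,1]$ yields the non-squared RIP bound with distortion at most $4k\mu$; the factor of $4$ (instead of the tighter $2k\mu$) simply absorbs the loss from this square-root conversion.

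I do not expect a real obstacle. The only bookkeeping issue is the slight inconsistency between the squared and non-squared forms of RIP that appear in the paper, which is precisely why the stated constant $4k\mu$ has some slack built in; keeping the computation in terms of $\|{\bf Ax}\|^2$ throughout and converting to the linear form only at the very end keeps the algebra painless.
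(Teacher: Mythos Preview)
Your proposal is correct and essentially matches the paper's proof: for RIP $\Rightarrow$ incoherence the paper also plugs in the $2$-sparse vector ${\bf e}_i-{\bf e}_j$ to extract $\langle {\bf a}_i,{\bf a}_j\rangle$, and for incoherence $\Rightarrow$ RIP the paper bounds the eigenvalues of the restricted Gram matrix ${\bf A}_S^\top{\bf A}_S$ via Gershgorin's disk theorem, which is exactly your elementary off-diagonal sum bound stated as a named theorem. The only differences are cosmetic (you normalize the test vectors and do the Gershgorin computation by hand).
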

\begin{proof}
	Suppose ${\bf A}$ is $(2k,\zeta)$-RIP. Then for any $i,j \in [m]$ 
	
	\begin{equation}
	|\langle {\bf a}_i, {\bf a}_j \rangle | = 1 - \frac{\|{\bf a}_i - {\bf a}_j\|^2}{2} \overset{1}{\in} 1 - (1 \pm \zeta) = \pm \zeta
	\end{equation}	
	
	\noindent where $1$ follows using the RIP guarantee. On the other hand, let ${\bf A}$ be $\mu$-incoherent. Then for any $S \subset [m]$ of size $2k$ let ${\bf M} = {\bf A}_S^\top{\bf A}_S$ where ${\bf A}_S$ is the submatrix induced by columns in $S$. Then we observe that $M_{ii} = \|{\bf a}_i\|^2 = 1$ for every $i \in [2k]$ and off-diagonal entries satisfy $|M_{ij}| \le \mu $. Therefore, using the Gershgorin's disk theorem $\lambda({\bf M}) \in [1 \pm 2\mu k]$. Therefore for every ${\bf x}$ supported on $S$ we have $\|{\bf A}{\bf x}\|^2 \in ( 1 \pm 2\mu k)^2\|{\bf x}\|^2 \in ( 1 \pm 4\mu k)\|{\bf x}\|^2$. Since this is true for any arbitrary $2k$-sized subset $S$, the result follows.
	
\end{proof}

Note that quantitatively, incoherence is a stronger property than RIP since $\mu$-incoherence implies $(2k,4k\mu)$-RIP but $(2k,4k\mu)$-RIP only implies $4k\mu$-incoherence. Naturally, Theorem \ref{thm:test_rip} can be restated in terms of incoherent linear transformations as well.

\end{document}